\documentclass[%
 reprint,
 superscriptaddress,
 amsmath, 
 amssymb,
 amsthm,
 aps,
 pra,
]{revtex4-2}
\bibliographystyle{apsrev4-2}

\usepackage{graphicx}
\usepackage{dcolumn}
\usepackage{bm}

\usepackage{comment}

\usepackage[english]{babel}
\makeatletter
\let\ORIbbl@fixname\bbl@fixname
\def\bbl@fixname#1{%
  \@ifundefined{languagealias@\expandafter\string#1}
    {\ORIbbl@fixname#1}
    {\edef\languagename{\@nameuse{languagealias@#1}}}%
}
\newcommand{\definelanguagealias}[2]{%
  \@namedef{languagealias@#1}{#2}%
}
\makeatother
\definelanguagealias{en}{english}
\definelanguagealias{eng}{english}

\usepackage{enumitem} 
\usepackage{braket} 
\usepackage{mathtools} 
\usepackage[dvipsnames]{xcolor}
\usepackage{comment} 
\usepackage{physics} 

\usepackage[font=small, justification=Justified]{caption}
\usepackage{subcaption}


\newcommand{\mbeq}{\overset{!}{=}}

\usepackage{amsthm}
\newtheorem{theorem}{Theorem}[section]

\newtheorem{proposition}[theorem]{Proposition}
\newtheorem*{remark}{Remark}
\theoremstyle{definition}
\newtheorem{definition}{Definition}[section]

\definecolor{S_Blue}{RGB}{0,135,252}
\definecolor{S_Red}{RGB}{214,13,63}
\definecolor{Blue}{RGB}{47,89,151}
\definecolor{S_Grey}{RGB}{150,150,158}
\definecolor{S_Yel}{RGB}{255,204,0}
\definecolor{S_Green}{RGB}{102,204,0}
\definecolor{S_Brown}{RGB}{154,41,41}

\usepackage{mathrsfs} 

\usepackage{stmaryrd} 

\usepackage{pgffor}
\usepackage{ifthen}
\usepackage{placeins}
\usepackage[unicode=true]{hyperref} 


\begin{document}

\preprint{APS/123-QED}

\title{
Non-classicality at equilibrium and efficient predictions under non-commuting charges 
}

\author{Lodovico Scarpa}
 \email{lodovico.scarpa@physics.ox.ac.uk}
 \affiliation{%
 Clarendon Laboratory, University of Oxford, Parks Road, Oxford OX1 3PU, UK
 }%
 \affiliation{%
 Quantum and Condensed Matter Physics Group (T-4), Theoretical Division, Los Alamos National Laboratory, Los Alamos, New Mexico 87545, USA
 }

 \author{Nishan Ranabhat}
 \affiliation{
 Department of Physics, University of Maryland, Baltimore County, Baltimore, Maryland 21250, USA
 }
  \affiliation{Department of Physics and Astronomy, University of Waterloo, Ontario, N2L 3G1, Canada}

\author{Amit Te'eni}
\affiliation{Faculty of Engineering and the Institute of Nanotechnology and Advanced Materials, Bar-Ilan University, Ramat Gan 5290002, Israel}

 \author{Abdulla Alhajri}
  \affiliation{%
 Clarendon Laboratory, University of Oxford, Parks Road, Oxford OX1 3PU, UK
 }%
 \affiliation{%
 Quantum Research Centre, Technology Innovation Institute, 9639 Abu Dhabi, United Arab Emirates
 }%

 \author{Vlatko Vedral}%
 \affiliation{%
 Clarendon Laboratory, University of Oxford, Parks Road, Oxford OX1 3PU, UK
 }

  \author{Fabio Anza}
 \email{fanza@umbc.edu}
  \affiliation{
 Department of Physics, Cybersecurity Institute and Quantum Science Institute, University of Maryland, Baltimore County, Baltimore, Maryland 21250, USA
 }

 \author{Luis Pedro Garc\'ia-Pintos}%
 \email{lpgp@lanl.gov}
 \affiliation{Quantum and Condensed Matter Physics Group (T-4), Theoretical Division, Los Alamos National Laboratory, Los Alamos, New Mexico 87545, USA
 }

\date{\today}

\begin{abstract}
\noindent A quantum thermodynamic system can conserve non-commuting observables, but the consequences of this phenomenon on relaxation are still not fully understood. We investigate this problem by leveraging an observable-dependent approach to equilibration and thermalization in isolated quantum systems. We extend such approach to scenarios with non-commuting charges, and show that it can accurately estimate the equilibrium distribution of coarse observables without access to the energy eigenvalues and eigenvectors. Our predictions do not require weak coupling and are not restricted to local observables, thus providing an advantage over the non-Abelian thermal state. Within this approach,  weak values and quasiprobability distributions emerge naturally and play a crucial role in characterizing the equilibrium distributions of observables. We show and numerically confirm that, due to charges' non-commutativity, these weak values can be anomalous even at equilibrium, which has been proven to be a proxy for non-classicality. Our work thus uncovers a novel connection between the relaxation of observables under non-commuting charges, weak values, and Kirkwood-Dirac quasiprobability distributions.
\end{abstract}

\maketitle

Conserved quantities, known as charges, can drastically influence the equilibrium properties of a classical system, by leading to charge-dependent generalized Gibbs states~\cite{huang_statistical_1987}. In quantum systems, the influence of conserved quantities is even richer in situations where the charges do not commute with each other~\cite{majidy_noncommuting_2023, hinds_mingo_quantum_2018, yunger_halpern_microcanonical_2016, halpern_beyond_2018, yunger_halpern_noncommuting_2020, kranzl_experimental_2023, noh_eigenstate_2023}. 
A problem of active research is understanding the implications that such non-commuting charges, corresponding to the presence of non-Abelian symmetries, have on equilibration and thermalization processes of quantum systems~\cite{majidy_noncommuting_2023, hinds_mingo_quantum_2018, yunger_halpern_microcanonical_2016, halpern_beyond_2018, yunger_halpern_noncommuting_2020, kranzl_experimental_2023, noh_eigenstate_2023, popescu_quantum_2018, popescu_reference_2020, khanian_resource_2023, lostaglio_thermodynamic_2017, guryanova_thermodynamics_2016}. 

On the one hand, certain results suggest that the presence of non-commuting charges may hinder equilibration and thermalization. The charges' non-commutation implies the Hamiltonian is degenerate~\cite{majidy_noncommuting_2023, yunger_halpern_noncommuting_2020}, which could influence the extent to which isolated quantum systems equilibrate~\cite{reimann_foundation_2008}. Moreover, microcanonical subspaces might not exist~\cite{yunger_halpern_microcanonical_2016} and the Eigenstate Thermalization Hypothesis (ETH) needs modifications~\cite{murthy_non-abelian_2023, lasek_numerical_2024, patil_eigenstate_2025, noh_eigenstate_2023}. Finally, the thermodynamic entropy production rate decreases due to non-commuting charges~\cite{manzano_non-abelian_2022, shahidani_thermodynamic_2022, upadhyaya_non-abelian_2024}.
On the other hand, other results suggest non-commuting charges may enable thermalization in certain scenarios; for instance, by increasing average entanglement entropy~\cite{majidy_non-abelian_2023}, destabilizing many-body localization~\cite{potter_symmetry_2016, protopopov_effect_2017}, or removing dynamical symmetries~\cite{majidy_noncommuting_2024}. We aim to shed light on the role of non-commuting charges in the equilibration and thermalization of isolated quantum systems.

We explore the role of non-commuting charges via the lens of Observable Statistical Mechanics, an observable-specific approach to equilibration and thermalization~\cite{scarpa_observable_2025, anza_information-theoretic_2017, anza_eigenstate_2018}. 
This approach has given successful predictions in non-integrable~\cite{scarpa_observable_2025} and integrable~\cite{scarpa_observable_2025} systems, as well as systems that exhibit many-body localization~\cite{anza_pure_2018}. 
The core idea behind this approach is to recognize that observables play a significant role in equilibration and thermalization. Instead of considering a density matrix, it focuses on the probability distribution of outcomes of an observable and makes use of a maximum entropy principle à la Jaynes~\cite{jaynes_information_1957, jaynes_information_1957-1, jaynes_brandeis_1989}. 

While traditional approaches focused on the equilibration of density matrices of subsystems have been successful in predicting the behavior of local observables in many cases~\cite{gogolin_equilibration_2016}, they ignore non-local observables and fail to describe potential variability in behaviour of different observables. For instance, in many-body localized systems, the local density matrix is not thermal, yet some local observables thermalize while others do not~\cite{anza_pure_2018}. Further, even if two observables equilibrate, their relaxation time scales can be enormously different~\cite{goldstein_time_2013, malabarba_quantum_2014}. 
 Moreover, in state-based approaches, it can be hard to make predictions when the relevant ensembles are not the canonical (Gibbs) or the grand canonical. In fact, while these two latter ensembles only depend on subsystem information, namely, the local Hamiltonian and particle-number operator, the diagonal ensemble~\cite{scarpa_observable_2025, anza_new_2018}, microcanonical ensemble and generalized Gibbs ensemble of integrable systems~\cite{essler_quench_2016} (and others) require microscopic information about the energy to be constructed. In particular, the generalized Gibbs ensembles describing the equilibrium behavior of integrable systems are global states that require knowledge of an extensive number of quasilocal conserved quantities, which include the global Hamiltonian and are algebraically dependent on it~\cite{essler_quench_2016, sutherland_beautiful_2004}.

Ultimately, one of the main goals of the field is to predict physical properties at equilibrium. For instance, fully characterizing the physical conditions that lead to thermal-like behaviour is a major open question. The answer will likely involve the interplay of three quantities: a Hamiltonian, an initial state, and an observable~\cite{reimann_foundation_2008, garcia-pintos_equilibration_2017}. Hence, here we shift the focus from density matrices to observables' probability distributions.

As we will show, 
Observable Statistical Mechanics allows predicting the equilibrium distributions of large classes of observables without knowledge of the energy eigenvalues and eigenstates, even in the presence of non-commuting conserved charges. Importantly, the use of such a framework is justified for large families of observables, namely, observables with few outcomes (`coarse observables')~\cite{scarpa_observable_2025} and observables whose eigenbasis is unbiased with respect to the energy basis (`Hamiltonian Unbiased Observables')~\cite{anza_information-theoretic_2017}. 

Here, we generalize Observable Statistical Mechanics to accommodate for non-commuting conserved quantities. In doing so, we obtain three main results. Firstly, we prove that a Hamiltonian is degenerate if and only if it possesses a non-Abelian symmetry. In other words, it is equivalent to consider degenerate Hamiltonians and Hamiltonians with non-commuting charges. 
Secondly, we show that non-classical behavior emerges \textit{even at equilibrium} via quasiprobability distributions and weak values that characterize observable's equilibrium distributions. Crucially, such non-classicality does not emerge in a system where charges commute. Finally, we show that the framework allows estimating equilibrium distributions of observables more accurately than the non-Abelian thermal state (NATS), for the model and cases considered. Importantly, our predictions do not require detailed information about the energy or the charges, and they do not rely on the assumptions of weak coupling, large system size and locality of observables, in contrast to the NATS. \\

The paper is organized as follows. In Section~\ref{sec:setup}, we introduce the physical setup, the notation, and our first result, which shows that a Hamiltonian is degenerate if and only if its symmetry group is non-Abelian. 
In Section~\ref{sec:generalizing_obs_stat_mech}, we generalize the framework of Observable Statistical Mechanics to accommodate for the presence of non-commuting charges. We also justify the use of a maximum entropy principle for coarse observables in this context, and discuss the limits of validity of this approach. In Section~\ref{sec:interpretation_KDQ_WV}, we uncover a connection to weak values and  Kirkwood-Dirac quasiprobability distributions. This connection allows us to show that in the presence of non-commuting charges, non-classical behavior emerges even at equilibrium. Section~\ref{sec:numerical_results} contains numerical results testing the theory, which include witnessing non-classical effects. We discuss our findings and future work in Section~\ref{sec:discussion}.

\section{Setup and first result} \label{sec:setup}

\subsection{Evolution under a degenerate Hamiltonian}
Consider a normalised pure initial state $\ket{\psi(0)}$ living in a Hilbert space $\mathcal{H}$ of dimension $D := \operatorname{dim}\mathcal{H} < \infty$ and evolving under a degenerate Hamiltonian
\begin{equation}
    H := \sum_{n=1}^{n_E} E_n \Pi_n,
\end{equation}
where $n_E < D$ is the number of distinct eigenenergies and $\sum_n \Pi_n = \mathbb{I}$. The setup identifies a decomposition of the Hilbert space as $\mathcal{H} = \bigoplus_{n=1}^{n_E} \mathcal{H}_n$, where $\mathcal{H}_n$ is the degenerate eigensubspace corresponding to eigenvalue $E_n$. Each of these subspaces has dimension $d_n := \operatorname{dim}\mathcal{H}_n = \Tr(\Pi_n)$. Since the Hamiltonian is degenerate, it does not have a unique complete eigenbasis, and we have the freedom to choose any basis within each of its eigensubspaces. 

The probability of measuring the energy and obtaining outcome $E_n$ is given by $p_n \coloneqq p(E_n) := \langle \psi(0) | \Pi_n | \psi(0) \rangle$. We can write the initial state in the energy eigenbasis as $\ket{\psi(0)} = \sum_n \sqrt{p_n} \ket{\lambda_n}$, where $\{\ket{\lambda_n} := \frac{1}{\sqrt{p_n}} \Pi_n \ket{\psi(0)}\}$ are the normalised projections of the initial state onto the subspaces $\{\mathcal{H}_n\}$. 
Note that $H \ket{\lambda_n} = E_n \ket{\lambda_n}$, but $\{ \ket{\lambda_n} \}_{n=1}^{n_E}$ is not a complete basis. 

The system's density matrix at time $t$ is
\begin{equation}
    \rho(t) := |\psi(t) \rangle\langle \psi(t)| = \sum_{n,m=1}^{n_E}  e^{-i(E_m - E_n)t} \sqrt{p_n p_m} \ketbra{\lambda_m}{\lambda_n}.
\end{equation}
 Recalling that the indices $n$ and $m$ label \textit{distinct} energy eigenvalues, the infinite-time-averaged state is
\begin{equation}
    \rho_\infty \coloneqq \overline{\rho(t)}^\infty := \lim_{T\to\infty} \frac{1}{T} \int_0^T dt \, \rho(t) = \sum_{n=1}^{n_E} p_n \ketbra{\lambda_n}.
\end{equation}
Analogously to the non-degenerate case, this state coincides with the diagonal ensemble or totally dephased state, i.e., the state in which the coherences between different energy subspaces vanish. Given the coarse energy eigenprojectors $\{\Pi_n\}$, $\rho_{DE} := \sum_n \Pi_n \rho(t) \Pi_n = \rho_\infty$.
In the case of degenerate Hamiltonians, the equilibrium state $\rho_\infty = \rho_{DE}$ takes a block-diagonal form, as coherences within degenerate subspaces are preserved. Moreover, while in the non-degenerate case the information about the initial state is contained only in the coefficients $p_n$, in this case it is also contained in the states $\{ \ket{\lambda_n} \}_{n=1}^{n_E}$.

\subsection{Non-commuting charges and Hamiltonian degeneracies}\label{sec:NonCommQ}
We consider Hamiltonians with non-commuting conserved quantities (`charges') $\{ Q^a \}_{a}$, implying the presence of a non-Abelian symmetry.
 Such charges force the Hamiltonian to be degenerate~\cite{majidy_noncommuting_2023, yunger_halpern_noncommuting_2020}. This is because, for two non-commuting operators to share an eigenbasis with the Hamiltonian,  the energy eigenbasis must not be unique. However, one might wonder whether the reverse holds, i.e., does a degenerate Hamiltonian always possess non-commuting charges? Our first result answers this question in the affirmative, showing that the presence of non-commuting charges and Hamiltonian degeneracies are inextricably linked, both in finite and infinite-dimensional (separable) Hilbert spaces. More precisely:
\begin{theorem} \label{thm:non-Abelian_degenerate_Hamiltonian}
    Consider a complex finite-dimensional Hilbert space $\mathcal{H}$ and a Hamiltonian $H$ with symmetry group $\mathcal{G}(H) := \{ U\in \mathrm{U}(\mathcal{H}) : [U, H] = 0\}$, where $\mathrm{U}(\mathcal{H})$ is the group of unitary operators on $\mathcal{H}$. Then, $H$ is non-degenerate if and only if $\mathcal{G}(H)$ is Abelian. Moreover, this holds even if $\mathcal{H}$ is infinite-dimensional separable, with adapted definitions of commutation and degeneracy.
\end{theorem}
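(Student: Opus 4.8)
The plan is to prove the two implications separately, treating the less obvious direction by contraposition, and then to indicate the mild adaptations needed in the separable infinite-dimensional setting.

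\emph{Easy direction: $H$ non-degenerate $\Rightarrow$ $\mathcal{G}(H)$ Abelian.} Diagonalize $H=\sum_k E_k\ketbra{k}{k}$ with the $E_k$ pairwise distinct. Any $U$ with $[U,H]=0$ commutes with each spectral projection and hence leaves every eigenspace invariant; since these eigenspaces are one-dimensional, $U\ket{k}=e^{i\varphi_k}\ket{k}$. Thus every element of $\mathcal{G}(H)$ is diagonal in the (essentially unique) eigenbasis of $H$, and diagonal unitaries commute pairwise.

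\emph{Main direction, by contraposition: $H$ degenerate $\Rightarrow$ $\mathcal{G}(H)$ non-Abelian.} Choose an eigenvalue $E_n$ whose eigenspace $\mathcal{H}_n$ has dimension $d_n\ge 2$. Since $\mathrm{U}(d_n)$ is non-Abelian for $d_n\ge 2$, pick unitaries $u_1,u_2$ acting on $\mathcal{H}_n$ with $[u_1,u_2]\neq 0$ (for instance two ``Pauli-type'' unitaries supported on a two-dimensional subspace of $\mathcal{H}_n$) and set $U_i:=u_i\oplus\mathbb{I}_{\mathcal{H}_n^\perp}$. Each $U_i$ maps every eigenspace of $H$ to itself---it is the identity on $\bigoplus_{m\neq n}\mathcal{H}_m$ and a unitary on $\mathcal{H}_n$---so $[U_i,H]=0$, while $[U_1,U_2]=[u_1,u_2]\oplus 0\neq 0$. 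This ``embed a non-Abelian block inside a degenerate sector'' construction is the real content of the theorem, and it is essentially immediate once the eigenspace decomposition is available.

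\emph{Infinite-dimensional separable case.} I would first fix the adapted conventions: $[U,H]=0$ means $U$ commutes with all spectral projections of $H$ (equivalently with $e^{-iHt}$ for every $t$), and ``$H$ non-degenerate'' means $H$ has simple (multiplicity-free) spectrum, i.e.\ the commutant $\{H\}'$ is maximal Abelian. Then $\mathcal{G}(H)$ is exactly the unitary group of the von Neumann algebra $\{H\}'$; since every element of a von Neumann algebra is a linear combination of its unitaries, $\mathcal{G}(H)$ Abelian $\Rightarrow\{H\}'$ Abelian $\Rightarrow$ $H$ has simple spectrum (the standard equivalence from spectral-multiplicity theory), and the reverse implications are immediate. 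Conversely, if the multiplicity of $H$ exceeds one on some part of the spectrum, $\{H\}'$ contains a unital copy of $\mathcal{B}(\mathbb{C}^2)$ acting on a reducing subspace, and non-commuting unitaries there (extended by the identity) give non-commuting elements of $\mathcal{G}(H)$; in particular, when $H$ has a genuine eigenvalue with at least two-dimensional eigenspace the finite-dimensional construction above applies verbatim, since those finite-rank unitaries commute with every bounded function of $H$. I expect the only real subtlety in this part to be definitional---stating ``degeneracy'' correctly in the presence of continuous spectrum and invoking the known equivalence between multiplicity-free spectrum and abelianness of the commutant---rather than any substantial new computation.
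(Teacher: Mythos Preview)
Your proof is correct and, in the finite-dimensional case, essentially identical to the paper's: both directions use the block-diagonal description of $\mathcal{G}(H)$ with respect to $\mathcal{H}=\bigoplus_n\mathcal{H}_n$, embedding a copy of $\mathrm{U}(d_n)$ inside a degenerate block for the non-trivial implication. For the infinite-dimensional part, your argument and the paper's are the same mathematics in different packaging: you invoke the standard von Neumann-algebra equivalence ``$H$ has simple spectrum $\Leftrightarrow$ $\{H\}'$ is (maximal) abelian'' and the fact that unitaries span a von Neumann algebra, whereas the paper unwinds this explicitly via the direct-integral form of the spectral theorem, identifies $\mathcal{G}(H)$ with the decomposable unitaries (using Takesaki's characterization that decomposable $=$ commutant of the diagonal algebra), and then constructs the non-commuting unitaries by hand on a positive-measure set where $\dim\mathcal{H}_\lambda>1$. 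Your route is shorter but relies on citing the multiplicity-free $\Leftrightarrow$ abelian-commutant result as a black box; the paper's is more self-contained and makes the ``embed $\mathrm{U}(d)$ in a degenerate fiber'' construction just as concrete in the continuous-spectrum case as in the finite one.
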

\begin{proof}[Proof of the finite-dimensional case]
     A unitary commutes with $H$ if and only if it is block-diagonal with respect to the decomposition of the Hilbert space in terms of the energy eigensubspaces $\mathcal{H} = \bigoplus_{n=1}^{n_E} \mathcal{H}_n$. By definition, if $H$ is non-degenerate then every block $\mathcal{H}_n$ is one-dimensional; so the block-diagonal unitaries are actually a direct sum of scalars, hence they all commute. However, if $\exists \, d_n > 1$ then the set of block-diagonal unitaries contains a copy of the unitary group $\mathrm{U} \left( d_n \right)$, which is non-Abelian.
\end{proof}
\noindent A similar argument holds for the infinite-dimensional case, with modified versions of the decomposition $\mathcal{H} = \bigoplus_{n=1}^{n_E} \mathcal{H}_n$ and of (block-)diagonal operators; the details appear in Appendix~\ref{sec:app_deg_H_non-Abelian_symmetry}. This suggests that if a Hamiltonian becomes degenerate in the thermodynamic limit, then it has a corresponding non-Abelian symmetry emerging in the same limit. \\

Given a degenerate Hamiltonian, Theorem~\ref{thm:non-Abelian_degenerate_Hamiltonian} guarantees a continuous non-Abelian symmetry group and hence the presence of non-commuting charges. The latter are the elements of the Lie algebra that generate the Lie group $\mathcal{G}(H)$. Throughout the rest of the paper, when referring to ``non-commuting charges", we will specifically mean a set $\{Q^a\}_a$ of algebraically independent elements of the Lie algebra. For instance, consider the following Hamiltonian acting on a one-dimensional lattice of $N$ spins: $H = \sum_{a=x,y,z} \left\{ \sum_{i=1}^{N-1} J_i \sigma^a_i \sigma^a_{i+1} + \sum_{i=1}^{N-2} J_i \sigma^a_i \sigma^a_{i+2} \right\}$, where $i$ labels the lattice sites, $J_i$ are the interaction strength coefficients, and $\sigma^a_i$ is a Pauli operator acting non-trivially only on site $i$. This has a global SU(2) symmetry, in the sense that it conserves the mean total magnetizations along $x$, $y$ and $z$, namely, $Q^a := \frac{1}{N} \sum_{i=1}^N \sigma^a_i$ with $a = x, y, z$.  
We will be studying this model numerically in Sec.~\ref{sec:numerical_results}.

Given that $[H, Q^a] = 0 \; \forall a$, the $\{Q^a\}_a$ cannot mix different energy eigenspaces. This can be seen by considering a generic state $\ket{\psi_n} \in \mathcal{H}_n$ and noting that $H (Q^a \ket{\psi_n}) = Q^a (H \ket{\psi_n}) = E_n (Q^a \ket{\psi_n})$, so $Q^a \ket{\psi_n}$ lives in the same subspace $\mathcal{H}_n$. Indeed, we can write $Q^a = \sum_n \Pi_n Q^a \Pi_n =: \sum_n Q^a_n$ and thus $Q^a \ket{\psi_n} = Q^a_n \ket{\psi_n}$. Nevertheless, the charges act non-trivially within each subspace. 
In particular, the action of $Q^a$ on $\ket{\lambda_n}$ is generally non-trivial and hence $[Q^a, \ketbra{\lambda_n}] = \frac{1}{p_n} \Pi_n [Q^a, \rho(0)] \Pi_n \neq 0$. 

We are especially interested in cases where these charges play a thermodynamic role. This means they have the form $Q^a = Q^a_\mathcal{S} \otimes \mathbb{I}_\mathcal{E} + \mathbb{I}_\mathcal{S} \otimes Q^a_\mathcal{E}$, where $Q^a_\mathcal{S}$ and $Q^a_\mathcal{E}$ act respectively on a subsystem $\mathcal{S}$ and on the environment $\mathcal{E}$. $Q^a$ is globally conserved, and the local charges are exchanged between a subsystem and the environment~\cite{majidy_noncommuting_2023, yunger_halpern_how_2022}. This is indeed the case for the $\mathrm{SU}(2)$-symmetric model described above, which we will study numerically.

\subsection{Observables and initial states} \label{subsec:observables}
We focus on so-called coarse or highly degenerate observables, as these are a large and experimentally accessible class of observables. Thus, we consider observables $A$ with $n_A \ll D$ distinct outcomes $\{a_j\}_{j=1}^{n_A}$, each having degeneracy $d_j$. That is, 
\begin{equation}
    A := \sum_{j=1}^{n_A} a_j A_j := \sum_{j=1}^{n_A} a_j \sum_{s=1}^{d_j} \ketbra{j, s}, 
\end{equation}
where $A_j$ is the eigenprojector onto the subspace corresponding to eigenvalue $a_j$, and $\{ \ket{j,s} \}_{js}$ is a complete basis with the index $s$ labelling the degeneracies for each eigensubspace. We denote the probability distribution of each eigenstate by $\{p_{js} := |\braket{\psi(t)}{j,s}|^2\}_{j,s}$. 
Our main focus will be the probability distribution of the observable's eigenvalues, given by
\begin{gather}
    p_j(t) := \expval{A_j}{\psi(t)} \quad j = 1, \ldots, n_A.
\end{gather}
These distributions of course depend on the observable and so should have an index A; however, we omit this throughout the text for simplicity. 

We consider experimentally accessible initial states supported in an energy window which is macroscopically small, yet large enough to contain many energy eigenstates. That is, the energy of the system is $E := \expval{H}{\psi}$ up to some small uncertainty $\Delta E := \sqrt{\expval{H^2}{\psi} - E^2}$ such that $\frac{\Delta E}{E} \sim \frac{1}{\sqrt{N}} \xrightarrow{N \to \infty} 0$. Thus, we have a small energy window $\mathcal{I}_{mc} := \left[ E - \frac{\Delta E}{2}, E + \frac{\Delta E}{2} \right]$. 
We denote the charges' expectation values by $\{q^a := \expval{Q^a}{\psi}\}_a$ and their standard deviations by $\{\Delta q^a := \sqrt{\expval{(Q^a)^2}{\psi} - (q^a)^2}\}_a$. Following Ref.~\cite{yunger_halpern_noncommuting_2020}, we assume that the standard deviations $\Delta q^a$ scale at most as $\sqrt{N}$, which corresponds to assuming the existence of an approximate microcanonical subspace \cite{yunger_halpern_microcanonical_2016, yunger_halpern_noncommuting_2020}. This condition is for instance satisfied by all short-range correlated states \cite{yunger_halpern_noncommuting_2020}.

\subsection{Non-Abelian thermal state} \label{subsec:nats}
Consider the standard thermodynamic setup in which both energy and particle number are globally conserved while being exchanged locally. Under the usual assumptions of weak coupling, exponential density of states and large $N$, one can show the grand canonical ensemble is the subsystem's thermal state (see, e.g., Refs.~\cite{landau_statistical_2011, tolman_principles_1980}). An analogous thermal state has been derived for Hamiltonians which conserve non-commuting charges~\cite{jaynes_information_1957-1, jaynes_brandeis_1989, halpern_beyond_2018, yunger_halpern_microcanonical_2016, hinds_mingo_quantum_2018, lostaglio_thermodynamic_2017, guryanova_thermodynamics_2016, majidy_noncommuting_2023}. It is usually referred to as the non-Abelian thermal state (NATS) and it takes the form
\begin{equation} \label{eqn:local_NATS}
    \rho_\textrm{NATS} = \frac{e^{-\beta H_S - \sum_a \mu^a Q^a_S}}{Z},
\end{equation}
where $H_S$ and the $\{Q^a_S\}_a$ are respectively the subsystem's Hamiltonian and charges, $\beta$ is the inverse temperature, the $\{\mu^a\}_a$ are the charges' chemical potentials and $Z := \Tr(e^{-\beta H_S - \sum_a \mu^a Q^a_S})$ is the partition function.

For $\rho_\textrm{NATS}$ to be predictive for small subsystems, there must exist an approximate microcanonical subspace~\cite{yunger_halpern_microcanonical_2016, balian_equiprobability_1987}. Moreover, the aforementioned assumptions needed to derive the canonical and grand canonical ensembles must still hold~\cite{yunger_halpern_microcanonical_2016, yunger_halpern_noncommuting_2020}. Some numerical evidence has been provided in~\cite{yunger_halpern_noncommuting_2020} to support this form of the local thermal state of a subsystem in the presence of non-commuting charges. Moreover, the \textit{global} form of the NATS has been shown experimentally to give better predictions than the canonical and grand canonical ensembles \cite{kranzl_experimental_2023}. Hence, in Sec.~\ref{sec:numerical_results}, we compare our predictions to those obtained via the local NATS. We show that in some cases our estimates are more accurate. Moreover, our results do not assume weak coupling nor locality of observables, just their coarseness, thus allowing to make computationally easy predictions beyond the regime of applicability of the NATS.

\section{Observable Statistical Mechanics under non-commuting charges} \label{sec:generalizing_obs_stat_mech}
\subsection{The Maximum Observable Entropy Principle} \label{subsec:MOEP_constraints}

Observable Statistical Mechanics~\cite{anza_information-theoretic_2017, scarpa_observable_2025} is based on the Maximum Observable Entropy Principle~\cite{jaynes_information_1957, jaynes_information_1957-1, jaynes_brandeis_1989}. Such principle states that one's best guess for the probability distribution of the observable's eigenbasis $p_{js} := |\braket{\psi(t)}{j,s}|^2$ at equilibrium is given by the distribution that maximizes its Shannon entropy $S_A(\{p_{js}\}) := - \sum_{j,s} p_{js} \log p_{js}$ under a set of constraints.

In Ref.~\cite{scarpa_observable_2025}, the constraints were taken to be the state normalization $\Tr(\rho) = 1$ and the average energy $\Tr(\rho H) = E$. 
In principle, many more constraints (e.g., all moments of the Hamiltonian) should be included if one takes Jaynes' principle at face value~\cite{jaynes_information_1957, jaynes_information_1957-1, jaynes_brandeis_1989}.
However, in Ref.~\cite{scarpa_observable_2025} some of us argued that keeping only the first two moments is sufficient to make good predictions for coarse observables, which possess little information about the energy due to their high degeneracy. We go over such arguments next.

One can interpret measuring an observable at equilibrium as a communication channel where the energy eigenstates $\{|\lambda_n\rangle\}_n$ are the inputs, each sent with probability $p_n$, and the observable's eigenprojectors $\{A_j\}_j$ are the outputs~\cite{scarpa_observable_2025}. Consider the classical mutual information between the observable's and energy's distributions at equilibrium $I_{eq}(A, H) := S(H) - S_{eq}(H|A)$, where $S(H) \coloneqq S_H(\{p_n\}) := - \sum_n p_n \log p_n$ is the Shannon entropy of the energy distribution and $S(H|A) := - \sum_{n, j} p(n, j) \log \frac{p(n,j)}{p(j)}$ is the conditional entropy, with $p(n, j)$ and $p(j)$ respectively the joint distribution and the observable's one (see Sec.~\ref{sec_app:small_mutual_info_extensive_entropy} in the Appendices for the technical details, including how we define a joint distribution). The mutual information $I_{eq}(A, H)$ quantifies the information shared by the random variables $H$ and $A$. For a coarse observable $A$, $I_{eq}(A, H)$ is much smaller than the maximum value it can take, $S(H)$. For non-degenerate Hamiltonians, the latter is expected to be extensive for realistic initial states, i.e., $S(H) \sim k_E N$, yet still be compatible with a narrow energy distribution, that is, $k_E \ll 1$. Here we assume this remains true for Hamiltonians that are not excessively degenerate, i.e., with $n_E \sim D$ distinct energy levels. Hence, the measurement is not very informative and including just the first two moments as constraints is enough.

Moreover, in Ref.~\cite{scarpa_observable_2025} some of us argued that at equilibrium $S_A(\{p_{js}\}) \geq S_H(\{p_n\})$, thus $S_A(\{p_{js}\})$ is extensive if $S_H(\{p_n\})$ itself is. It was also shown that for highly degenerate observables one can find a bound that is stronger in some cases. In particular, for a local observable with support on $N_S$ lattice sites and constant degeneracy $\log d_j \sim N - N_S$, we have that at equilibrium $S_A(\{p_{js}\}) \gtrsim N - N_S$, independently of the initial state. These arguments back the use of a maximum entropy principle.

Since here we focus on coarse observables, the previous arguments hold, as detailed in Appendix~\ref{sec_app:small_mutual_info_extensive_entropy}. There, we also argue that for Hamiltonians with non-commuting charges that are sums of identical one-site operators, higher moments of the charges are not needed, just like for the energy. This is because also the classical mutual information between the observable and charges distributions at equilibrium is much smaller than its upper bound. Hence, analogously to Ref.~\cite{scarpa_observable_2025}, here we only include as constraints the expectation values of the energy and of the charges, besides state normalization.

\subsection{The generalized Equilibrium Equations and their solution} \label{subsec:generalized_EEs_solution}
Consider maximizing the Shannon entropy 
\begin{align}
S_A(\{p_{js}\}) := - \sum_{j,s} p_{js} \log p_{js}
\end{align}
of the observable's distribution under the constraints
\begin{equation}
\label{eq:constraints}
    \Tr(\rho)=1, \quad \Tr(\rho H)=E, \quad \{ \Tr(\rho Q^a) = q^a \}_a.
\end{equation}
We solve this optimization problem using Lagrange multipliers to obtain a set of equilibrium equations for the distribution $\{p_{js}\}$. In every sufficiently degenerate observable eigensubspace such that $d_j(d_j-1) \geq n_E + 1$ holds, we can use Theorem 1 in Ref.~\cite{anza_eigenstate_2018} (see also Appendix~\ref{sec:AGH_theorem}) to obtain a set of equilibrium equations for $\{p_j\}$.

The equilibrium equations take the form 
\begin{gather}
    \beta_A \Tr(\rho [A_j, H]) + \sum_a \mu^a_A \Tr(\rho [A_j, Q^a]) \mbeq 0, \label{eqn:EE1} \\
    - p_j \log \left( \frac{p_j}{d_j} \right) 
    \mbeq \lambda_N p_j + \beta_A R_j + \sum_a \mu^a_A R_j^{a}, \label{eqn:EE2}
\end{gather}
where $\lambda_N, \beta_A, \{\mu^a_A\}_a$ are the Lagrange multipliers corresponding to the constraints in Eq.~\eqref{eq:constraints}. 

We refer to Eqs.~\eqref{eqn:EE1} and~\eqref{eqn:EE2} as the First and Second Equilibrium Equations, respectively. While $\rho(t)$ and the corresponding $p_j(t)$ within can be time-dependent, the above conditions must hold at equilibrium for the entropy to be at its constrained maximum. 
In Eqs.~\eqref{eqn:EE1} and~\eqref{eqn:EE2}, we defined the inner products between operators
\begin{gather}
    R_j(t) :=  \Tr\left(\rho(t) \frac{\{A_j, H\}}{2}\right) = \operatorname{Cov}^s_\rho(A_j, H) + E \, p_j \\
    R_j^a(t) := \Tr\left(\rho(t) \frac{\{A_j, Q^a\}}{2} \right) = \operatorname{Cov}^s_\rho(A_j, Q^a) + q^a \, p_j,
\end{gather}
where $\{A, B\} := AB + BA$ is the anticommutator, and $\operatorname{Cov}^s_\rho(A, B) := \tfrac{1}{2}\Tr(\rho \{A, B\}) - \Tr(\rho A) \Tr(\rho B)$ is the symmetrized covariance between $A$ and $B$. $E = \expval{H}{\psi}$ and $q^a = \expval{Q^a}{\psi}$ are the average energy and charges of the system. The details of the derivation can be found in Appendix~\ref{sec_app:generalized_EEs}. 

Summing both sides of Eq.~\eqref{eqn:EE2} over $j$ gives
\begin{equation} \label{eqn:equilibrium_shannon_entropy}
    S_A \left( \left\{ \frac{p_j}{d_j} \right\} \right)= \lambda_N + \beta_A E + \sum_a \mu^a_A q^a.
\end{equation}
Thanks to the Gibbs' inequality~\cite{jaynes_brandeis_1989, cover_elements_2005}, the most general probability distribution for which Eq.~\eqref{eqn:equilibrium_shannon_entropy} holds is given by
\begin{equation} \label{eqn:general_solution_EEs}
    p_j^\textrm{est} = \frac{d_j \, e^{- \beta_A \varepsilon_j - \sum_a \mu_A^a q^a_j}}{\mathcal{Z}_A},
\end{equation} 
with $\mathcal{Z}_A \coloneqq e^{\lambda_N} \mbeq \sum_j d_j \, e^{- \beta_A \varepsilon_j - \sum_a \mu_A^a q^a_j}$. $\varepsilon_j$ and $\{q^a_j\}_a$ are quantities such that $\sum_j \varepsilon_j p_j^\textrm{est} = E$ and $\sum_j q^a_j p_j^\textrm{est} = Q^a \; \forall a$.
Substituting Eq.~\eqref{eqn:general_solution_EEs} into Eq.~\eqref{eqn:EE2}, we find that at equilibrium one must have 
\begin{equation} \label{eqn:conditions_linearity_Rj}
    R_j = \varepsilon_j p_j^\textrm{est} \qquad \text{and} \qquad R_j^a = q^a_j p_j^\textrm{est} \; \forall a
\end{equation}
which define $\varepsilon_j$ and $\{q^a_j\}_a$.
The Equilibrium Equations Eqs.~\eqref{eqn:EE1} and \eqref{eqn:EE2}, as well as the solution Eq.~\eqref{eqn:general_solution_EEs}, are generalizations of the results in Ref.~\cite{scarpa_observable_2025}, which can be retrieved by setting all chemical potentials $\{\mu^a_A\}_a$ to zero.

Equation~\eqref{eqn:general_solution_EEs} is one of the main results of this work. It suggests thermal-like behaviour for the probability distributions of observables, with $\beta_A$ and $\mu_A^a$ playing the role of effective observable-dependent inverse temperatures and chemical potentials, respectively. In the next section, we show that $\varepsilon_j$ and the $q_j^a$'s can be interpreted as the real parts of weak values of the Hamiltonian and the charges, respectively. These, in turn, are linked to quasiprobability distributions that cannot be interpreted classically.

\subsection{Interpreting the first equilibrium equation}
The first term in the First Equilibrium Equation, Eq.~\eqref{eqn:EE1}, is proportional to the time derivative of the probability distribution $\partial p_j / \partial t$. This simply reflects the fact that the probability distribution must be (approximately) constant at equilibrium. The other terms in Eq.~\eqref{eqn:EE1} can be interpreted by considering a symmetry-preserving rotation $e^{-iQ^a \theta}$. Since $[Q^a,H] = 0$, such a rotation commutes with the Hamiltonian and hence can only act non-trivially within each energy eigensubspace. By considering the rotation's action on the state $\rho(t)$ at any time $t$, i.e., $\rho(t, \theta) = e^{-iQ^a \theta} \rho(t) \, e^{iQ^a \theta}$, and differentiating with respect to $\theta$, we find a von Neumann-like equation $\Tr(A_j [Q^a, \rho]) = i \frac{\partial p_j}{\partial \theta}$. 
Thus, we can re-write Eq.~\eqref{eqn:EE1} as
\begin{equation}
   0 \mbeq \beta_A \frac{\partial p_j}{\partial t} + \sum_a \mu^a_A \frac{\partial p_j}{\partial \theta^a}. \label{eqn:EE1_params_derivatives}
\end{equation}
Each charge-related term can be understood as meaning that $p_j$ should not change much if we vary the support of the initial state (or equivalently the equilibrium state) within each energy eigensubspace. In other words, what truly affects the probability distribution should be the energy sectors on which it has support, not the specific details within each sector.

\subsection{Validity of the equilibrium prediction}
When is Eq.~\eqref{eqn:general_solution_EEs} a good prediction for the observable's distribution at equilibrium? 
We stress that, while the framework can also be applied to fine-grained observables, the derivation of Eq.~\eqref{eqn:general_solution_EEs} relies on the coarseness of the observables under study (i.e., that their eigenprojectors are of large rank). 
Given coarse observables, the following three conditions have to hold. 
First, as we are predicting the stationary value of the observable's distribution, the observable must equilibrate (on a time average sense~\cite{reimann_foundation_2008}). 
The First Equilibrium Equation, Eq.~\eqref{eqn:EE1}, linked to equilibration, must hold. 
Secondly, the observable's Shannon entropy must be close to its constrained maximum, thus justifying the use of a maximum entropy principle. Lastly, the (classical) mutual information between the observable and the energy, as well as between the observable and the charges, must be small. As we argued in Sec.~\ref{subsec:MOEP_constraints}, if an observable carries little information about the Hamiltonian and the charges, using only a few constraints in the entropy maximization is justified (see Appendix~\ref{sec_app:small_mutual_info_extensive_entropy} for a more detailed discussion).

\section{Non-classical behavior at equilibrium} \label{sec:interpretation_KDQ_WV}
Next, we show that the Equilibrium Equations and the distribution that maximizes the entropy can be interpreted in terms of quantum weak values~\cite{aharonov_how_1988, dressel_colloquium_2014} and their related Kirkwood-Dirac quasiprobability distributions (KDQs)~\cite{lostaglio_kirkwood-dirac_2023, arvidsson-shukur_properties_2024}. In this way, we provide novel operational meanings to weak values and KDQs in a  quantum thermodynamic context. 

Given a state $\rho$ (the preparation) and a state $\sigma$ (the postselection), the weak value of an observable $O$ is defined as
\begin{equation} \label{eqn:weakvalue}
    O_w \left(\rho, \sigma \right) \coloneqq  \frac{\Tr( \rho O \sigma)}{\Tr(\rho \sigma)}.
\end{equation}
Given an observation that leaves a system in the postselected state $\sigma$, $O_w \left(\rho, \sigma \right)$ is the best retrodiction to estimate what the value of the observable $O$ was \emph{before} the measurement~\cite{dressel_colloquium_2014}. Weak values can manifest in scenarios with postselection, or in conditioned systems which arise naturally in monitored quantum systems~\cite{gammelmark_past_2013}. For instance, the past output of a continuous quantum measurement is more accurately described by weak values than by estimates obtained from the prior states of a quantum system~\cite{garcia-pintos_past_2017}.

Unlike traditional expectation values, weak values can be complex and can lie outside of the spectrum of the observable $O$. When they do, they are said to be anomalous. Systems where anomalous weak values manifest are provably non-classical, in the sense that no non-contextual classical model can describe the system~\cite{pusey_anomalous_2014, kunjwal_anomalous_2019}.

Using Eq.~\eqref{eqn:weakvalue}, we can re-write the Equilibrium Equations in terms of the real and imaginary parts of the weak values of the Hamiltonian and of the charges, $E_w (\rho,A_j) \coloneqq \Tr(\rho H A_j)/\Tr(\rho A_j)$ and $Q_w^a (\rho,A_j) \coloneqq \Tr(\rho Q^a A_j)/\Tr(\rho A_j)$. These correspond to a postselection to state $A_j/d_j$.  Equations~\eqref{eqn:EE1} and~\eqref{eqn:EE2} become
\begin{align}
    &0 \mbeq \beta_A \Im[E_w(\rho,A_j)] + \sum_a \mu^a_A \Im[Q^a_w(\rho,A_j)], \label{eqn:EE1WV} \\
    &- \log \left( \frac{p_j}{d_j} \right) 
    \mbeq \lambda_N + \beta_A \Re[E_w(\rho,A_j)] \label{eqn:EE2WV} \\
    &\qquad \quad \quad \qquad \qquad  + \sum_a \mu^a_A \Re[Q^a_w(\rho,A_j)]. \nonumber 
\end{align}
This sets conditions which the real and imaginary parts of the weak values of the charges and of the Hamiltonian must satisfy for the Equilibrium Equations to hold.

Similarly, Eqs.~\eqref{eqn:conditions_linearity_Rj} and~\eqref{eqn:weakvalue} imply that
\begin{align}
  \varepsilon_j = \Re[E_w(\rho,A_j)],\qquad  q_j^a = \Re[Q_w^a(\rho,A_j)].
\end{align}
We leverage the interpretations of weak values to provide a physical meaning to the $\varepsilon_j$s and the $q_j^a$s: they are the best estimates of the system's energy and charges given the preparation $\rho$ and the postselection $A_j/d_j$~\cite{dressel_weak_2015}. These are refinements to the estimates obtained solely from the system's equilibrium state $\rho$.

Such quantities are central to the equilibrium distribution obtained from maximizing the constrained entropy. From Eq.~\eqref{eqn:general_solution_EEs}, we find it to be
\begin{equation} \label{eqn:general_solution_EEsWV}
    p_j^\textrm{est} = \frac{d_j \, e^{- \beta_A \Re[E_w(\rho,A_j)] - \sum_a \mu_A^a \Re[Q_w^a(\rho,A_j)]}}{\mathcal{Z}_A}.
\end{equation} 
This uncovers widespread scenarios where (real parts of) weak values naturally arise. We have found that the equilibrium probability distributions of observables are characterized by the real parts of the weak values of Hamiltonian and charges. Further, we will numerically show that, under non-commuting conserved quantities, the charges' weak values can be anomalous.

A possible solution to Eq.~\eqref{eqn:EE1WV}, which is independent of the specific values of the Lagrange multipliers and eigenvalues, corresponds to all imaginary parts of the weak values being null at equilibrium.
However, if the $\Im[Q_w^a(\rho,A_j)] \neq 0$, which we will find to be the case in the simulated examples, then Eq.~\eqref{eqn:EE1WV} sets a non-trivial constraint between the charges' weak values and their corresponding chemical potentials.
For instance, consider a system with two non-commuting charges $Q^a$ and $Q^b$, and their associated  effective chemical potentials $\mu_A^a$ and $\mu_A^b$. Since at equilibrium $[\rho,H] = 0$, if the charges' imaginary parts of the weak values are not zero, then $\mu_A^a/\mu_A^b = - \Im[Q_w^b]/\Im[Q_w^a]$ holds.

Weak values are closely related to Kirkwood-Dirac quasiprobability distributions (KDQs). KDQs represent (generalized) joint distributions for two operators but, just like other quantum distributions such as the Wigner function,  can violate the first Kolmogorov axiom. This is a consequence of the non-commutativity of the bases and the state. In particular, the KDQs can take negative and non-real values, in which case they are said to be non-classical~\cite{lostaglio_kirkwood-dirac_2023,arvidsson-shukur_properties_2024}.

Consider the spectral decomposition of the charges $Q^a = \sum_\mu q^a_\mu Q^a_\mu$, where $\{Q^a_\mu\}_\mu$ are a charge's coarse-grained eigenprojectors. The KDQs on the state $\rho$ for the observable's eigenprojector $A_j$ and the energy eigenprojector $\Pi_n$, as well as for $A_j$ and the charge eigenprojector $Q^a_\mu$, are defined by $K_{j,n}(\rho) := \Tr( A_j \rho \Pi_n)$ and $K_{j,\mu}(\rho) := \Tr(A_j \rho Q^a_\mu)$, respectively. The weak values can be obtained from these KDQs by taking the expectation value $\sum_n E_n K_{j,n}(\rho)$ and dividing by the marginal $p_j = \sum_n K_{j,n}(\rho)$ (and analogously for the charges' KDQs). At equilibrium, i.e., on the $\rho_\infty$, $K_{j,n}(\rho_\infty) = p_n \sum_s |\braket{j,s}{\lambda_n}|^2$ is classical. However, in general $[Q^a_\mu, \rho_\infty] \neq 0$, so the KDQ for the non-commuting charges can be non-classical even at equilibrium. We stress that, if instead we had a non-degenerate Hamiltonian, the KDQs $K_{j,\mu}(\rho)$ corresponding to \textit{any} conserved quantity would always be classical at equilibrium.

\section{Numerical results} 
\label{sec:numerical_results}

\subsection{Model and methods} \label{subsec:numerics_model_methods}
To test the analytical predictions, we study numerically the one-dimensional spin-1/2 non-integrable XXX model, described by the Hamiltonian
\begin{equation} \label{eqn:hamiltonian}
    H = \sum_{a=x,y,z} \left\{ \sum_{i=1}^{N-1} J_i \sigma^a_i \sigma^a_{i+1} + \sum_{i=1}^{N-2} J_i \sigma^a_i \sigma^a_{i+2} \right\},
\end{equation}
where $i$ is an index that runs over the $N$ lattice sites, and $\sigma^a_i$ ($a=x,y,z$) represents the Pauli operator with Pauli matrix $\sigma^a$ acting on the lattice site $i$, i.e.
\begin{equation} \label{eqn:local_pauli}
 \sigma^a_i \coloneqq \mathbb{I}^{\otimes (i-1)} \otimes \sigma^a \otimes \mathbb{I}^{\otimes(N-i)}.
\end{equation}
$H$ has SU(2)-symmetry, i.e., it conserves the mean total magnetizations along $x$, $y$ and $z$, defined as 
\begin{equation} \label{eqn:total_mag_def}
    Q^a := \frac{1}{N} \sum_{i=1}^{N} \sigma^a_i, \quad a = x, y, z .
\end{equation}
These charges are conserved globally but transported locally \cite{yunger_halpern_how_2022}, and they do not commute with each other: $[Q^a, Q^b] \neq 0 \quad \forall \, a \neq b$.
Note that, following Ref.~\cite{lasek_numerical_2024}, we break all other symmetries to focus on SU(2) only. Thus, we use open boundary conditions to break translation invariance; we include the next-nearest neighbour interaction term to break integrability; and add a small one-site perturbation of strength $\epsilon = 0.3$ to the interaction coefficient $J_i = 1 + \epsilon \, \delta_{i,3}$ to break also spatial-inversion symmetry. 

We consider a one-parameter family of initial states given by
\begin{equation} \label{eqn:psi_theta_m}
    \ket{\psi_0(\theta)} = R_y(\theta) \otimes R_y(-\theta) \otimes \ldots \otimes R_y(\theta) \otimes R_y(-\theta) \ket{01\ldots01},
\end{equation}
where $R_y(\theta) = \exp(-iY\theta/2)$ is the rotation operator along the $y$ axis, and the rotations alternate between the angles $\theta$ and $- \theta$. Such rotation is chosen to be SU(2)-breaking, so that we can explore different average energy subspaces. Our numerics focus on  five points, $\theta = \{0, \frac{\pi}{16}, \frac{\pi}{8}, \frac{3\pi}{16}, \frac{\pi}{4}\}$, avoiding $\theta = \pi/2$ since it is a ground state of the Hamiltonian. 

We employ the two-site time-dependent variational principle (TDVP) ~\cite{haegeman_time-dependent_2011,haegeman_unifying_2016,collura_tensor_2024} to simulate the state's evolution, $|\psi_t\rangle = e^{-iH dt}|\psi_0 \rangle$. We simulate three different system sizes, $N=12, 14, 16$; however, all the plots in the main text report the results for $N=16$ unless otherwise stated. The simulation is performed with a finite time step of $dt = 0.02$ and final time $t_f = 40.02$. The truncation error is kept below $\sim 10^{-8}$ per time step throughout the evolution.

In the numerical study, we consider the SU(2)-breaking one-body observables $\sigma^a_i$ $(a = x, y, z)$ of the form given in Eq.~\eqref{eqn:local_pauli}, and two-body observables of the form
\begin{equation} \label{eqn:two-body_obs}
 \sigma^a_i \sigma^a_{i+1} = \mathbb{I}^{\otimes (i-1)} \otimes \sigma^a \otimes \sigma^{a} \otimes \mathbb{I}^{\otimes(N-1-i)}.
\end{equation}
Their coarse-grained eigenprojectors are respectively
\begin{gather} \label{eqn:coarse-grained_projectors}
    A_j^{a,i} = \frac{\mathbb{I} + (-1)^{j} \sigma^a_i}{2} \\
    A_{jk}^{a,i} = A_j^{a,i} A_k^{a,i+1},
\end{gather}
with $j,k \in \{0,1\}$. The two-body observables have four distinct eigenvalues, $\{00,01,10,11\}$. We assume open boundary conditions, and consider the observables with support on the central lattice site $i=N/2$ (for even $N$) and, in the case of two-body observables, also on the subsequent one. 
All extensive quantities at equilibrium are rescaled by $1/N$ to ensure we can compare them among different system sizes.

\subsection{Anomalous weak values} \label{subsec:numerics_anomalous_weak_values}
 In Section~\ref{sec:interpretation_KDQ_WV}, we concluded that for degenerate Hamiltonians it may be possible to witness anomalous (or `non-classical') behavior of the weak values and their corresponding KDQs even at equilibrium. 
 Such anomalies can emerge in two ways: the imaginary parts of the weak values can be non-zero or their real parts can take values outside of the range of the observable's spectrum~\cite{arvidsson-shukur_properties_2024}. 
 
 As the Hamiltonian commutes with the equilibrium state, but the charges in general do not, we study weak values and KDQs corresponding to an observable $A$ and charge $Q^a$ such that $[A, Q^a] \neq 0$. To consider equilibrium values, we compute the time-averaged quantities $\overline{R_j^a(t)} \approx \Re[\Tr(\rho_\infty Q^a A_j)]$, $\overline{\Im[\Tr(\rho(t) Q^a A_j)]} \approx \Im[\Tr(\rho_\infty Q^a A_j)]$, and $\overline{p_j(t)} \approx \Tr(\rho_\infty A_j)$, where the overline denotes the finite-time average. The corresponding weak value parts are $q_j^a = \Re[Q^a_w (\rho_\infty, A_j)] \approx \overline{R_j^a(t)}/\overline{p_j(t)}$ and $\Im[Q^a_w (\rho_\infty, A_j)] \approx \overline{\Im[\Tr(\rho(t) Q^a A_j)]}/\overline{p_j(t)}$.
 
We find that $\Im[Q^a_w (\rho_\infty, A_j)]$ is non-zero for some combinations of charges and observable eigenprojectors, thus witnessing non-classical behavior even at equilibrium. The relevant combinations involve the charge $Q^z$ and a $y$-observable, or viceversa the charge $Q^y$ and a $z$-observable. This is because the commutator of these two quantities is related to $x$-observables, and for $\theta > 0$ the expectation value $\langle Q^x \rangle$, and thus the chemical potential $\mu^x$, is non-zero. An example of this anomalous weak value is given in Fig.~\ref{fig:Im_WV_EE1_all_m}, where we plot the imaginary part of the weak value of the charge $Q^z$ at equilibrium $\Im[Q^z_w (\rho_\infty, A_{00}^{y, N/2})]$ against the initial state, for one of the eigenprojectors of the observable $\sigma^y_\frac{N}{2} \sigma^y_{\frac{N}{2}+1}$.

\begin{figure}
    \centering
    \includegraphics[width=\linewidth]{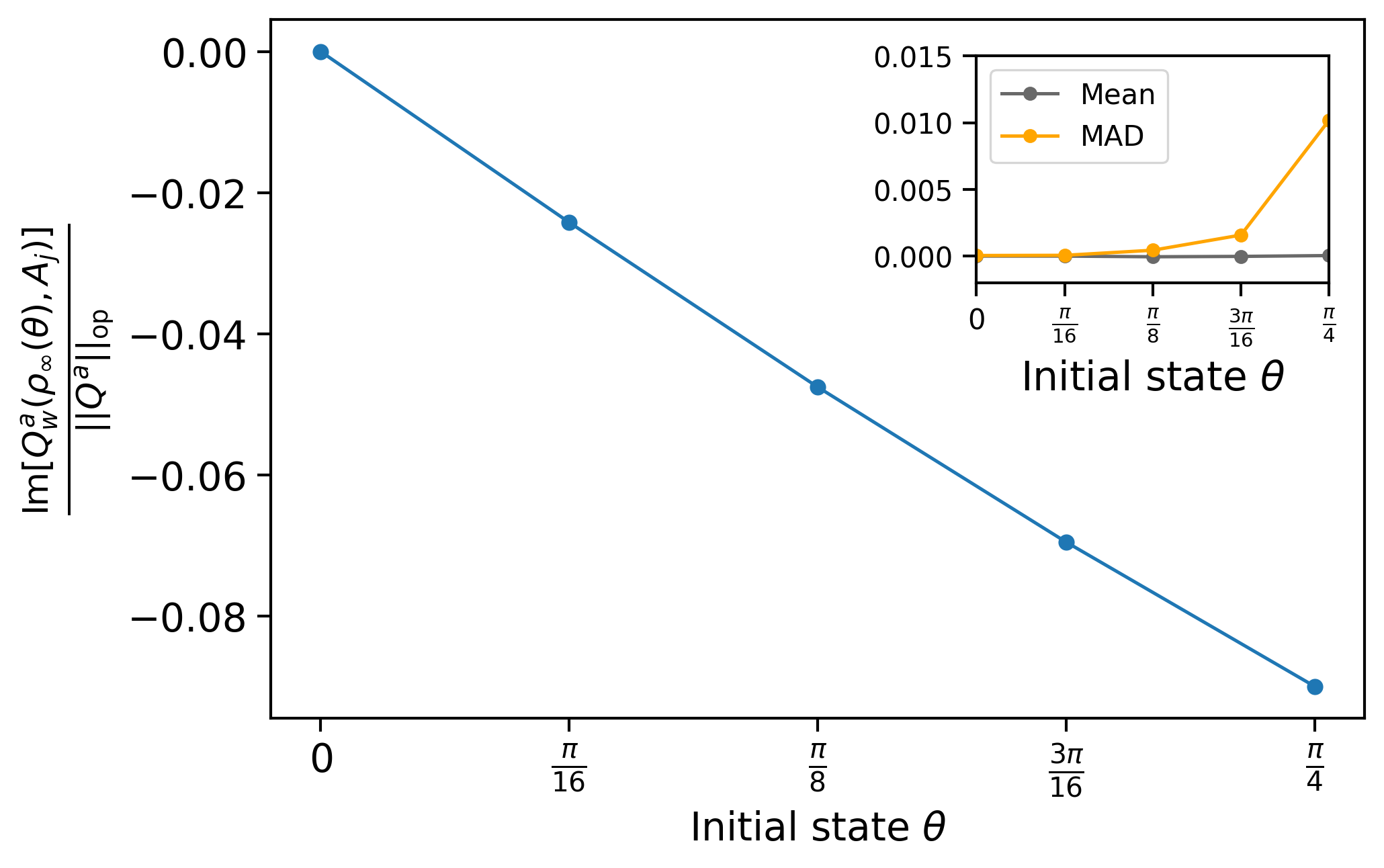}
    \caption{The imaginary part $\frac{\Im\left[Q^z_w \left(\rho_\infty(\theta), A_{00}^{y, N/2} \right)\right]}{||Q^z||_\textrm{op}}$ of the weak value of the charge $Q^z$ at equilibrium, normalized by $||Q^z||_\textrm{op} = 1$, for different initial states parametrized by the angle $\theta$, for $N=16$. Such quantity is computed with respect to the eigenprojector $A_{00}^{y, N/2}$, which projects onto the eigensubspace corresponding to outcome $00$ of the observable $\sigma^y_\frac{N}{2} \sigma^y_{\frac{N}{2}+1}$. The non-zero values of $\Im\left[Q^z_w \left(\rho_\infty(\theta), A_{00}^{y, N/2} \right)\right]$ prove non-classical (contextual) behaviour~\protect\cite{lostaglio_certifying_2020}. 
    The inset concerns the First Equilibrium Equation corresponding to the same eigensubspace of the same observable. We plot the absolute value of its temporal mean, and its mean absolute deviation (MAD) against the initial state. As both quantities are small, this shows that the First Equilibrium Equation is still respected despite the anomaly, thanks to the smallness of the chemical potentials.}
    \label{fig:Im_WV_EE1_all_m}
\end{figure}

The anomalies are small but non-negligible, so it is natural to wonder how this affects the First Equilibrium Equation, Eq.~\eqref{eqn:EE1WV}, which has to be respected at equilibrium, $\mathcal{E}_A(\rho, j) := \beta_A \Im[E_w (\rho, j)] + \sum_a \mu^a_A \Im[Q^a_w (\rho, j)] \mbeq 0$. 
We find that this equation is well respected in all cases despite the anomalies due to the smallness of the corresponding chemical potentials; see Appendix~\ref{app_subsec:reWV_param_space_exploration}. We show an example of this in the inset of Fig.~\ref{fig:Im_WV_EE1_all_m}, where we consider the LHS of the first Equilibrium Equation corresponding to the same observable's eigenspace as in the main plot, namely, the eigenspace corresponding to outcome $00$ of the observable $\sigma^y_\frac{N}{2} \sigma^y_{\frac{N}{2}+1}$. We see that the absolute value of the temporal mean is essentially zero, i.e., $|\overline{\mathcal{E}_A(\rho, j)}| \approx 0$, and the fluctuations, quantified by the mean absolute deviation (MAD), are also small, namely, $\overline{|\mathcal{E}_A(\rho, j) - \overline{\mathcal{E}_A(\rho, j)}|} \ll 1$. Nevertheless, we should note that, while the First Equilibrium Equation is respected in all cases for the model and the class of initial states considered, it could in principle be violated in other settings. Other plots regarding the anomalous imaginary weak values and the corresponding equilibrium equations can be found in in Section~\ref{app_subsec:additional_plots} of the Appendices. 

Weak values can be anomalous also by having real parts that take values outside of the range of the observable's spectrum. We investigated this too, but did not find any violation  for the cases studied. In fact, we explored a large parameter space by varying the Hamiltonian parameters and considering general rotations of the initial state in terms of three Euler angles. We witnessed anomalous behavior in the imaginary part of the weak value in $\sim 20\%$ of cases for SU(2)-breaking rotations, but never in its real part. It would be interesting to find regimes where they appear, possibly in models with different non-Abelian symmetries such as SU(3), or understand why they don’t. More details on this numerical exploration are available in Section \ref{app_subsec:reWV_param_space_exploration} of the Appendices.

\subsection{Predicting the real parts of the weak values}
 
\begin{figure}
    \centering
    \includegraphics[width=\linewidth]{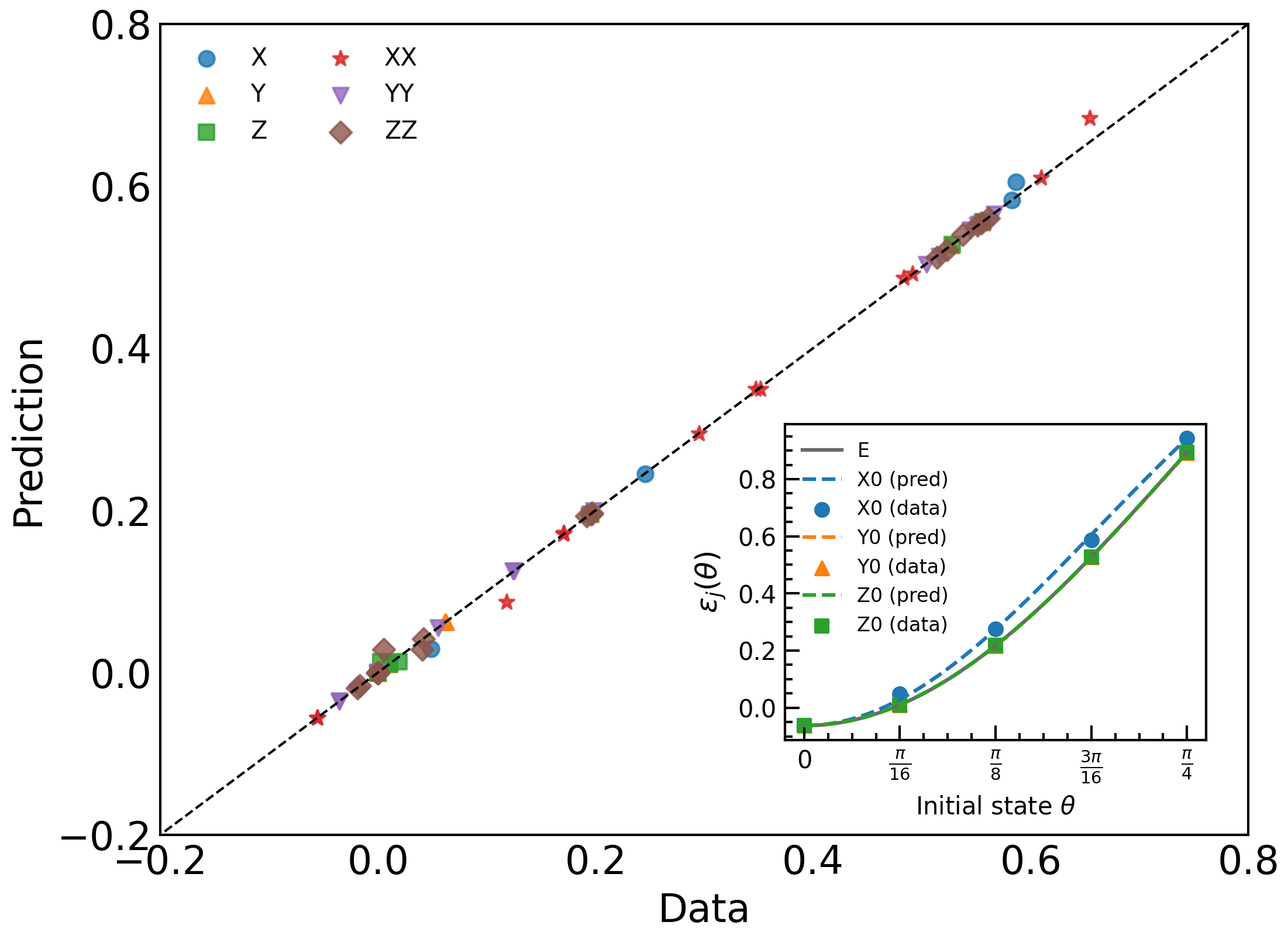}
    \caption{Comparison between our predictions for the real parts of weak values $\varepsilon_j$ and $q^a_j$, and the numerical data for $N=16$. As described in the text, we use two or three data points to fix the parameters in Eqs.~\eqref{eqn:epsj_approx_E_dE} and \eqref{eqn:qaj_approx_qa} and then compare the predictions with the remaining data. The black diagonal dashed line represent perfect agreement between data and predictions. The plot includes $\varepsilon_j$ and $q^a_j$ for all observables and all their (independent) eigenvalues. In the inset, we give an example of these predictions for $\varepsilon_j(\theta)$ plotted against the initial state, parametrized by $\theta$, for the eigenvalue $j=0$ of one-body observables, for $N=16$. For simplicity, we denote ``$X0$" the case corresponding to the eigenvalue $j=0$ of the observable $X \coloneqq \sigma^x_\frac{N}{2}$ etc., and ``pred" stands for ``prediction". The grey line (in this case coinciding with the green and orange lines) is the average energy $E(\theta)$. Note that we computed analytically the average and standard deviation of both the energy and the charges; the explicit analytical expressions can be found in Section~\ref{subsec:average_std_conserved_quantities} of the Appendices.}
    \label{fig:Prediction_Re_WV}
\end{figure}

From Eq.~\eqref{eqn:general_solution_EEs}, the real parts of the weak values $\varepsilon_j$ and $\{q^a_j\}_a$ are crucial to determine our prediction for an observable's probability distribution at equilibrium. Therefore, it is essential to have an easy way to compute them that does not require access to the energy eigenvalues and eigenvectors. In Ref.~\cite{scarpa_observable_2025}, it was found empirically that $\varepsilon_j$ could be well approximated by a linear function in the energy average and standard deviation. While a proper analytical derivation of this result has not yet been found, this can be understood by considering the constraints $\sum_j \varepsilon_j \, p_j^\textrm{est} = E$. If $\varepsilon_j$ and $p_j^\textrm{est}$ were independent, then we would indeed expect a linear relation with the energy. However, since $p_j^\textrm{est}$ actually depends on $\varepsilon_j$, we have a correction in terms of the standard deviation. Given that the same type of constraints also hold for the charges, we expect this same linear relation to hold here too. Indeed, this is what we observe numerically and in fact for the charges we do not even need to include the information about the standard deviation. Thus, we have
\begin{gather}
    \varepsilon_j \approx \gamma_j E + \eta_j \Delta E + \chi_j \label{eqn:epsj_approx_E_dE}\\
    q^a_j \approx \gamma_j^a q^a + \chi^a_j, \label{eqn:qaj_approx_qa}
\end{gather}
where $\{ \gamma_j, \eta_j, \chi_j \}$ and their charge counterparts are all coefficients that need to be fixed using some data or analytical calculation. 

In Fig.~\ref{fig:Prediction_Re_WV}, we show that such approximations hold well in all cases considered. To do so, given an observable and one of its (independent) eigenvalues, we consider $\varepsilon_j(\theta)$ and $q^a_j(\theta)$ as functions of $\theta$, which parametrizes the initial state. We take two or three data points and use them to fix the parameters in Eqs.~\eqref{eqn:epsj_approx_E_dE} and~\eqref{eqn:qaj_approx_qa}. Then, we use these equations to make predictions and compare them to the remaining data points. The main figure shows this comparison for $\varepsilon_j(\theta)$ and $\{q^a_j(\theta)\}_a$ for all observables and all (independent) eigenvalues. The inset shows an example of these predictions for the $\{\varepsilon_j(\theta)\}$ of one-body observables. In this case, the $\varepsilon_j$'s corresponding to $\sigma^y_\frac{N}{2}$ and $\sigma^z_\frac{N}{2}$ essentially correspond to the average energy, while $\varepsilon_j$ relative to $\sigma^x_\frac{N}{2}$ needs also the information about the energy standard deviation to be properly approximated. Thus, we have a simple way to predict $\varepsilon_j(\theta)$ and $q^a_j(\theta)$ which only requires information about the average energy/charge and in some cases also information about the energy standard deviation, besides a couple of data points to fix the parameters.

\subsection{Predicting the equilibrium distribution}
Here, we compare our predictions for the equilibrium probability distribution $p_j^\textrm{est}$ with the simulated  equilibrium value, i.e., the temporal average $\overline{p_j(t)}$. First, we numerically confirmed that the observable's entropy dynamically reaches its constrained maximum, as shown in Appendix~\ref{app_subsec:additional_plots}. In the same appendix we also confirmed that the Second Equilibrium Equation is well respected. This supports the framework's applicability. 

To compute our estimate $p_j^\textrm{est} \propto e^{-\beta_A \varepsilon_j - \sum_a \mu^a q^a_j}$, we first obtain $\varepsilon_j \approx \overline{R_j(t)}/\overline{p_j(t)}$ and $q_j^a \approx \overline{R_j^a(t)}/\overline{p_j(t)}$. Then, we calculate $\beta_A$ and the $\{\mu_A^a\}$ either analytically or via numerical optimization. We quantify the closeness with $\overline{p_j(t)}$ using the total variation distance, defined as $D(p, q):= \frac{1}{2} \sum_k \left| p_k - q_k \right|$ for two probability distributions $p$ and $q$. Note that $0 \leq D(p, q) \leq 1$.

In Fig.~\ref{fig:TVD_pred_vs_DE_with_charges_against_m_inset_nats}, we show that our predictions work well, with an error under $1\%$ for all cases considered. 
Further, as anticipated in Sec.~\ref{subsec:nats}, we compare our estimates with those obtained from the relevant subsystem's thermal state, i.e., the local NATS given in Eq.~\ref{eqn:local_NATS}. In Appendix~\ref{app_subsec:comparing_predictions_to_nats} we determine analytically the predictions for the observable's probability distributions based on the NATS, and determine the inverse temperature $\beta$ and the chemical potentials $\{\mu^a\}$ by fitting the data. We then compute the total variation distance $D(\overline{p_j(t)}, \, p_j^\textrm{NATS})$ between the temporal average $\overline{p_j(t)}$ and the estimate made via the NATS $p_j^\textrm{NATS} := \Tr(\rho_\textrm{NATS} A_j)$. This is shown in the inset of Fig.~\ref{fig:TVD_pred_vs_DE_with_charges_against_m_inset_nats}. 

For one-body observables, our predictions [Eq.~\eqref{eqn:general_solution_EEs}] and those of the NATS work equally well, with total variation distances of the order of $10^{-10}$. However, for two-body observables our predictions are better than the ones based on the NATS. We quantify this in Fig.~\ref{fig:TVD_pred_vs_DE_relative_improvement_nats_vs_with_charges_obs=two} using the relative improvement $\Delta_r := 1 - \frac{D(\overline{p_j(t)}, \, p_j^\textrm{est})}{D(\overline{p_j(t)}, \, p_j^\textrm{NATS})}$, which tells us how much better we can do if we switch from $p_j^\textrm{NATS}$ to $p_j^\textrm{est}$. 

As discussed above, one of the necessary conditions for the local NATS to be predictive is weak coupling. In Appendix~\ref{app_subsec:non-weak-coupling}, we review various definitions of weak coupling found in the literature and study them for the model and parameter values considered. According to all three definitions, including the possibly most physical one, i.e., $\beta \|H_\textrm{int}\| \ll 1$ \cite{riera_thermalization_2012, gogolin_equilibration_2016}, where $\|H_\textrm{int}\|$ is the operator norm of the interaction Hamiltonian, the coupling does not appear to be sufficiently weak to fully justify the use of the local NATS. This explains why our predictions for two-body observables, which do not rely on the assumption of weak coupling, outperform the NATS. The fact that the NATS makes good predictions for one-body observables can instead be ascribed to the one-site subsystem Hamiltonian being zero, and hence to an effective infinite temperature. 
Appendix~\ref{app_subsec:additional_plots} includes further details comparing our predictions to those of NATS.
 
\begin{figure}
    \centering
    \includegraphics[width=\linewidth]{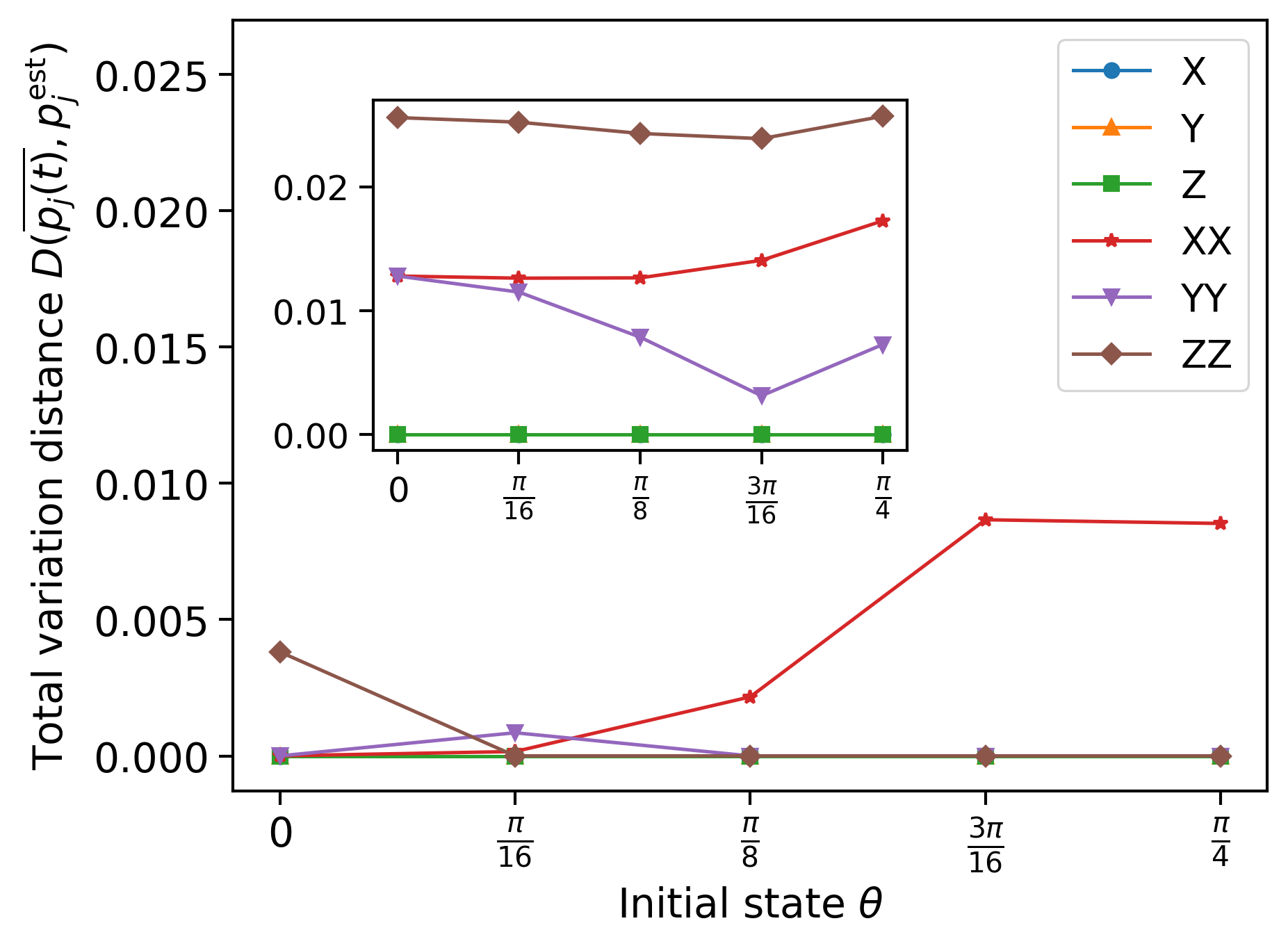}
    \caption{The total variation distance $D(\overline{p_j(t)}, p_j^\textrm{est})$ between the true probability distribution at equilibrium $\overline{p_j(t)}$ and our estimate for it, $p_j^\textrm{est}$, plotted against the initial state parametrized by an angle $\theta$. The figure includes all observables considered, for $N=16$. In the inset, we plot the same quantity but for the estimate based on the NATS $p_j^\textrm{NATS} := \Tr(\rho_\textrm{NATS} A_j)$, i.e., $D(\overline{p_j(t)}, p_j^\textrm{NATS})$. Note for simplicity we call $X \coloneqq \sigma^x_\frac{N}{2}$, $XX \coloneqq \sigma^x_\frac{N}{2} \sigma^x_{\frac{N}{2}+1}$ etc.}
    \label{fig:TVD_pred_vs_DE_with_charges_against_m_inset_nats}
\end{figure}

\begin{figure}
    \centering
    \includegraphics[width=\linewidth]{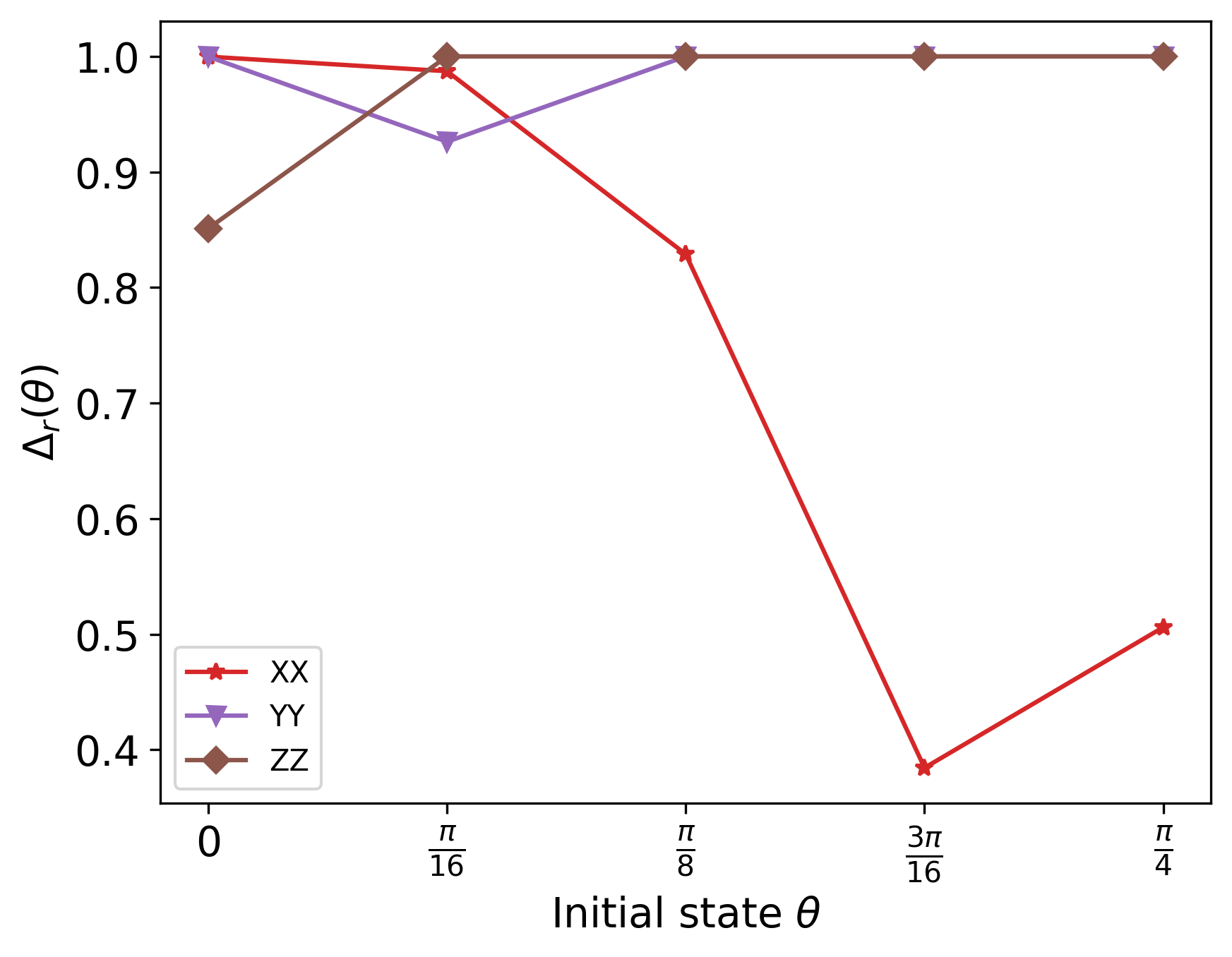}
    \caption{The relative improvement in total variation distance (TVD) $\Delta_r := 1 - \frac{D(\overline{p_j(t)}, \, p_j^\textrm{est})}{D(\overline{p_j(t)}, \, p_j^\textrm{NATS})}$ plotted against the initial state parametrized by the angle $\theta$, for two-body observables and $N=16$. This quantifies how much better we do when switching from the estimate based on the NATS (with TVD $D(\overline{p_j(t)}, p_j^\textrm{NATS})$) to our prediction Eq.~\eqref{eqn:general_solution_EEs} (with TVD $D(\overline{p_j(t)}, p_j^\textrm{est})$). $\Delta_r = 0$ means no improvement, and the higher the $\Delta_r$ the better our prediction is compared to the NATS one. Note for simplicity we call $XX \coloneqq \sigma^x_\frac{N}{2} \sigma^x_{\frac{N}{2}+1}$ etc.} 
    \label{fig:TVD_pred_vs_DE_relative_improvement_nats_vs_with_charges_obs=two}
\end{figure}

\section{Discussion} \label{sec:discussion}
 
In this work, we studied the effect of non-commuting charges on the equilibration of isolated quantum systems. We have done so via the lens of Observable Statistical Mechanics, an observable-centric framework that allows making computationally easy predictions for the stationary behavior of coarse observables. Importantly, this framework works even beyond the usual realm of applicability of equilibrium statistical mechanics, as it does not rely on the assumptions of weak coupling and locality of observables. It would be interesting to further characterize the limitations to the validity of such a framework. 

We obtained three main results. Firstly, we showed that degeneracies in the Hamiltonian are present if and only if the Hamiltonian has a non-Abelian symmetry. Physically, this means that these two aspects are equivalent, and no effect can be ascribed to only one of them. This holds in finite and infinite-dimensional systems. 
The Hamiltonian's degeneracies, caused by the non-Abelian symmetry, could influence equilibration processes, as many bounds link equilibration to the number of distinct energy levels populated~\cite{reimann_foundation_2008, short_quantum_2012, malabarba_quantum_2014}.

Secondly, we confirmed that the predictions of this framework work well even in the presence of non-commuting charges --- the error is under $10^{-10}$ for one-body observables and always under $10^{-2}$ for two-body ones. In fact, for two-body observables we show that our predictions are better than those given by the non-Abelian thermal state due to non-weak coupling. As our approach can also apply to highly degenerate subspaces of non-local observables, it would be a natural generalization to consider them in future work.

Finally, we showed that the equilibrium distribution and the necessary conditions for thermal-like behavour are characterized by weak values of the non-commuting charges. Thus, weak values and their associated Kirkwood-Dirac quasiprobability distributions, which have been widely investigated in quantum foundations and quantum thermodynamics as probes of non-classical behaviour, emerge naturally via the framework of Observable Statistical Mechanics. Studying them allows to witness non-classical behavior even at equilibrium, which would not arise if the charges commuted. It would be interesting to explore if the real parts of the weak values, which determine the equilibrium thermal-like distribution in Eq.~\eqref{eqn:general_solution_EEs}, can take anomalous values beyond the charge's expected ranges.

\begin{acknowledgments}
     L.S. thanks Shayan Majidy, Aroosa Ijaz, Samuel Slezak, Benedikt Placke, Jonathan Classen-Howes and Giuseppe Di Pietra for useful discussions. L.S. and A.T. are grateful to Robert Neagu for helpful comments on a previous version of the proof of Theorem \ref{thm:non-Abelian_degenerate_Hamiltonian}. A.T. also thanks Eliahu Cohen for useful discussions. F.A. thanks Francesco Caravelli, Sebastian Deffner and Nicole Yunger-Halpern for helpful discussions on non-commuting charges. L.S. thanks the ``Angelo Della Riccia" Foundation for their financial support during part of this project.
     L.S. also acknowledges support from the U.S. Department of Energy (DOE) through a quantum computing program sponsored by the Los Alamos National Laboratory (LANL) Information Science \& Technology Institute and from the DOE Office of Advanced Scientific Computing Research, Accelerated Research for Quantum Computing program, Fundamental Algorithmic Research for Quantum Computing (FAR-QC) project. Moreover, L.S. and V.V. acknowledge the support of the John Templeton Foundation through the ID No. 62312 grant, as part of the “The Quantum Information Structure of Spacetime” project (QISS). V.V. also thanks the Gordon and Betty Moore Foundation for their support. The opinions expressed in this publication are those of the authors and do not necessarily reflect the views of the John Templeton Foundation. A.T. was partially supported by the European Union’s Horizon Europe research and innovation programme under grant agreement No. 101178170 and by the Israel Science Foundation under grant agreement No. 2208/24. 
     L.P.G.P. acknowledges support from the Beyond Moore’s Law project of the Advanced Simulation and Computing Program at LANL,  the Laboratory Directed Research and Development (LDRD) program of LANL under project number 20230049DR, and the DOE Office of Advanced Scientific Computing Research, Accelerated Research for Quantum Computing program, Fundamental Algorithmic Research toward Quantum Utility (FAR-Qu) project.
\end{acknowledgments}

\bibliography{references}

\clearpage
\appendix
\onecolumngrid

\section*{APPENDIX}
\setcounter{secnumdepth}{1}

\renewcommand{\thesection}{A\arabic{section}}

\setcounter{section}{0}
\setcounter{equation}{0}
\renewcommand{\theequation}{A\arabic{equation}}

These appendices are organized as follows. In Sec.~\ref{sec:app_deg_H_non-Abelian_symmetry}, we prove the infinite-dimensional version of Theorem~\ref{thm:non-Abelian_degenerate_Hamiltonian}. This completes our argument that a Hamiltonian is degenerate if and only if it has a non-Abelian symmetry, or, equivalently, non-commuting charges. In Sec.~\ref{sec_app:small_mutual_info_extensive_entropy}, we discuss in more detail the bounds and heuristics arguing that the mutual information of coarse observables with both the energy and the charges is much smaller than its upper bound (the Holevo bound), and that coarse observables have extensive entropy. Some of these results had been obtained in Ref.~\cite{scarpa_observable_2025} and here we have adapted them to the new setup with a degenerate Hamiltonian, but Sec.~\ref{subsec_app:small_mutual_info_charges} is a new contribution. Such arguments guarantee the applicability of the Maximum Observable Entropy Principle. The small mutual information guarantees a few constraints suffice in entropy maximization, while the extensivity of the observable's entropy justifies the use of a maximum entropy principle in the first place. 

In Sec.~\ref{sec:AGH_theorem} we slightly generalize the main theorem in Ref.~\cite{anza_eigenstate_2018}, which states how degenerate an observable's eigensubspace needs to be so that the probability $p_{js}$ can be chosen to be constant in $s$ at equilibrium. This allows to turn non-linear expressions in $p_{js}$ into expressions in $p_j$, which is the observable's eigenvalue probability distribution and is of more physical interest. In Sec.~\ref{sec_app:generalized_EEs}, we derive the Equilibrium Equations of Observable Statistical Mechanics in the presence of non-commuting charges, and make use of the aforementioned theorem to express them in terms of $p_j$. 

Finally, in Sec.~\ref{app_sec:additional_info_numerics} we provide additional information and plots regarding our numerical simulations. These include analytical calculations for the specific model, initial states and observables used; the analytical construction of the local NATS; and a discussion on the coupling strength. In addition, we also give details of the numerical exploration we conducted to search for anomalous weak values in a larger parameter space.

\section{Degenerate Hamiltonians and non-Abelian symmetries} \label{sec:app_deg_H_non-Abelian_symmetry}
Let $\mathcal{H}$ be a complex infinite-dimensional separable Hilbert space, and let $H$ be a self-adjoint operator on $\mathcal{H}$. Naively, we would have liked to diagonalize $H$ in the usual sense, i.e. write $H = \sum_{n=1}^\infty E_n P_n$ for orthogonal projections $P_n$ ($E_m \neq E_n$ for $m \neq n$).
The projections determine an orthogonal direct sum $\mathcal{H} = \bigoplus_{n=1}^\infty \mathcal{H}_n$, where $\mathcal{H}_n$ is the eigenspace corresponding to the eigenvalue $E_n$.
The degeneracies are defined as $d_n \vcentcolon= \dim \mathcal{H}_n$.
If $H$ is bounded then we can define the symmetry group $\mathcal{G} \left( H \right)$ to comprise all unitaries that commute with $H$. The degeneracy of $H$ is equivalent to non-commutativity of $\mathcal{G} \left( H \right)$; one can show this in the same vein as in the finite-dimensional case (see Section \ref{sec:NonCommQ}).
However, the above reasoning relies on two assumptions which are always satisfied in finite dimensions, but may or may not hold in the infinite-dimensional case.

First, $H$ may not be diagonalizable in the aforementioned sense. It can be written as $H = \sum_{n=1}^\infty E_n P_n$ only if it has \textit{pure point spectrum}, meaning that every element of its spectrum is an eigenvalue. Recall that the spectrum $\sigma \left( H \right)$ is defined as the set of complex numbers $\lambda$ such that $H-\lambda I$ does not have a bounded inverse. In the finite-dimensional case every such $\lambda$ is an eigenvalue, so $H$ always has pure point spectrum; but in the infinite-dimensional case this is not necessarily true.
For example, the ``position'' operator $\hat{x}$ (multiplication by $x$) on $L^2 \left[ 0, 1 \right]$ is self-adjoint, but its spectrum $\sigma \left( \hat{x} \right) = \left[ 0, 1 \right] $ does not contain even a single eigenvalue. And yet, any element in the spectrum of $H$ physically corresponds to an energy. Thus, we have a conceptual issue: how do we define degeneracy for a Hamiltonian whose spectrum is not pure-point?

Second, if $H$ is unbounded, then the notion of commutation with $H$ becomes subtle.
Indeed, if $H$ is unbounded then it need not be well-defined on every vector $\ket{\psi} \in \mathcal{H}$; instead, it is equipped with a domain $\mathcal{D} \left( H \right)$, which is a dense subspace of $\mathcal{H}$. Now, let $U$ be a unitary operator; then it is bounded, thus everywhere-defined. The product $UH$ is defined on $\mathcal{D} \left( H \right)$, but $HU$ is defined on $U^{-1}\mathcal{D} \left( H \right) $, so it may not make sense to compare the two. Thus, it no longer makes sense to define the symmetry group $\mathcal{G} \left( H \right)$ as the set of unitaries that commute with $H$. Instead, we require a stronger notion of commutation.

Here we propose solutions for both issues. We look into the spectral theorem and identify appropriate definitions of degeneracy and symmetry.
We then state and prove the infinite-dimensional version of theorem, i.e., that degeneracy and non-Abelian symmetries are equivalent. This appendix makes use of machinery from functional analysis; see Ref.~\cite[Chapters 6-10]{hall_quantum_2013} for a good exposition on the spectral theorem. We also use further properties of direct integrals, which appear, e.g., in Ref.~\cite[Chapter IV.8]{takesaki_theory_1979}. Throughout this appendix we assume that $\mathcal{H}$ is a separable infinite-dimensional complex Hilbert space. $\mathfrak{B} \left( \mathcal{H} \right) $ denotes the set of bounded linear operators on $\mathcal{H}$.

\subsection{The spectral theorem}\label{sec:spectral}
Let $H$ be a self-adjoint operator (not necessarily bounded). The general form of the spectral theorem establishes that $\mathcal{H}$ can be written in the form of a \textit{direct integral} on the spectrum of $H$. A direct integral is somewhat analogous to a direct sum, but there are several key differences between the two. A formal definition of direct integrals is beyond the scope of this appendix, but can be found in Ref.~\cite[Chapter 7]{hall_quantum_2013}.

The spectral theorem asserts that there exists a measure $\mu$ on the spectrum $\sigma(H)$ and a family of Hilbert spaces $\left\{ \mathcal{H}_\lambda \right\}_{\lambda \in \sigma(H)} $, such that $\mathcal{H}$ is unitarily equivalent to a direct integral:
\begin{equation}\label{direct_integral_spec_H}
    \mathcal{H} \cong \int_{\sigma(H)}^{\oplus} \mathcal{H}_\lambda d\mu \left( \lambda \right) .
\end{equation}
Although we do not define direct integrals here, let us mention that the elements of $\int_{\sigma(H)}^{\oplus} \mathcal{H}_\lambda d\mu \left( \lambda \right)$ are \textit{sections}, i.e. $\mu$-measurable functions $s : X \rightarrow \bigcup_{\lambda \in \sigma(H)} \mathcal{H}_\lambda$ such that $s \left( \lambda \right) \in \mathcal{H}_\lambda$.

The spectral theorem further states that $H$ acts on the direct integral as multiplication by $\lambda$. Moreover, if we assume $\dim \mathcal{H}_\lambda > 0$ almost everywhere, then this direct integral is unique in the following sense. Suppose there is another direct integral $\int_{\sigma(H)}^{\oplus} \mathcal{\tilde{H}}_\lambda d\tilde{\mu} \left( \lambda \right) $ with the same properties as above; then the measure $\tilde{\mu}$ has the same sets of measure zero as $\mu$, and $ \dim \mathcal{\tilde{H}}_\lambda = \dim \mathcal{H}_\lambda $ almost everywhere. Therefore, $\dim \mathcal{H}_\lambda $ is uniquely determined by $H$, up to a set of measure zero. 

Intuitively, we can think of $\dim \mathcal{H}_\lambda$ as the multiplicity of the element $\lambda \in \sigma \left( H \right)$. In the special case that $\lambda$ is an eigenvalue, $\mu \left( \left\{ \lambda \right\} \right) > 0 $, and the well-defined $\dim \mathcal{H}_\lambda$ is indeed the degeneracy of the eigenvalue $\lambda$.
If $ \dim \mathcal{H}_\lambda = d $ almost everywhere for some fixed $d>0$, we say that $H$ has \textit{uniform multiplicity} $d$. These insights motivate the following definition.
\begin{definition}
Let $H$ be a self-adjoint operator on $\mathcal{H}$. We say that $H$ is \textit{non-degenerate} if $H$ has uniform multiplicity $1$.
\end{definition}
If $\sigma \left( H \right)$ consists entirely of eigenvalues (a pure-point spectrum), then this definition coincides with the conventional definition, i.e. that the eigenspaces of $H$ are all one-dimensional.

\subsection{Decomposable operators and strong-commutation}
Recall that in the finite-dimensional case, the spectral theorem determined a decomposition of $\mathcal{H}$ as a direct sum; and the operators that commute with $H$ were those which are block-diagonal with respect to this sum.
An analogous result holds in the infinite-dimensional case as well. Fix the direct integral decomposition Eq.~\eqref{direct_integral_spec_H}. We start by defining decomposable operators; intuitively, these are the block-diagonal operators with respect to the direct integral.
A bounded operator $O$ is said to be \textit{decomposable}, if there exists an essentially-bounded measurable family of operators $ \left\{ o \left( \lambda \right) \right\}_{\lambda \in \sigma (H)} $ such that $O$ acts on sections $s \in \int_{\sigma (H)}^{\oplus} \mathcal{H}_\lambda d\mu \left( \lambda \right) $ via:
\begin{equation}
    O s = \int_{\sigma (H)}^{\oplus} o \left( \lambda \right) s \left( \lambda \right) d\mu \left( \lambda \right) .
\end{equation}
In this case, we denote $ O = \int_{\sigma (H)}^{\oplus} o \left( \lambda \right) d\mu \left( \lambda \right) $. If every linear operator $o \left( \lambda \right) \in \mathfrak{B} \left( \mathcal{H}_\lambda \right) $ is a scalar operator, then we say that the operator $O$ is \textit{diagonal}. 
\begin{proposition}[Corollary IV.8.16 from Ref.~\cite{takesaki_theory_1979}]\label{prop:Takesaki_8_16}
    The decomposable operators are precisely those that commute with the algebra of all diagonal operators (the \textit{diagonal algebra}).
\end{proposition}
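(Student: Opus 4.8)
The plan is to prove the two inclusions separately. The inclusion ``decomposable $\subseteq$ commutant of the diagonal algebra'' is immediate: if $O = \int_{\sigma(H)}^{\oplus} o(\lambda)\, d\mu(\lambda)$ is decomposable and $D = \int_{\sigma(H)}^{\oplus} f(\lambda)\,\mathbb{I}_{\mathcal{H}_\lambda}\, d\mu(\lambda)$ is diagonal with $f\in L^\infty(\sigma(H),\mu)$, then on any section $s$ both $ODs$ and $DOs$ are the section $\lambda \mapsto f(\lambda)\, o(\lambda)\, s(\lambda)$, because each $f(\lambda)$ is a scalar and scalars commute with $o(\lambda)\in\mathfrak{B}(\mathcal{H}_\lambda)$. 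Hence $O$ commutes with every diagonal operator, i.e. with the whole diagonal algebra. The content is the reverse inclusion: every bounded $T$ commuting with the diagonal algebra is decomposable.

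For the reverse inclusion, let $T$ commute with the diagonal algebra, and first reduce to constant fibre dimension. The multiplicity function $\lambda\mapsto\dim\mathcal{H}_\lambda$ is $\mu$-measurable, so up to a null set $\sigma(H)=\bigsqcup_{n\in\{1,2,\dots,\infty\}}X_n$ with $\dim\mathcal{H}_\lambda=n$ on $X_n$. Each diagonal projection $P_n\coloneqq\int_{\sigma(H)}^{\oplus}\chi_{X_n}(\lambda)\,\mathbb{I}\,d\mu(\lambda)$ lies in the diagonal algebra, so $T$ commutes with $P_n$ and is block-diagonal with respect to the $P_n$; the decomposable operators split along the same partition; hence it suffices to show that $T$ restricted to $P_n\mathcal{H}$ is decomposable for each $n$. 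On such a block there is a unitary identification $\int_{X_n}^{\oplus}\mathcal{H}_\lambda\, d\mu \cong L^2(X_n,\mu)\otimes\mathcal{K}_n$ with $\mathcal{K}_n$ a fixed Hilbert space of dimension $n$, under which the diagonal algebra becomes $L^\infty(X_n,\mu)\otimes\mathbb{C}\,\mathbb{I}_{\mathcal{K}_n}$ (multiplication operators tensored with the identity).

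So it remains to show: if $T\in\mathfrak{B}\bigl(L^2(X_n,\mu)\otimes\mathcal{K}_n\bigr)$ commutes with $L^\infty(X_n,\mu)\otimes\mathbb{C}\,\mathbb{I}$, then $T$ is decomposable. Fix an orthonormal basis $\{e_i\}$ of $\mathcal{K}_n$ with matrix units $e_{ij}=\ketbra{e_i}{e_j}$, and form the entries $T_{ij}\coloneqq(\mathbb{I}\otimes e_{ii})\,T\,(\mathbb{I}\otimes e_{jj})$, viewed as operators on $L^2(X_n,\mu)$. Since the $\mathbb{I}\otimes e_{ij}$ commute with $L^\infty(X_n,\mu)\otimes\mathbb{C}\,\mathbb{I}$ and $T$ does too, each $T_{ij}$ commutes with every multiplication operator $M_f$, $f\in L^\infty(X_n,\mu)$; because $L^\infty(X_n,\mu)$ is maximal abelian in $\mathfrak{B}(L^2(X_n,\mu))$, i.e. equals its own commutant, this forces $T_{ij}=M_{\phi_{ij}}$ for some $\phi_{ij}\in L^\infty(X_n,\mu)$. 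Taking $o(\lambda)\in\mathfrak{B}(\mathcal{K}_n)$ to be the operator with matrix $(\phi_{ij}(\lambda))_{ij}$ in the basis $\{e_i\}$, one checks that $\lambda\mapsto o(\lambda)$ is a weakly measurable, essentially bounded family (norm bounded a.e. by $\|T\|$, measurability inherited from the $\phi_{ij}$) and that $T=\int_{X_n}^{\oplus}o(\lambda)\,d\mu(\lambda)$; hence $T$ is decomposable. Reassembling the blocks yields the proposition. Equivalently, the constant-dimension step is precisely the commutation theorem $\bigl(L^\infty(X_n,\mu)\otimes\mathbb{C}\,\mathbb{I}\bigr)' = L^\infty(X_n,\mu)\,\bar{\otimes}\,\mathfrak{B}(\mathcal{K}_n)$, whose right-hand side is by definition the decomposable operators over the trivial bundle $X_n\times\mathcal{K}_n$.

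The main obstacle is the analytic bookkeeping in the reverse inclusion rather than any conceptual difficulty: one needs the maximal abelianness of $L^\infty$ in $\mathfrak{B}(L^2)$ (standard, but the natural place a self-contained treatment would invoke von Neumann's bicommutant theorem), control of the norm and measurability of the reconstructed fibre family $o(\lambda)$ — in particular on the infinite-multiplicity block, where one must check that a formal matrix $(\phi_{ij}(\lambda))$ with essentially bounded entries really defines a bounded operator for a.e. $\lambda$ with a uniform bound — and measurability of the multiplicity decomposition $\sigma(H)=\bigsqcup_n X_n$ so that the blockwise argument glues back into a genuine decomposable operator. These are exactly the points where the infinite-dimensional, non-atomic setting departs from the finite-dimensional picture; the complete argument is Corollary IV.8.16 and the surrounding development in Ref.~\cite[Chapter IV.8]{takesaki_theory_1979}.
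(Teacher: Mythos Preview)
The paper does not prove this proposition; it is quoted verbatim as Corollary IV.8.16 of Takesaki and used as a black box in the proof of the infinite-dimensional theorem. So there is no ``paper's own proof'' to compare against. Your sketch is the standard argument (and essentially the one Takesaki gives): the forward inclusion is immediate, and for the reverse you stratify by multiplicity, trivialize each constant-multiplicity piece as $L^2(X_n,\mu)\otimes\mathcal{K}_n$, and then use that $L^\infty$ is maximal abelian in $\mathfrak{B}(L^2)$ to read off the matrix entries of $T$ as multiplication operators. The points you flag in your last paragraph---$\sigma$-finiteness so that $L^\infty$ really is maximal abelian, the a.e.\ uniform norm bound on the infinite matrix $(\phi_{ij}(\lambda))$ in the $n=\infty$ block, and measurability of the multiplicity partition---are exactly the places where care is required, and you are right to defer to Takesaki for the full details there; nothing in your outline is wrong.
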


We now define the spectral projections of $H$. As we mentioned, points in the spectrum $\sigma \left( H \right)$ correspond to possible values of the energy. If $\lambda \in \sigma \left( H \right)$ is not an eigenvalue, then $\mu \left( \left\{ \lambda \right\} \right) = 0$, and the probability to measure the energy to be $\lambda$ vanishes. However, an energy measurement can tell us that the energy lies within some set, for example an interval: $E_{\min} \leq E \leq E_{\max}$.
More generally, let $E \subseteq \sigma \left( H \right)$ be a measurable subset of the spectrum. The corresponding \textit{spectral projection} of $H$, denoted $\mu_E$, is defined as the diagonal operator:
\begin{equation}
    \mu_E \vcentcolon= \int_{\sigma \left( H \right)}^{\oplus} \mathbb{I}_E \left( \lambda \right) d\mu \left( \lambda \right) ,
\end{equation}
where $\mathbb{I}_E \left( \lambda \right)$ is the indicator function of the set $E$. Physically, $\mu_E$ is the projection corresponding to measuring the energy to be in the set $E$. Now, let $B \in \mathfrak{B} \left( \mathcal{H} \right)$ be a bounded operator. We say that $B$ \textit{strong-commutes} with $H$ if $B$ commutes with \textit{every} spectral projection of $H$.
Let $ \mathcal{G} \left( H \right) \subseteq \mathfrak{B} \left( \mathcal{H} \right) $ denote the set of all unitary operators that strong-commute with $H$. Therefore, we think of a unitary $U$ as a symmetry of $H$ only if it commutes with all of the spectral projections $\mu_E$.

\subsection{Statement and proof of the theorem}
We now state and prove our theorem.
\begin{theorem}
    Let $H$ be a (not necessarily bounded) self-adjoint operator. Then $ \mathcal{G} \left( H \right) $ is Abelian iff $H$ has uniform multiplicity $1$.
\end{theorem}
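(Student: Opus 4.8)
The plan is to reduce the statement to a structural fact about decomposable operators on the direct integral $\mathcal{H} \cong \int_{\sigma(H)}^{\oplus} \mathcal{H}_\lambda \, d\mu(\lambda)$ supplied by the spectral theorem [Eq.~\eqref{direct_integral_spec_H}], which is available for unbounded self-adjoint $H$ as well. The first step is to identify $\mathcal{G}(H)$ explicitly. A unitary $U$ strong-commutes with $H$ precisely when it commutes with every spectral projection $\mu_E = \int^{\oplus} \mathbb{I}_E(\lambda)\, d\mu(\lambda)$. Since finite linear combinations of such indicator functions are weak-$*$ dense in $L^\infty(\mu)$, the von Neumann algebra generated by the spectral projections is exactly the diagonal algebra; and commuting with a self-adjoint set is the same as commuting with the von Neumann algebra it generates. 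Hence $U \in \mathcal{G}(H)$ iff $U$ lies in the commutant of the diagonal algebra, which by Proposition~\ref{prop:Takesaki_8_16} is the algebra of decomposable operators. So $\mathcal{G}(H)$ is precisely the set of decomposable unitaries $U = \int^{\oplus} u(\lambda)\, d\mu(\lambda)$ with each $u(\lambda)$ unitary on $\mathcal{H}_\lambda$.

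With this identification the ``if'' direction is immediate: if $H$ has uniform multiplicity $1$ then $\dim \mathcal{H}_\lambda = 1$ almost everywhere, so every $o \in \mathfrak{B}(\mathcal{H}_\lambda)$ is a scalar and every decomposable operator is diagonal; the diagonal algebra is isomorphic to $L^\infty(\sigma(H),\mu)$ and therefore abelian, so $\mathcal{G}(H)$ is abelian.

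For the ``only if'' direction I would argue by contraposition. Suppose $H$ does not have uniform multiplicity $1$; together with the standing assumption $\dim \mathcal{H}_\lambda > 0$ a.e., the (measurable) multiplicity function takes a value $\geq 2$ on a positive-measure set, so there is some $n \in \{2, 3, \dots, \infty\}$ with $\mu(S_n) > 0$ where $S_n := \{\lambda : \dim \mathcal{H}_\lambda = n\}$. Over $S_n$ the measurable field is unitarily equivalent to a constant field $L^2(S_n,\mu)\otimes K$ with $\dim K = n \geq 2$, on which I define two fields of unitaries $u_1(\lambda), u_2(\lambda)$ that act as a fixed non-commuting pair of $2\times 2$ unitaries (say $\sigma_x$ and $\sigma_z$) on a fixed two-dimensional coordinate subspace of $K$ and as the identity on its orthogonal complement, extended by the identity on $\sigma(H)\setminus S_n$. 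These yield decomposable unitaries $U_1, U_2 \in \mathcal{G}(H)$, and since $u_1(\lambda)u_2(\lambda) - u_2(\lambda)u_1(\lambda)$ equals the fixed nonzero operator $\sigma_x\sigma_z - \sigma_z\sigma_x$ on the positive-measure set $S_n$, we get $U_1 U_2 \neq U_2 U_1$, so $\mathcal{G}(H)$ is non-abelian.

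The main obstacle is the measurable-selection step in the last paragraph: producing the fields $u_1, u_2$ genuinely within the direct-integral formalism requires the structure theory of measurable fields of Hilbert spaces, namely decomposing the spectrum by multiplicity so that each $S_n$ carries a measurable orthonormal frame of $\mathcal{H}_\lambda$ relative to which the $u_i(\lambda)$ are literally constant matrices, hence automatically measurable and essentially bounded. The other point requiring care, though routine, is verifying that the von Neumann algebra generated by the spectral projections is exactly the diagonal algebra, which legitimizes the reduction to Proposition~\ref{prop:Takesaki_8_16}.
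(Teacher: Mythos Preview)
Your proposal is correct and follows essentially the same route as the paper: identify $\mathcal{G}(H)$ with the decomposable unitaries via Proposition~\ref{prop:Takesaki_8_16} (the paper argues that the diagonal projections are exactly the spectral projections and hence generate the diagonal algebra, whereas you invoke weak-$*$ density of simple functions---equivalent justifications), dispatch the multiplicity-$1$ case since decomposable then means diagonal, and in the non-degenerate case pick a positive-measure set of constant multiplicity $d>1$, trivialize the field there, and plant non-commuting unitaries acting fibrewise by a fixed matrix. The paper embeds all of $\mathrm{U}(\mathcal{V})$ via $U\mapsto O_U$ rather than just two Pauli-type unitaries, and it glosses over precisely the measurable-selection point you flag (``choose some identification $\mathcal{H}_\lambda\cong\mathcal{V}$''), so your caveats are well placed but do not indicate any divergence in strategy.
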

\begin{proof}
    A von Neumann algebra is generated by its projections. The diagonal algebra is a von Neumann algebra, and the diagonal projections are precisely the spectral projections. To see this, take the square of an arbitrary diagonal operator $ O = \int_{\sigma (H)}^{\oplus} f \left( \lambda \right) d\mu \left( \lambda \right) $ (here $f$ is a real-valued measurable function). To obtain $P^2 = P$, clearly we must have $f^2 = f$ almost everywhere, hence $f \left( \lambda \right) \in \left\{ 0, 1 \right\}$ almost everywhere. Since $f$ is measurable, the set $ E = \left\{ \lambda \in \sigma (H) \mid f \left( \lambda \right) = 1 \right\} $ is measurable, hence $O = \mu_E$.
    
    Thus, the spectral projections generate the diagonal algebra. This implies that a bounded operator commutes with every spectral projection iff it commutes with the entire diagonal algebra. The symmetry group $\mathcal{G} \left( H \right)$ comprises the unitaries that commute with every spectral projection, hence the unitaries that commute with the diagonal algebra. Using Proposition \ref{prop:Takesaki_8_16}, the latter are precisely the decomposable unitary operators. 

    Now, if $\dim \mathcal{H}_\lambda = 1$ almost everywhere, then the decomposable operators are all diagonal, hence commuting. Otherwise, there exists a set $E$ of positive measure where $\dim \mathcal{H}_\lambda = d$, for some $d>1$ ($d$ may be $\infty$). Choose some identification of these Hilbert spaces, $ \mathcal{H}_\lambda \cong \mathcal{V} $ for all $\lambda \in E$. Clearly the unitary group $\mathrm{U} \left( \mathcal{V} \right)$ is non-Abelian; we would like to show that we can fit a copy of it as decomposable operators.
    For every $U \in \mathrm{U} \left( \mathcal{V} \right)$, define:
    \begin{equation}
        O_U = \int_{\sigma \left( H \right)}^{\oplus} \left[ \mathbb{I}_E \left( \lambda \right) U + \mathbb{I}_{E^c} \right] d\mu \left( \lambda \right) ,
    \end{equation}
    that is, $O_U$ acts as on $s \left( \lambda \right)$ as $U$ for any $\lambda \in E$, and as the identity operator for any $\lambda \notin E$ ($E^c$ denotes $\sigma \left( H \right) \setminus E$). $O_U$ is clearly decomposable. Moreover, for any two elements $U,V \in \mathrm{U} \left( \mathcal{V} \right)$, we have $O_{UV} = O_U O_V$. Therefore $O$ defines a faithful representation of $\mathrm{U} \left( \mathcal{V} \right)$. In particular, if $U, V$ do not commute, then $O_U, O_V$ are two non-commuting decomposable operators. This completes the proof.
\end{proof}

\newpage
\section{Coarse observables have small mutual information with the energy and charges, and extensive entropy} \label{sec_app:small_mutual_info_extensive_entropy}
 
Here, we provide heuristic arguments for coarse observables having small mutual information with the energy in the case of a degenerate Hamiltonian. For Hamiltonians with non-commuting charges that are sums of identical one-site operators, we show that coarse observables also have small mutual information with the charges. We also show that the arguments given in Ref.~\cite{scarpa_observable_2025} regarding coarse observables having extensive entropy apply under non-commuting charges, too.

In the following, we consider a coarse observable as described in Section \ref{subsec:observables}. That is, we define 
\begin{equation}
    A := \sum_{j=1}^{n_A} a_j A_j := \sum_{j=1}^{n_A} a_j \sum_{s=1}^{d_j} \ketbra{j, s}, 
\end{equation}
where $A_j$ is the eigenprojector onto the subspace corresponding to eigenvalue $a_j$, and $\{ \ket{j,s} \}_{js}$ is a complete basis with the index $s$ labelling the degeneracies for each eigensubspace. We assume the number of outcomes $n_A$ is much smaller than the Hilbert space dimension, and grows at most as a small polynomial in the system size, i.e., $n_A \leq N^k$ for some integer $k \sim O(1)$. We are interested in highly degenerate subspaces of these observables, i.e., subspaces for which $d_j \gg 1$ and in particular $d_j(d_j-1) \geq n_E + 1$, where $n_E$ is the number of distinct energy eigenvalues.

We define two Shannon entropies relative to the observable. The first one corresponds to the measurement of the complete eigenbasis $\{ |j,s \rangle \}_{j,s}$, which has equilibrium probability distribution $\{p_{js} = |\langle \lambda_n|j,s\rangle|^2\}_{js}$, while the second one is related to the coarse-grained outcomes $a_j$, with equilibrium distribution $\{p_j = \langle \lambda_n|A_j |\lambda_n\rangle =: [A_j]_{nn}\}_j$. Thus, we consider
\begin{gather}
    S_A(\{p_{js}\}) := - \sum_{j,s} p_{js} \log p_{js} \\
    S_A(\{p_{j}\}) := - \sum_j p_j \log p_j.
\end{gather}

\subsection{Small mutual information with the energy}
Consider the equilibrium state $\rho_\infty = \sum_n p(n) |\lambda_n \rangle\langle \lambda_n|$. Since $[H, |\lambda_n \rangle\langle\lambda_n|]=0$, it is natural to consider measuring the energy first. The post-measurement state is $\rho_H = \frac{\Pi_n \rho_\infty \Pi_n}{p(n)} = |\lambda_n \rangle\langle\lambda_n| = \frac{\Pi_n |\psi(0) \rangle\langle \psi(0)| \Pi_n}{p(n)}$. Subsequently, we consider a measurement in the complete eigenbasis $\{|j,s\rangle\}_{j,s}$ of the observable $A$. The equilibrium probability of measuring state $|j, s\rangle$ after the outcome $E_n$ of the Hamiltonian is $p(j,s|n) = \langle j,s|\rho_H|j,s\rangle = |\langle \lambda_n | j,s \rangle|^2 = \frac{|\langle \psi(0)|\Pi_n |j,s\rangle|^2}{p(n)}$. 

The joint probability of both measurements is $p(j,s; n) = p(n) p(j,s|n) = p(n) |\langle \lambda_n | j,s \rangle|^2 = |\langle \psi(0)|\Pi_n |j,s\rangle|^2$, and the marginals are obtained by summing over $n$, $p(j,s) = \sum_n p(n) p(j,s | n) = \sum_n p(n) |\langle \lambda_n | j,s \rangle|^2$. Since we have fixed the order of measurements, we can use Bayes' theorem to obtain the conditional probability $p(n|j,s) = \frac{p(n) p(j,s|n)}{p(j,s)} = \frac{p(n) |\langle \lambda_n | j,s \rangle|^2}{\sum_n p(n) |\langle \lambda_n | j,s \rangle|^2}$. 

Consider an alternative scenario where the coarse-grained outcomes $a_j$ of the observable are measured after the energy measurement. In this case, we obtain $p(j|n) = \langle \lambda_n| A_j |\lambda_n\rangle =: [A_j]_{nn}$, $p(j;n) = p(n) p(j|n) = p(n) [A_j]_{nn}$, $p(j) = \sum_n p(n) p(j|n) = \sum_n p(n) [A_j]_{nn}$ and finally $p(n|j) = \frac{p(n) [A_j]_{nn}}{\sum_n p(n) [A_j]_{nn}}$. 

In all subspaces such that $d_j(d_j-1) \geq n_E + 1$ we can apply Theorem 1 in Ref.~\cite{anza_eigenstate_2018} (see also Appendix~\ref{sec:AGH_theorem}), which implies that $|\langle \lambda_n |j,s\rangle|^2 = \frac{[A_j]_{nn}}{d_j}$. We thus have $p(j,s|n) = \frac{p(j|n)}{d_j}$, $p(j,s;n) = \frac{p(j;n)}{d_j}$, $p(j,s) = \frac{p(j)}{d_j}$ and $p(n|j,s) = p(n|j)$. 

The mutual information between two random variables $X$ and $Y$ is defined as 
\begin{equation}
    I(X, Y) := \sum p(X;Y) \log \left( \frac{p(X;Y)}{p(X) p(Y)} \right),
\end{equation}
where $p(X;Y)$ is the joint distribution, and $p(X)$ and $p(Y)$ are the marginals. Define $\tilde{I}_{eq}(H, A)$ and $I_{eq}(H, A)$ to be respectively the mutual information calculated with respect to the measurement of the observable's complete eigenbasis and the observable's coarse-grained outcomes. Then, if the aforementioned theorem can be applied, we find $\tilde{I}_{eq}(H, A) = I_{eq}(H, A)$. We can now apply the information-theoretic interpretation in terms of communication channels, which was discussed in Section \ref{subsec:MOEP_constraints}, to $I_{eq}(H, A)$. We have that $\frac{I_{eq}(H, A)}{S_H(\{p_n\})} \leq \frac{S_A(\{p_j\})}{S_H(\{p_n\})} \leq \frac{\log n_A}{S_H(\{p_n\})} \lesssim \frac{\log N}{N}$, where we have assumed $S_H(\{p_n\}) \sim N$ for realistic initial states when the Hamiltonian is not too highly degenerate, i.e., $n_E \sim D$. Hence, the mutual information is much smaller than its Holevo bound: $I_{eq}(H, A) \ll S_H(\{p_n\})$.

\subsection{Small mutual information with the charges} \label{subsec_app:small_mutual_info_charges}
Consider the spectral decomposition of a charge 
\begin{equation}
    Q^a := \sum_\mu q^a_\mu Q^a_\mu,
\end{equation}
where $\{q^a_\mu\}_\mu$ are the charge's distinct eigenvalues and $\{Q^a_\mu\}_\mu$ are the coarse-grained eigenprojectors.

Once again, we consider the equilibrium state $\rho_\infty = \sum_n p(n) |\lambda_n \rangle\langle \lambda_n|$. Let us start by first measuring the charge $Q^a$. The state after measurement is $\rho_{Q^a} = \frac{Q^a_\mu \rho_\infty Q^a_\mu}{p(\mu)}$ with $p(\mu) = \Tr(\rho_\infty Q^a_\mu)$. We subsequently measure the complete eigenbasis $\{|j,s\rangle\}_{j,s}$ of the observable $A$. The probability of measuring state $|j, s\rangle$ after the outcome $q^a_\mu$ of the charge is $p(j,s|\mu) = \langle j,s|\rho_{Q^a}|j,s\rangle = \frac{\langle j,s|Q^a_\mu \rho_\infty Q^a_\mu|j,s\rangle}{p(\mu)}$. We can now also compute the joint probability $p(j,s; \mu) = p(\mu) p(j,s|\mu) = \langle j,s|Q^a_\mu \rho_\infty Q^a_\mu|j,s\rangle$ and the marginal by summing over $\mu$, $p(j,s) = \sum_\mu p(\mu) p(j,s | \mu) = \sum_\mu \langle j,s|Q^a_\mu \rho_\infty Q^a_\mu|j,s\rangle$. We can also repeat the same calculations but inverting the order of measurements.

Consider a one-dimensional spin-1/2 chain of $N$ lattice sites, which evolves under a Hamiltonian with non-commuting charges of the form $Q^a = \sum_{i=1}^N Q_{(i)}^a$, where the $\{Q_{(i)}^a\}$ are identical one-site operators, each acting on a different lattice site. Assume every $Q_{(i)}^a$ has the same $d$ eigenvalues $V:=\{v_k\}_{k=1}^d$. Then, each $Q^a$ has $n_{Q^a} \in O(N^{d-1})$ distinct outcomes. To see this, note that we draw $N$ values (with repetition) from $V$, so we have $d^N$ possible combinations $(x_1, x_2, \ldots, x_N)$. We want to find the number of distinct values that the sum $X= \sum_{i=1}^N x_i$ takes. Define the number of times a given eigenvalue appears in a combination as $n_k :=|\{i: x_i=v_k\}|$ with $n_k \geq 0$ and $\sum_k n_k = N$. Then, $X = \sum_{k=1}^d n_k v_k$ and each weak composition $(n_1,\dots,n_d)$ of $N$ gives rise to one sum. The number of these weak compositions, i.e., the number of non-negative integer solutions to $\sum_k n_k = N$, is given by $\binom{N + d - 1}{d - 1} \in O(N^{d-1})$, and hence $S_{Q^a}(\{p_\mu\}) \lesssim (d-1)\log N$. In particular, if the eigenvalues in $V$ are equally spaced, that is, $v_k - v_{k-1}= \xi$, then $v_k = v_1 + (k-1) \xi$ and $X = \sum_{i=1}^N x_i = N v_1 + \xi \sum_{i=1}^N (k_i - 1)$. The second term in the sum can take integer values in $[0, N(d-1)]$ since each $k_i-1$ runs over the integers $\{0,1,\dots,d-1\}$. Hence, the number of distinct possible sums is $N(d-1) + 1 \in O(N)$. For instance, this is true for the SU(2)-symmetric model we study in the numerics, where the non-commuting charges are the total magnetizations $Q^a := \frac{1}{N} \sum_{i=1}^N \sigma^a_i$ with $a =x, y, z$, which have $N+1$ eigenvalues. 

We can now show that the mutual information is small compared to its Holevo bound:
\begin{equation}
    \frac{\tilde{I}(Q^a, A)}{S_A(\{p_{js}\})} \leq \frac{S_{Q^a}(\{p_\mu\})}{S_A(\{p_{js}\})} \lesssim \frac{\log N}{N},
\end{equation}
where we have assumed that $S_A(\{p_{js}\})$ is extensive; see the next subsection for a detailed justification of this assumption. Moreover, we note that coarse-graining the outcomes of a variable is just a form of post-processing, and thus we have $I(Q^a, A) \leq \tilde{I}(Q^a, A)$ thanks to the data processing inequality \cite{nielsen_quantum_2010}.

\subsection{Extensive entropy}
For completeness, we summarize here the arguments given in Ref.~\cite{scarpa_observable_2025} for the extensivity of the Shannon entropy of the observable's complete eigenbasis $S_A(\{p_{js}\})$. Firstly, by the Schur-Horn theorem and the Schur-concavity of the entropy, we have that $S_A(\{p_{js}\}) \geq S_H(\{p_n\})$, so at equilibrium $S_A(\{p_{js}\})$ is extensive whenever $S_H(\{p_n\})$ is. This is true for any observable at equilibrium. Moreover, one can show that when $\exists \, j: \max_s p_{js} = \max_{j's} p_{j's} \land d_j(d_j-1) \geq D+1$, we have $S_A(\{p_{js}\}) \geq \min_j \log d_j$. In particular, this is the case for observables with constant degeneracy $d^*$ such that $d^*(d^*-1) \geq D+1$. If such observables have support on $N_S$ sites, then $\log d^* \sim N-N_S$, so $S_A(\{p_{js}\}) \gtrsim N - N_S$ at equilibrium, independently of the initial state. For a more detailed discussion, we refer the reader to Ref.~\cite{scarpa_observable_2025}.

\newpage
\section{Generalization of the Anza-Gogolin-Huber theorem}
\label{sec:AGH_theorem}
In Ref.~\cite{anza_eigenstate_2018}, Anza, Gogolin and Huber showed how degenerate an observable's eigensubspace has to be to allow choosing a complete orthonormal basis within that subspace that it is unbiased with respect to a set of orthonormal vectors on the full Hilbert space. This latter set could be taken to be the energy eigenbasis. Here, we generalize this theorem to consider unbiasedness with respect to a set of orthogonal projectors of rank greater or equal to 1 instead of just a set of orthonormal vectors. We now state the generalization and prove it.

\begin{theorem}[Generalization of the Anza-Gogolin-Huber theorem]
    Let $\{\Pi_n\}_{n=1}^{M} \subset \mathcal{B}(\mathcal{H})$ be a set of $M$ rank-$d_n$ orthogonal projectors in the space $\mathcal{B}(\mathcal{H})$ of bounded linear operators on a Hilbert space $\mathcal{H}$ of dimension $D$. Let $A := \sum_{j=1}^{n_A} a_j A_j$ be an operator on $\mathcal{H}$ with $n_A \leq D$ distinct eigenvalues $a_j$ and corresponding eigenprojectors $A_j$. Decompose $\mathcal{H} = \bigoplus_{j=1}^{n_A} \mathcal{H}_j$ into a direct sum such that each $\mathcal{H}_j$ is the image of the corresponding $A_j$ with dimension $d_j$. For each $j$ for which $d_j(d_j -1) \geq M + 1$ there exists an orthonormal basis $\{|j,s\rangle\}_{s=1}^{d_j} \subset \mathcal{H}_j$ such that for all $s$, $n$, 
    \begin{equation} \label{eqn:thm_AGH_app}
        \langle j, s| \Pi_n |j,s \rangle = \frac{\Tr(\Pi_n A_j)}{d_j}.
    \end{equation}
\end{theorem}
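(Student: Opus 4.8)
The plan is to localize the statement to the single eigenspace $\mathcal{H}_j$ and then recast it as an existence question about simultaneously flattening the diagonals of several Hermitian matrices, mirroring the structure of the original Anza--Gogolin--Huber argument. First I would fix $j$ with $d := d_j$ obeying $d(d-1) \geq M+1$ and pass to $\mathcal{H}_j = \mathrm{range}(A_j)$, of dimension $d$. For each $n$ form the compression $T_n := A_j \Pi_n A_j$ restricted to $\mathcal{H}_j$; since $A_j$ acts as the identity on $\mathcal{H}_j$ and $0 \leq \Pi_n \leq \mathbb{I}$, $T_n$ is a positive semidefinite operator on $\mathcal{H}_j$ with $0 \leq T_n \leq \mathbb{I}_{\mathcal{H}_j}$ and $\Tr T_n = \Tr(A_j \Pi_n) =: \tau_n$, and $\langle v | \Pi_n | v \rangle = \langle v | T_n | v \rangle$ for every unit $|v\rangle \in \mathcal{H}_j$. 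Thus Eq.~\eqref{eqn:thm_AGH_app} is equivalent to: there is an orthonormal basis $\{|j,s\rangle\}_{s=1}^{d}$ of $\mathcal{H}_j$ in which every $T_n$ has constant diagonal $\tau_n/d$, or equivalently, writing $\tilde{T}_n := T_n - (\tau_n/d)\,\mathbb{I}_{\mathcal{H}_j}$, a basis in which all the traceless Hermitian operators $\tilde{T}_n$ are simultaneously hollow.

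Second, I would record the one-matrix case. A single $\tilde{T}_n$ can always be hollowed: by the Schur--Horn theorem a Hermitian matrix is unitarily conjugate to a matrix with any diagonal majorized by its spectrum, and the constant vector $(\tau_n/d,\dots,\tau_n/d)$ is majorized by every real vector of the same sum. Crucially, this is blind to $\mathrm{rank}(\Pi_n)$: only the scalar $\langle v|\Pi_n|v\rangle$ enters, so each $\Pi_n$, whatever its rank, contributes exactly one real constraint per basis vector --- the same kind of constraint a rank-one projector would. This is the sense in which the claim generalizes AGH ``for free,'' with the number of projectors $M$ playing the role of the number $D$ of rank-one energy projectors in the original statement (note the naive alternative of splitting each $\Pi_n$ into $d_n$ rank-one pieces and invoking AGH directly is wasteful, as it would only give $d(d-1) \geq \sum_n d_n + 1$).

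Third, for the simultaneous case I would follow AGH's inductive construction, choosing $|j,1\rangle,\dots,|j,d\rangle$ one at a time. Suppose $|j,1\rangle,\dots,|j,s-1\rangle$ are orthonormal with $\langle j,r|T_n|j,r\rangle = \tau_n/d$ for all $n$ and $r<s$. Let $W_s$ be their orthogonal complement in $\mathcal{H}_j$ ($\dim W_s = d-s+1$) and $T_n^{(s)}$ the compression of $T_n$ to $W_s$; I would seek a unit $|j,s\rangle \in W_s$ with $\langle j,s|T_n^{(s)}|j,s\rangle = \tau_n/d$ for all $n$. The last vector costs nothing: from $\sum_{r=1}^{d}\langle j,r|T_n|j,r\rangle = \tau_n$, once the first $d-1$ are correct the remaining unit vector is forced and automatically correct. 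For the inductive step, note each individual condition is solvable --- $\Tr T_n^{(s)} = (d-s+1)\tau_n/d$ forces $\tau_n/d$ to lie between the extreme eigenvalues of $T_n^{(s)}$, so the continuous map $|v\rangle \mapsto \langle v|T_n^{(s)}|v\rangle$ on the (connected) unit sphere of $W_s$ attains the value $\tau_n/d$ --- and the point is to secure a common solution of the $M$ quadrics $\langle v|T_n^{(s)}|v\rangle = \tau_n/d$ via AGH's counting/continuity analysis on the manifold of admissible partial bases, which is precisely where $d(d-1) \geq M+1$ is consumed.

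The main obstacle is exactly that common-solution step. A naive count of real degrees of freedom against constraints would suggest that $d \gtrsim M$ already suffices, but the quadric conditions need not be transverse and may have no real common zero, so a genuine --- and somewhat lossy --- existence argument is required, which is the source of the weaker bound $d(d-1)\geq M+1$. I would take particular care to (i) show the construction never gets stuck before $s=d-1$, i.e. that the room available at each step, quantified by $d(d-1)\geq M+1$, is always enough; (ii) handle the boundary cases where $\tau_n/d$ sits at an extreme eigenvalue of a rank-deficient $T_n^{(s)}$, where the relevant quadric's zero set can be smaller than expected; and (iii) confirm that passing from rank-one to rank-$d_n$ projectors introduces no hidden extra constraints --- it does not, since $|v\rangle \mapsto \langle v|\Pi_n|v\rangle$ is a single real quadratic form irrespective of $\mathrm{rank}(\Pi_n)$, and the inputs AGH needs ($0 \leq \tau_n/d \leq 1$) follow from $0 \leq \Pi_n \leq \mathbb{I}$ together with $\Tr(\Pi_n A_j) \leq \Tr(A_j) = d_j$.
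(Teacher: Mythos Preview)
Your reduction is correct and captures the essential content of the generalization: localize to $\mathcal{H}_j$, form the compressions $T_n = A_j\Pi_n A_j|_{\mathcal{H}_j}$, and observe that each $\Pi_n$ contributes \emph{one} real quadratic form $|v\rangle\mapsto\langle v|T_n|v\rangle$ regardless of its rank. This is exactly what the paper does; indeed the paper even splits $\Pi_n=\sum_r\Pi_{nr}$ into rank-one pieces but then recombines their Bloch vectors into a single $\vec b_n^{(j)}$, so only $M$ constraints survive --- your direct route via $T_n$ is equivalent and arguably cleaner.

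The gap is in your construction of the basis. The AGH argument (as reproduced in the paper) is \emph{not} an inductive one-vector-at-a-time scheme; it is a global Bloch-vector construction. One parametrizes pure states in $\mathcal{H}_j$ by their generalized Bloch vectors $\vec b\in\mathbb S^{d_j^2-2}\subset\mathbb R^{d_j^2-1}$, so that $\langle v|T_n|v\rangle=\tau_n/d_j$ becomes the \emph{linear} condition $\vec b_n^{(j)}\cdot\vec b=0$. The $M$ constraints cut out a linear subspace of dimension $d_j^2-1-M$, and one then chooses \emph{all} $d_j$ basis Bloch vectors simultaneously as the facet vectors of a regular $(d_j-1)$-simplex inside that subspace (Minkowski--Weyl guarantees this when $d_j^2-2-M\geq d_j-1$, i.e.\ $d_j(d_j-1)\geq M+1$); the simplex conditions are precisely orthonormality and completeness. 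Your sequential scheme, taken literally, fails at the late steps: at step $s=d_j-1$ the remaining space $W_s$ is two-dimensional, its Bloch sphere has real dimension $2$, and for $M>2$ there is no reason the $M$ quadrics (now linear in the Bloch picture of $W_s$) have a common zero there. The crucial point the Bloch/simplex argument exploits --- and your inductive framing loses --- is that the \emph{same} $M$ hyperplanes in the full $\mathbb R^{d_j^2-1}$ constrain every basis vector, so one pays for the $M$ conditions only once, not once per step.
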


\begin{proof}
    The proof follows closely the original argument by Anza, Gogolin and Huber \cite{anza_eigenstate_2018}. We begin by noting that we can always decompose a rank-$d_n$ projector $\Pi_n$ in terms of $d_n$ rank-1 projectors $\{\Pi_{nr}\}_{r=1}^{d_n}$, i.e., $\Pi_n = \sum_{r=1}^{d_n} \Pi_{nr}$. Let us consider the normalized projection of each $\Pi_{nr}$ and $\Pi_n$ onto the eigen-subspace $\mathcal{H}_j$, namely,
    \begin{gather}
        \Pi_{nr}^{(j)} := \frac{A_j \Pi_{nr} A_j}{\Tr(A_j \Pi_{nr})}, \qquad 
        \Pi_n^{(j)} := \frac{A_j \Pi_n A_j}{\Tr(A_j \Pi_n)} = \sum_{r=1} ^{d_n} \frac{\Tr(A_j \Pi_{nr})}{\Tr(A_j \Pi_n)} \Pi_{nr}^{(j)}.
    \end{gather}
    Now, given a vector $|\varphi\rangle \in \mathcal{H}_j$, we can write $\langle \varphi| \Pi_n | \varphi\rangle = \langle \varphi| A_j \Pi_n A_j | \varphi\rangle = \Tr(A_j \Pi_n) \langle \varphi| \Pi_n^{(j)} | \varphi\rangle = \sum_{r=1} ^{d_n} \Tr(A_j \Pi_{nr}) \langle \varphi| \Pi_{nr}^{(j)} | \varphi\rangle$. Thanks to the isomorphism $\mathcal{H} \cong \mathbb{S}^{D^2-2}$ between a $D$-dimensional separable complex Hilbert space $\mathcal{H}$ and the unit $(D^2-2)$-sphere $\mathbb{S}^{D^2-2}$, one can associate to every normalized rank-1 projector $|\psi\rangle\langle \psi|$ (with $\ket{\psi} \in \mathcal{H}$) a generalized Bloch vector $\Vec{b}(\psi) \in \mathbb{S}^{D^2-2} \subset \mathbb{R}^{D^2-1}$ satisfying
    \begin{equation}
        |\psi \rangle\langle \psi| = \frac{\mathbb{I}}{D} + \sqrt{\frac{D-1}{D}} \Vec{b}(\psi) \cdot \Vec{\gamma},
    \end{equation}
    where $\Vec{\gamma}$ is a vector with entries $\gamma_i := \hat{\gamma}_i/\sqrt{2}$ and $\hat{\gamma}_i$ are the $D^2-1$ generators of SU($D$) \cite{bengtsson_geometry_2006, anza_eigenstate_2018}.
    Using the generalized Bloch vector parametrization, we can express $\langle \varphi| \Pi_{nr}^{(j)} | \varphi\rangle$ 
    as 
    \begin{equation}
        \langle \varphi| \Pi_{nr}^{(j)} | \varphi\rangle = \frac{1}{d_j} + \frac{d_j - 1}{d_j} \Vec{b}_{nr}^{(j)} \cdot \Vec{b},
    \end{equation}
    where $\Vec{b}_{nr}^{(j)}$ and $\Vec{b}$ are the (generalized) Bloch vectors associated to $\Pi_{nr}^{(j)}$ and $|\varphi\rangle\langle \varphi|$ respectively. Thus, we obtain
    \begin{equation}
        \langle \varphi| \Pi_n | \varphi\rangle = \frac{\Tr(\Pi_n A_j)}{d_j} + \frac{d_j - 1}{d_j} \Vec{b}_n^{(j)} \cdot \Vec{b},
    \end{equation}
    with $\Vec{b}_n^{(j)} := \sum_{r=1} ^{d_n} \Tr(A_j \Pi_{nr}) \Vec{b}_{nr}^{(j)}$. To obtain Eq.~\eqref{eqn:thm_AGH_app} we therefore need to impose two types of conditions. Firstly, we need to require that the generalized Bloch vectors are orthogonal for every $n$, i.e., $\Vec{b}_n^{(j)} \cdot \Vec{b} \mbeq 0 \; \forall n$. Therefore, given $\Vec{b}_n^{(j)}, \Vec{b} \in \mathbb{S}^{d_j^2-2}$, imposing these $M$ conditions leaves a subspace of dimension $d_j^2 - 2 - M$ for choosing $\Vec{b}$. Secondly, in this remaining subspace we want to pick $d_j$ generalized Bloch vectors such that their corresponding state vectors $\{ |j,s\rangle \}_{s=1}^{d_j}$ form a complete orthonormal basis for $\mathcal{H}_j$. Thus, we must have
    \begin{gather} 
        \mathbb{I} \mbeq \sum_s |j,s \rangle\langle j,s| = \mathbb{I} + \sqrt{\frac{d_j-1}{d_j}} \left( \sum_s \Vec{b}_{js} \right) \cdot \Vec{\gamma} \quad \Longleftrightarrow \quad \sum_s \Vec{b}_{js} \mbeq 0 \label{eqn:conditions_completeness_bloch_vectors_app} \\
        \delta_{ss'} \mbeq |\langle j,s| j,s' \rangle|^2 = \frac{1}{d_j} + \frac{d_j - 1}{d_j} \Vec{b}_{js} \cdot \Vec{b}_{js'} \quad \Longleftrightarrow \quad \Vec{b}_{js} \cdot \Vec{b}_{js'} \mbeq \frac{d_j}{d_j - 1} \delta_{ss'} - \frac{1}{d_j - 1}. \label{eqn:conditions_orthonormality_bloch_vectors_app}
    \end{gather}
    If the remaining subspace is large enough to host a $(d_j -1)$-dimensional regular simplex, the Minkowski-Weyl theorem \cite{anza_eigenstate_2018} guarantees finding $d_j$ generalized Bloch vectors $\Vec{b}_{js}$ that obey the conditions in Eqs.~\eqref{eqn:conditions_completeness_bloch_vectors_app} and \eqref{eqn:conditions_orthonormality_bloch_vectors_app} by choosing them to be the facet vectors of such a simplex. We thus require $d_j^2 - 2 - M \geq d_j - 1$, equivalently $d_j(d_j - 1) \geq M + 1$.
\end{proof}

\begin{remark}
    If the projectors $\{ \Pi_n \}$ are rank-1, i.e., $d_n=1 \; \forall n$, we retrieve the  theorem by Anza, Gogolin and Huber~\cite{anza_eigenstate_2018}. 
\end{remark}
\newpage

\section{The generalized Equilibrium Equations and their solutions} \label{sec_app:generalized_EEs}

\subsection{Derivation of the generalized Equilibrium Equations}
We here derive the generalization of the Equilibrium Equations of Observable Statistical Mechanics \cite{scarpa_observable_2025, anza_information-theoretic_2017, anza_pure_2018} to systems evolving under a degenerate Hamiltonian with non-commuting charges.\\

\noindent We consider the setup introduced in Section \ref{sec:setup}, but allow for mixed states $\rho = \sum_\alpha t_\alpha \ketbra{\psi_\alpha}$ with $\sum_\alpha t_\alpha = 1$. Thus, we consider a degenerate Hamiltonian $H$ and an observable $A := \sum_{j=1}^{n_A} a_j A_j$ with $n_A$ distinct eigenvalues $a_j$ and coarse-grained eigenprojectors $A_j := \sum_{s=1}^{d_j} A_{js} := \sum_{s=1}^{d_j} |j, s \rangle\langle j,s|$ of rank $d_j$. For future convenience, we define the overlaps $D_{js}^\alpha := \braket{j,s}{\psi_\alpha}$, their complex conjugates $\overline{D}_{js}^\alpha$, and 
\begin{align}
R_{js} &\coloneqq \frac{1}{2} \Tr(\rho \{ A_{js}, H \}) , \qquad
    R_j \coloneqq \sum_s R_{js} = \frac{1}{2} \Tr(\rho \{ A_j, H \}).
\end{align} 
Further, we consider $n_Q$ (linearly independent) non-commuting charges $\{Q^a\}_{a=1}^{n_Q}$ which are such that they all commute with the Hamiltonian, i.e., $[H, Q^a] = 0 \; \forall a$, but do not commute with each other, namely, $[Q^a, Q^b] \neq 0 \; \forall a, b : a \neq b$. For each of the charges, we define 
\begin{align}
    R_{js}^a \coloneqq \frac{1}{2} \Tr(\rho \{ A_{js}, Q^a \}), \qquad R_j^a \coloneqq \sum_s R_{js}^a = \frac{1}{2} \Tr(\rho \{ A_j, Q^a \}).
\end{align}

We want to maximize the observable's Shannon entropy
\begin{equation}
    \tilde{S}_A[\rho] \coloneqq -\sum_j p_{js} \log p_{js} = - \sum_{\alpha,js} t_\alpha \abs{D_{js}^\alpha}^2 \log( \sum_\beta t_\beta \abs{D_{js}^\beta}^2 ),
\end{equation}
where $p_{js} \coloneqq p(\ket{j,s})$ is the probability distribution of the complete basis $\{|j,s\rangle\}_{js}$ diagonalizing all our measurements. 

Besides imposing the usual constraints of state normalization and fixed average energy, here we also include information about the non-commuting charges by fixing their average values. The constraints take the form
\begin{align}
    &\mathcal{C}_N := \Tr(\rho) - 1 = \sum_{\alpha js} t_\alpha \abs{D_{js}^\alpha}^2 - 1 \mbeq 0, \\
    &\mathcal{C}_E := \Tr(\rho H) - E = \sum_{\alpha jj' ss'} t_\alpha \overline{D}_{js}^\alpha H_{js,j's'} D_{j's'}^\alpha - E \mbeq 0, \\
    &\mathcal{C}_{Q^a} := \Tr(\rho Q^a) - q^a = \sum_{\alpha jj' ss'} t_\alpha \overline{D}_{js}^\alpha (Q^a)_{js,j's'} D_{j's'}^\alpha - q^a \mbeq 0 \quad \forall a,
\end{align}
where we used the shorthand notations $H_{js,j's'} := \mel{j,s}{H}{j',s'}$ and $(Q^a)_{js,j's'} := \mel{j,s}{Q^a}{j',s'}$. The choice of constraints is justified in Section \ref{subsec:MOEP_constraints}. Note that strictly speaking we have two more constraints: the positivity of the density matrix and the unitarity of the $D_{js}^{(n)}$. However, we can simply discard solutions not compatible with these constraints.

We solve this constrained maximization problem using the Lagrange multipliers technique \cite{anza_information-theoretic_2017, riley_mathematical_2006, reichl_modern_2016}. To each of the constraints $\mathcal{C}_N$, $\mathcal{C}_E$ and $\{ \mathcal{C}_{Q^a} \}_a$ corresponds a Lagrange multiplier $\lambda_N$, $\beta_A$ and $\{ \mu^a_A \}_a$ respectively. We can thus define the auxiliary function
\begin{equation}
    \tilde{\Lambda}_A[\rho; \lambda_N, \beta_A, \{ \mu^a_A \}] := \tilde{S}_A[\rho] + \lambda_N \mathcal{C}_N + \beta_A \mathcal{C}_E + \sum_a \mu^a_A \mathcal{C}_{Q^a}.
\end{equation}
The constrains are enforced by imposing the derivatives with respect to the Lagrange multipliers to be zero, i.e., $\frac{\delta \tilde{\Lambda}_A}{\delta \lambda_N} = \frac{\delta \tilde{\Lambda}_A}{\delta \beta_A} = \frac{\delta \tilde{\Lambda}_A}{\delta \mu^a_A} \mbeq 0 \; \forall a$. We can vary the auxiliary function with respect to the overlaps $D_{js}^\alpha$ and $\overline{D}_{js}^\alpha$, and the statistical coefficients $t_\alpha$. However, only two sets of these equations are independent, namely $\frac{\delta \tilde{\Lambda}_A}{\delta D_{js}^\alpha} = 0$ and $\frac{\delta \tilde{\Lambda}_A}{\delta \overline{D}_{js}^\alpha} = 0$. We thus take functional derivatives of the Shannon entropy and of the constraints with respect to $D_{js}^\alpha$ and $\overline{D}_{js}^\alpha$. We obtain
\begin{align}
\begin{split}
    \frac{\delta \tilde{\Lambda}_A}{\delta D_{js}^\alpha} &= \frac{\delta \tilde{S}_A}{\delta D_{js}^\alpha} + \lambda_N \frac{\delta \mathcal{C}_N}{\delta D_{js}^\alpha} + \beta_A \frac{\delta \mathcal{C}_E}{\delta D_{js}^\alpha} + \sum_a \mu^a_A \frac{\delta \mathcal{C}_{Q^a}}{\delta D_{js}^\alpha} = \\
    &= - t_\alpha \overline{D}_{js}^\alpha \left( 1 + \log( \sum_\beta t_\beta \abs{D_{js}^\beta}^2 ) \right) + \lambda_N t_\alpha \overline{D}_{js}^\alpha\\ 
    &\quad + \beta_A \sum_{j's'} t_\alpha \overline{D}_{j's'}^\alpha H_{j's',js} + \sum_a \mu^a_A \sum_{j's'} t_\alpha \overline{D}_{j's'}^\alpha (Q^a)_{j's',js} \mbeq 0
\end{split} \\
    \frac{\delta \tilde{\Lambda}_A}{\delta \overline{D}_{js}^\alpha} &= \overline{\frac{\delta \tilde{\Lambda}_A}{\delta D_{js}^\alpha}} \mbeq 0.
\end{align}
We now take the linear combinations
\begin{gather}
    \frac{\delta \tilde{\Lambda}_A}{\delta D_{js}^\alpha} D_{js}^\alpha -  \overline{D}_{js}^\alpha \frac{\delta \tilde{\Lambda}_A}{\delta \overline{D}_{js}^\alpha} \mbeq 0, \\
    \frac{1}{2} \left( \frac{\delta \tilde{\Lambda}_A}{\delta D_{js}^\alpha} D_{js}^\alpha + \overline{D}_{js}^\alpha \frac{\delta \tilde{\Lambda}_A}{\delta \overline{D}_{js}^\alpha} \right) \mbeq 0.
\end{gather}
Finally, we sum both equations over $\alpha$ to obtain
\begin{gather}
    \beta_A \Tr(\rho [A_{js}, H]) + \sum_a \mu^a_A \Tr(\rho [A_{js}, Q^a]) \mbeq 0, \label{eqn:EE1_pjs_app} \\
    - p_{js} \log p_{js} \mbeq \lambda_N p_{js} + \beta_A R_{js} + \sum_a \mu^a_A R_{js}^{a}, \label{eqn:EE2_pjs_app}
\end{gather}
where we have also re-defined the Lagrange multipliers $\lambda_N \to 1 -\lambda_N$, $\beta_A \to -\beta_A$ and $\mu^a_A \to -\mu^a_A \; \forall a$.

\subsection{Equilibrium Equations for highly degenerate observables} 
We have obtained two sets of equilibrium equations for $p_{js}$, but we are interested in the coarse-grained probability distribution $p_j$ as this represents the distribution of measurement outcomes. In fact, for highly degenerate observables we can obtain equilibrium equations for $p_j$ by applying Theorem 1 from Ref.~\cite{anza_eigenstate_2018}. Given such an observable $A$ and a set of orthonormal vectors $\{\ket{\psi_n}\}_{n=1}^M$, this theorem states that for each $j$ for which $d_j(d_j-1) \geq M + 1$, we can choose an orthonormal eigenbasis $\{ \ket{j,s} \}_{js}$ of $A$ such that for all $s, n$ we have 
\begin{equation}
    |\braket{\psi_n}{j,s}|^2 = \frac{\expval{A_j}{\psi_n}}{d_j}.
\end{equation}
Thus, Theorem 1 from Ref.~\cite{anza_eigenstate_2018} quantifies the degeneracy needed to find an orthonormal basis within a degenerate eigenspace such that the basis is also mutually unbiased with respect to a set of vectors, i.e., such that each basis element in the subspace has the same overlap with each vector in the set. A full statement and proof of the theorem is available in Ref.~\cite{anza_eigenstate_2018} or in Appendix~\ref{sec:AGH_theorem}, where we have also generalized the theorem to a set of orthogonal projectors $\{ \Pi_n \}_{n=1}^M$ of rank $d_n$.
By applying the theorem using the set of vectors $\{\ket{\lambda_n}\}_{n=1}^{n_E}$, we have that at equilibrium, i.e., on the $\rho_\infty$, the two distributions are related in the following way:
\begin{equation} \label{eqn:pjs=pj_over_dj}
    p_{js}^\infty = \sum_n p_n |\braket{\lambda_n}{j,s}|^2 = \sum_n p_n \frac{\expval{A_j}{\lambda_n}}{d_j} = \frac{p_j^\infty}{d_j}.
\end{equation}
We remark that while in the case of a non-degenerate Hamiltonian we need to consider the relation between $\{\ket{j,s}\}_{s=1}^{d_j}$ and the complete basis $\{E_n\}_{n=1}^D$ for every $j$, here we just need to consider the set of $n_E < D$ orthonormal vectors $\{\ket{\lambda_n}\}_{n=1}^{n_E}$. Thus, the condition of the theorem will be satisfied also by observables that are less degenerate than those required in the case of a non-degenerate Hamiltonian. 

Hence, to obtain the equilibrium equations for $p_j$, we can first sum both sets of equations over $s$ and then use Eq.~\eqref{eqn:pjs=pj_over_dj} to find
\begin{gather}
    \beta_A \Tr(\rho [A_j, H]) + \sum_a \mu^a_A \Tr(\rho [A_j, Q^a]) \mbeq 0, \label{eqn:EE1_app} \\
    - p_j \log p_j + p_j \log d_j \mbeq \lambda_N p_j + \beta_A R_j + \sum_a \mu^a_A R_j^{a}. \label{eqn:EE2_app}
\end{gather}
This proves the First and Second Equilibrium Equations in the main text.

\newpage
\section{Additional details about the numerics} \label{app_sec:additional_info_numerics}
\subsection{Analytical calculations for the non-integrable XXX model} \label{app_subsec:analytical_calculations_non-integrable_XXX_model}
Consider the Hamiltonian
\begin{equation} \label{eqn:hamiltonian_def_app}
    H = \sum_{a=x,y,z} \left\{ \sum_{i=1}^{N-1} J_i \sigma^a_i \sigma^a_{i+1} + \sum_{i=1}^{N-2} J_i \sigma^a_i \sigma^a_{i+2} \right\},
\end{equation}
with open boundary conditions and $J_i = 1 + \epsilon \, \delta_{i,3}$ with $\epsilon=0.3$, following Ref.~\cite{lasek_numerical_2024}. \\
Due to the SU(2)-symmetry of the model, the non-commuting charges are
\begin{equation} \label{eqn:def_charges_app}
    Q^a := \frac{1}{N} \sum_{i=1}^N \sigma^a_i \quad \forall a \in \{x,y,z\}.
\end{equation}
We consider a one-parameter family of initial states given by
\begin{equation} \label{eqn:psi_theta_m_app}
    \ket{\psi_0(\theta)} = R_y(\theta) \otimes R_y(-\theta) \otimes \ldots \otimes R_y(\theta) \otimes R_y(-\theta) \ket{01\ldots01},
\end{equation}
where $R_y(\theta) = \exp(-iY\theta/2)$ is the rotation operator along the $y$ axis, and the rotations alternate between angles $\theta$ and $- \theta$. We note that this rotation is chosen to be SU(2)-breaking, so that we can explore different average energy subspaces. We look at five points: $\theta = \{0, \frac{\pi}{16}, \frac{\pi}{8}, \frac{3\pi}{16}, \frac{\pi}{4}\}$, avoiding $\theta = \pi/2$ because it is a ground state of the Hamiltonian. 

\subsubsection{Averages and standard deviations of the conserved quantities} \label{subsec:average_std_conserved_quantities}
The energy average and variance are respectively defined as $E(\theta) := \Tr(\rho_0(\theta) H)$ and $\Delta E^2(\theta) := \Tr(\rho_0(\theta) H^2) - E^2(\theta)$. For the charges, we analogously define $q^a(\theta) := \Tr(\rho_0(\theta) Q^a)$ and $\Delta^2 q^a(\theta) := \Tr(\rho_0(\theta) (Q^a)^2) - (q^a)^2(\theta)$ for $a \in \{x, y, z\}$.
We compute explicitly the averages and variances of the conserved quantities for the family of initial states parametrized by $\theta$:
\begin{equation}
    \frac{E(\theta)}{N} = \frac{2(N-1+\epsilon)\sin^2(\theta) -1}{N}
\end{equation}
\begin{equation} \label{eqn:energy_std_app}
      \frac{\Delta E^2(\theta)}{N^2} =
      \frac{4 \cos^2\theta}{N^2} [ (1 + 3\sin^2\theta)N
      - 5\sin^2\theta - 1
      + 2\epsilon (1 + 3\sin^2\theta)
      + \epsilon^2 (1 + \sin^2\theta)
      ]
\end{equation}
\begin{align} 
    &q^x(\theta) = \sin(\theta); \qquad
    && \Delta^2 q^x(\theta) = \frac{\cos^2(\theta)}{N} \label{eqn:charge_avg_std_x_app}\\
    &q^y(\theta) = 0; 
    && \Delta^2 q^y(\theta) = \frac{1}{N} \label{eqn:charge_avg_std_y_app} \\
    &q^z(\theta) = \frac{\cos(\theta)}{N} \llbracket \text{N is odd} \rrbracket; 
    && \Delta^2 q^z(\theta) = \frac{\sin^2(\theta)}{N} \label{eqn:charge_avg_std_z_app}
\end{align}
where the Iverson bracket is defined as
\begin{equation*}
    \llbracket P \rrbracket \coloneqq 
    \begin{cases}
        1 &\; \text{if $P$ is true} \\
        0 &\; \text{otherwise}
    \end{cases}
\end{equation*}

Note that here we have defined the charges divided by the system size $N$, as in Eq.~\eqref{eqn:def_charges_app}. If we instead used the definition $Q^a := \sum_{i=1}^N \sigma^a_i$, then the expectation values $q^a$ shown here would need to be multiplied by $N$, and similarly the variances $\Delta^2 q^a$ by a factor of $N^2$. Hence, we indeed have $q^a \sim N$ and $\Delta q^a \sim \sqrt{N}$.

\subsubsection{Predicting \texorpdfstring{$R_j^a$}{Rₐⱼ}: a special case}
We briefly note a special case for which we can obtain the value of $R_j^z(t)$ analytically. The first initial state, i.e., $|\psi_0(\theta=0)\rangle = |01\ldots01\rangle$, is an eigenstate of $Q^z$ with eigenvalue $0$. As $[Q^z, H]=0$, we have that $R_j^z(t) = \frac{1}{2} \langle \psi_0(0)| e^{iHt} \{A_j, Q^z\} e^{-iHt} |\psi_0(0)\rangle) = 0$, and thus $q^z_j = 0$, for any observable and any time. More generally, whenever the initial state is an eigenstate of a charge, we have that $R_j^a = \lambda^a_\psi \, p_j$ with $Q^a |\psi_0\rangle = \lambda^a_\psi |\psi_0\rangle$.

\subsection{Additional plots} \label{app_subsec:additional_plots}

\subsubsection{Time evolution of the Shannon entropy}
In this subsection, we give more details about the numerical results. We begin by discussing the maximization of the observables' entropies. As discussed in Sec.~\ref{sec:generalizing_obs_stat_mech}, the Maximum Observable Entropy Principle is based on the constrained maximization of $S_A(\{p_{js}\})$. However, for an observable $A$ such that $d_j = d^* \; \forall j$ and $d^*(d^*-1) \geq n_E + 1$ (so we can apply the aforementioned Theorem 1 in Ref.~\cite{anza_eigenstate_2018}), at equilibrium we have that 
\begin{equation}
    S_A(\{p_{js}\}) = S_A\left(\left\{ \frac{p_j}{d^*}\right\}\right) = S_A(\{p_j\}) + \log d^*.
\end{equation}
Thus, under these conditions requiring the maximization of $S_A(\{p_j\})$ is the same as requiring the maximization of $S_A(\{p_{js}\})$. This is indeed the case for the one-body and two-body observables considered in our numerical analysis (see Sec.~\ref{sec:numerical_results}). Hence, given the time-evolved probability distributions of measurement outcomes $p_j(t) := \Tr(\rho(t) A_j)$ obtained from the simulation for each observable, we compute the time evolution of the corresponding Shannon entropy
\begin{equation}
    S_A(t) := S_A(\{p_j\})(t) := - \sum_j p_j(t) \log p_j(t).
\end{equation} 
This is shown in Fig.~\ref{fig:shannon_all_m_all_obs} for all one-body and two-body observables. Each subplot represents a different observable, and the different colors correspond to different initial states as indicated in the legends. While $y$- and $z-$ observables show rapid equilibration to the absolute maximum for all initial states, the $x$-observables display a dependence on the initial state. We still see equilibration, albeit with larger fluctuations and reaching a constrained maximum that is lower than the absolute one. This is because, as $\theta$ increases, the expectation value of the charge $Q^x$ also increases and hence its chemical potential $\mu^x$ plays a bigger role in the equilibrium entropy $S_A(p_j)$, which takes the same form as Eq.~\ref{eqn:equilibrium_shannon_entropy}. The constraints lower the entropy from its absolute maximum $\log n_A$, and the strength of their effect depends on their corresponding Lagrange multiplier. Nevertheless, the entropy still equilibrates to the \textit{constrained} maximum, so these plots indeed support an entropic maximization approach.

\begin{figure}
    \centering
    \includegraphics[width=\linewidth]{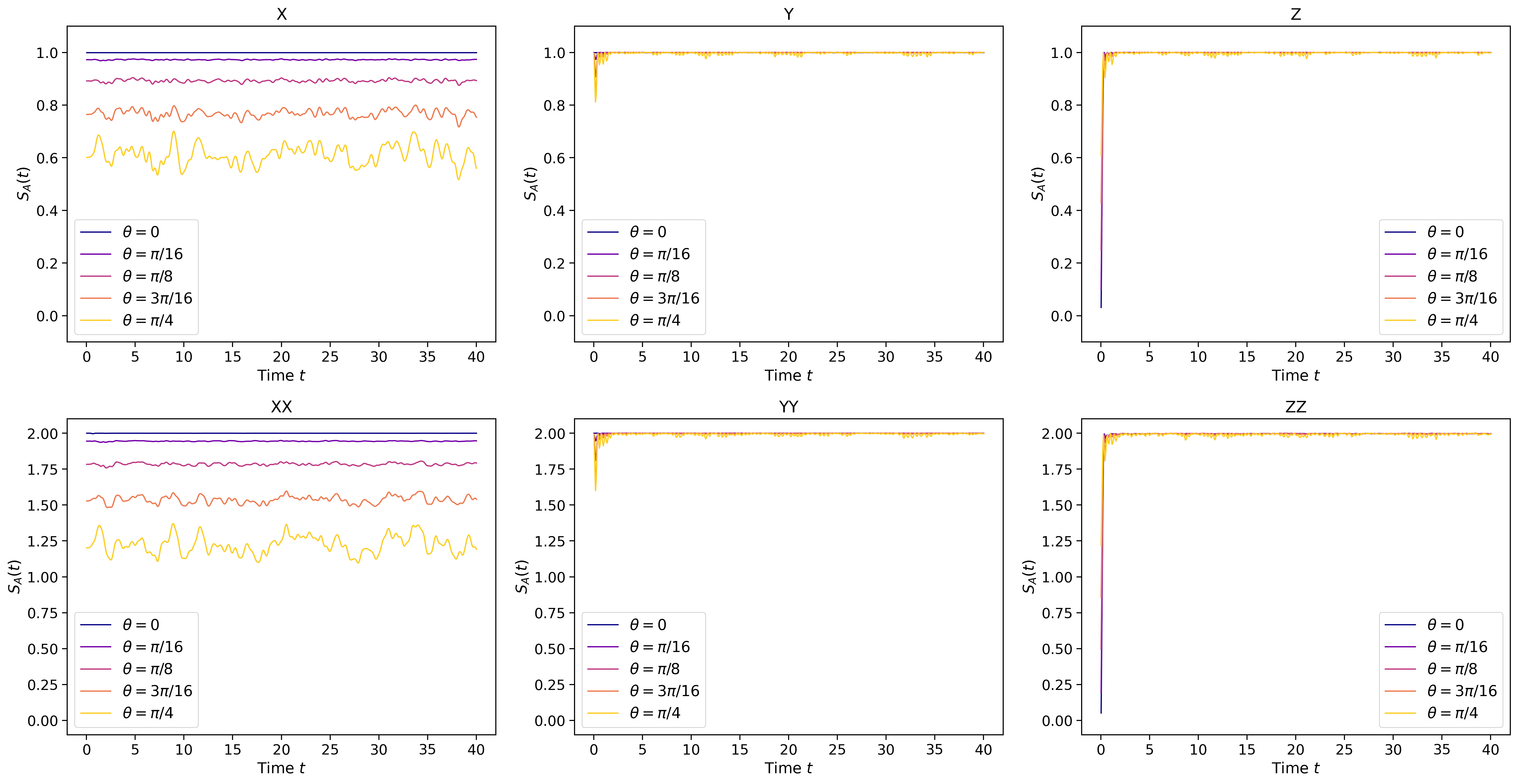}
    \caption{Time evolution of the Shannon entropy of the probability distributions $\{p_j(t)\}$ of all observables considered, for $N=16$. Each subplot corresponds to an observable, and the different colors indicate the initial state. Note for simplicity we call $X \coloneqq \sigma^x_\frac{N}{2}$, $XX \coloneqq \sigma^x_\frac{N}{2} \sigma^x_{\frac{N}{2}+1}$ etc.}
    \label{fig:shannon_all_m_all_obs}
\end{figure}

\subsubsection{Anomalous imaginary parts of weak values and the First Equilibrium Equation}
We then move on to providing more details regarding the anomalous imaginary parts of weak values observed for the model, initial states and observables considered. As noted in the main text, we witness this non-classical behavior at equilibrium only between $y$-observables and the charge $Q^z$, or viceversa between $z$-observables and the charge $Q^y$. This is linked to the distinct behavior of $x$-observables just noted above, given that $[\sigma^y, \sigma^z] \propto \sigma^x$. Moreover, for two-body observables we see this behavior only for eigenvalues $00$ and $11$, not for $01$ or $10$. 

Figure~\ref{fig:Im_WV_EE1_all_m_all_anomalous} shows all the anomalies found numerically for the setup presented in Section~\ref{sec:numerical_results} (or see Section~\ref{app_subsec:analytical_calculations_non-integrable_XXX_model} of this Appendix), for $N=16$. Every plot corresponds to a combination of observable, eigenvalue and charge for which we observed an anomaly. We plot the dimensionless quantity $\frac{\Im[Q^a_w \left(\rho_\infty, A_j \right)]}{||Q^a||_\textrm{op}}$, i.e., the imaginary part of the weak value of the charge $Q^a$ at equilibrium normalized by the operator norm of the same charge $||Q^a||_\textrm{op} = 1$, against the initial state parametrized by the angle $\theta$. In the inset, we show that the corresponding First Equilibrium Equation is still well respected despite this anomalous behavior, due to the fact that we have to consider also the size of the Lagrange multipliers and the other terms in the sum. To show it, we study the left-hand side of the First Equilibrium Equation, namely, 
\begin{equation} \label{eqn_app:EE1_LHS}
    \mathcal{E}_A(t, j) := \mathcal{E}_A(\rho(t), j) := \beta_A \Im[E_w (\rho(t), j)] + \sum_a \mu^a_A \Im[Q^a_w (\rho(t), j)].
\end{equation}
We look at the $\mathcal{E}_A(t, j)$ corresponding to the same observable's eigenspace considered in the main plot. We then plot both the absolute value of its temporal mean $|\overline{\mathcal{E}_A(t, j)}|$, and its fluctuations around equilibrium quantified by the mean absolute deviation (MAD) $\overline{|\mathcal{E}_A(t, j) - \overline{\mathcal{E}_A(t, j)}|}$. As both are small in all cases, we conclude these non-classical effects do not lead to significant enough violations of the First Equilibrium Equation to affect the framework. Nevertheless, it remains possible that for other models, initial states or observables this behavior might indeed lead to a larger violation. \\
We further investigate that indeed the First Equilibrium Equation is well respected in all cases considered by looking at the histograms of the values of the absolute value of the temporal mean $|\overline{\mathcal{E}_A(t, j)}|$ and of the MAD $\overline{|\mathcal{E}_A(t, j) - \overline{\mathcal{E}_A(t, j)}|}$ for all initial states and observables' eigenspaces, for $N=16$. As is shown in Fig.~\ref{fig:Confirming_EE1}, these histograms are sharply peaked around zero, hence confirming that there is no violation of this equilibrium equation. 

\begin{figure}
    \centering
    \includegraphics[width=\linewidth]{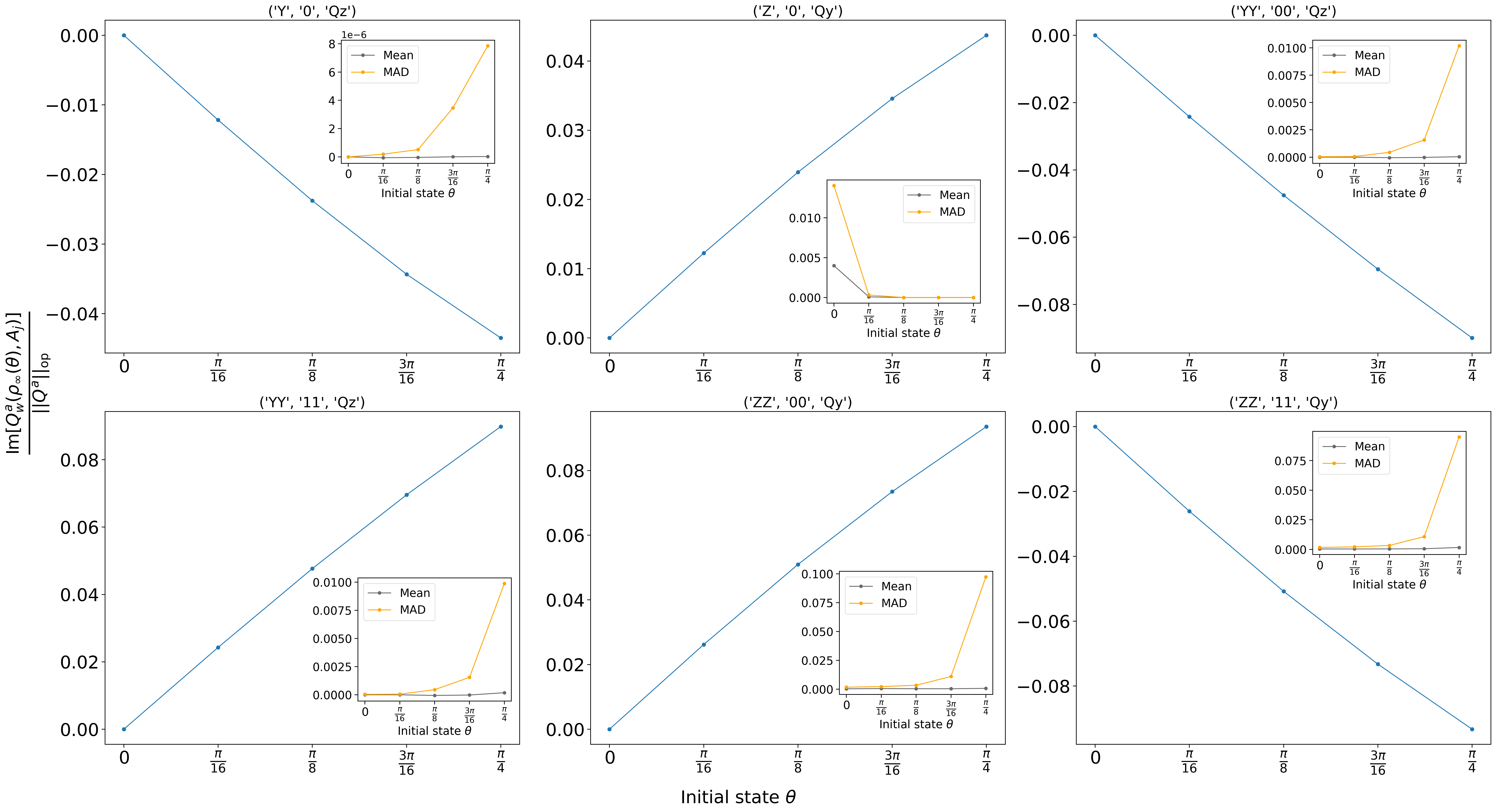}
    \caption{
    Here we show all anomalous imaginary parts of weak values found in the numerical simulations described in Section \ref{subsec:numerics_model_methods}. We observed anomalies between $y$-observables and $Q^z$, or viceversa between $z$-observables and $Q^y$. Moreover, for two-body observables we only see this behavior for eigenvalues $00$ and $11$. The titles of each subplot represent the observable $A$, eigenvalue $j$ and charge $Q^a$ considered in the corresponding subplot. Note for simplicity we call $Y \coloneqq \sigma^y_\frac{N}{2}$, $YY \coloneqq \sigma^y_\frac{N}{2} \sigma^y_{\frac{N}{2}+1}$ etc. Every main plot shows the imaginary part of the weak value of the charge $Q^a$ at equilibrium normalized by $||Q^a||_\textrm{op} = 1$, $\frac{\Im[Q^a_w \left(\rho_\infty(\theta), A_j \right)]}{||Q^a||_\textrm{op}}$, plotted against the initial state parametrized by the angle $\theta$, for $N=16$. This quantity is computed with respect to the eigenprojector $A_j$, which projects onto the eigensubspace corresponding to outcome $j$ of the observable $A$. Each inset concerns the first Equilibrium Equation corresponding to the same eigensubspace of the same observable. We plot the absolute value of its temporal mean, and mean absolute deviation (MAD) against the initial state to show that the corresponding Equilibrium Equation is still respected despite the anomaly. }
    \label{fig:Im_WV_EE1_all_m_all_anomalous}
\end{figure}

\subsubsection{The Second Equilibrium Equation}
Further, we also confirm that the Second Equilibrium Equation is respected. In Section~\ref{subsec:generalized_EEs_solution}, we determined that the most general solution of the Second Equilibrium Equation Eq.~\eqref{eqn:EE2} is given by Eq.~\eqref{eqn:general_solution_EEs}. Plugging it back into the equilibrium equation gave us the conditions $R_j = \varepsilon_j \, p_j$ and $\{R_j^a = q^a_j \, p_j\}_a$, which define $\varepsilon_j$ and $q^a_j$. Of course our state will never be truly stationary, so strictly speaking equilibrium is never reached. Nonetheless, for observables that equilibrate in the usual sense of ``on average", we expect the equilibrium equations to hold approximately after a sufficiently long time has passed. So we will have $R_j(t) \approx \varepsilon_j \, p_j(t)$ and $\{R_j^a(t) \approx q^a_j \, p_j(t)\}_a$. We now want to check that these equations are indeed respected. By the definition of $\varepsilon_j$ we have that $\overline{R_j(t)} = \varepsilon_j \overline{p_j(t)}$, and similarly for $R_j^a$, so we are left with looking at the time average of the fluctuations, quantified by the mean absolute deviation $\overline{|R_j(t)-\varepsilon_j p_j(t)|}$. In Fig.~\ref{fig:Linearity_of_pjRj}, we plot the histogram of these fluctuations for all initial states, observables and (independent) eigenvalues. We see that it is indeed sharply peaked at zero, meaning that the Second Equilibrium Equation is well respected. In the inset, we give an example of the evident linearity between $R_j(t)$ and $p_j(t)$.

\begin{figure}[ht]
    \centering
    \subfloat[Histogram of the absolute value of the temporal average of the left-hand side of the First Equilibrium Equation $\mathcal{E}_A(t, j)$, given in Eq.~\eqref{eqn_app:EE1_LHS}. The inset shows the histogram of the temporal average of the absolute value of the fluctuations of the First Equilibrium Equation, i.e., $\overline{|\mathcal{E}_A(t, j) - \overline{\mathcal{E}_A(t, j)}|}$. Note the significantly small range of values on the $x$-axis for both histograms. \label{fig:Confirming_EE1}]{\includegraphics[width=0.45\textwidth]{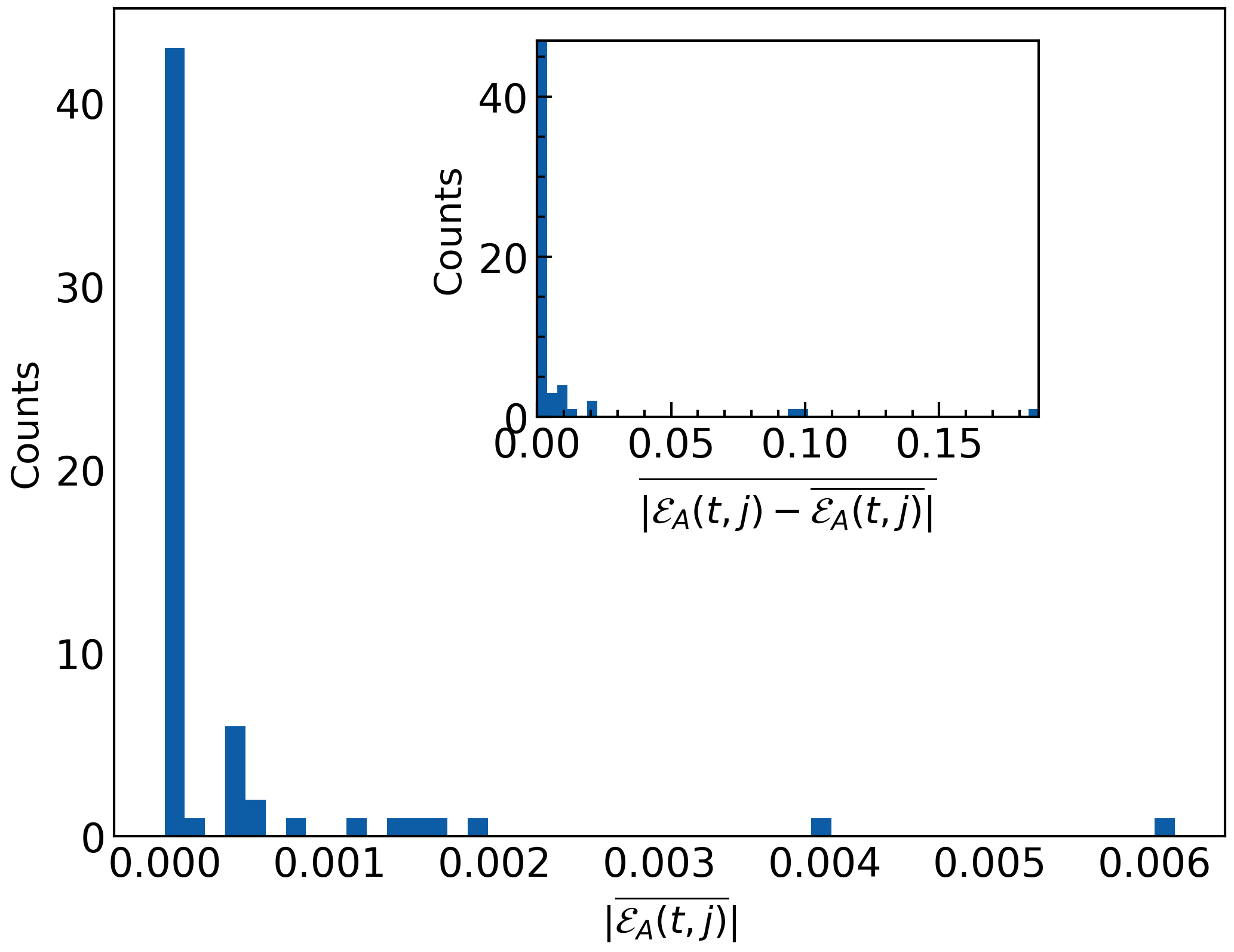}} \hspace{1cm}
    \subfloat[ Histogram of the time average of the absolute value of the fluctuations of the Second Equilibrium Equation. The inset is a time-implicit plot $(p_j(t), R_j(t))$ for the eigenvalue $j=00$ of the observable $\sigma^y_\frac{N}{2} \sigma^y_{\frac{N}{2}+1}$, initial state $\theta = \pi/4$ and $N=16$.\label{fig:Linearity_of_pjRj}] {\includegraphics[width=0.45\textwidth]{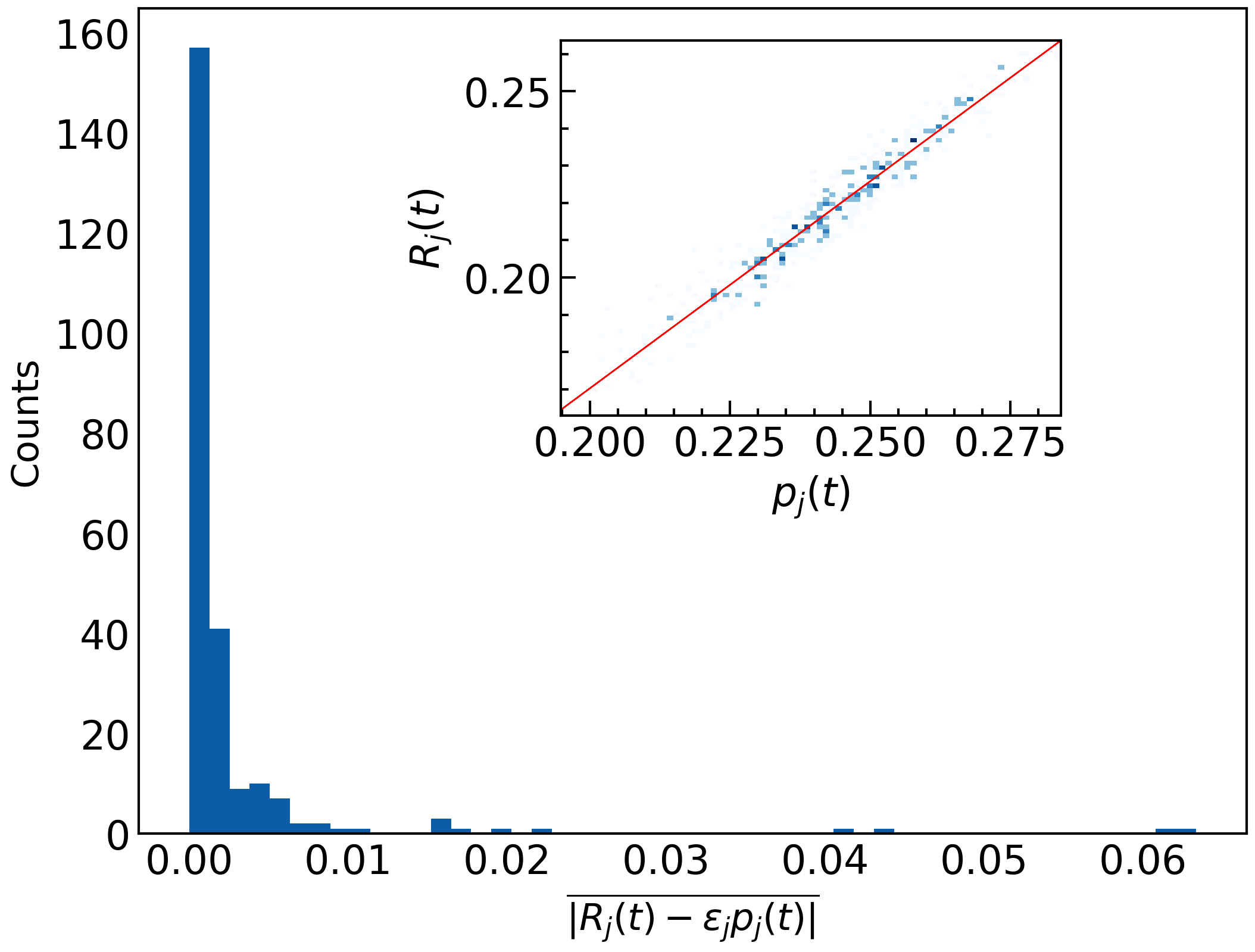}}
    \caption{We confirm the Equilibrium Equations are respected for all initial states, observables and (independent) eigenvalues considered, for $N=16$.} \label{fig:confirming_equilibrium_equations}
\end{figure}

\subsubsection{Estimating the equilibrium probability distribution}
Finally, we give further details about our estimates of the observables' probability distributions at equilibrium. 
In the main text, we studied how close our predictions $p_j^\textrm{est}$ are to the actual equilibrium value of the probability distribution, i.e., to the temporal average $\overline{p_j(t)}$. This was quantified using the total variation distance $D(\overline{p_j(t)}, p^\textrm{est})$. We showed that our predictions indeed work very well and they are always under $1\%$ error. We also compared our predictions to the estimates made using the non-Abelian thermal state (NATS). By plotting the relative improvement $\Delta_r := 1 - \frac{D(\overline{p_j(t)}, \, p_j^\textrm{est})}{D(\overline{p_j(t)}, \, p_j^\textrm{NATS})}$, we showed that for two-body observables our predictions are better than the ones made using the NATS. 

To complement the plots in the main text, in Fig.~\ref{fig:Histogram_CDF_TVD_pred_vs_DE_with_charges_and_NATS} we show histograms of the total variation distance between the temporal average and respectively our predictions (left subplot) and the predictions based on the NATS (right subplot). We also plot (in red) the cumulative distribution function in both cases. We highlight that for our predictions the error is always under $1\%$, while when using the NATS it is under $1\%$ in about $60\%$ of cases. 

In Fig.~\ref{fig:TVD_pred_vs_DE_absolute_improvement_nats_vs_with_charges_onetwo} we plot the absolute improvement (i.e., the difference) between the total variation distances computed in both cases $D(\overline{p_j(t)}, p_j^\textrm{NATS}) - D(\overline{p_j(t)}, p_j^\textrm{est})$ against the initial state parametrized by the angle $\theta$. Here the first total variation distance corresponds to the prediction based on the NATS, while the second one to our estimate. From Fig.~\ref{fig:TVD_pred_vs_DE_absolute_improvement_nats_vs_with_charges_one}, where this difference is plotted for one-body observables, it is clear that there is no advantage in using our estimate compared to the one using the NATS in this case. 
This is different for two-body observables, shown in Fig.~\ref{fig:TVD_pred_vs_DE_absolute_improvement_nats_vs_with_charges_two}, where the improvement from switching to our estimate can be significant. 
The plots shown so far have all concerned the $N=16$ spin case, so we also compare the framework's predictions across the system sizes considered, namely, $N=12, 14, 16$. In Fig.~\ref{fig:CDF_all_L_TVD_pred_vs_DE_with_charges_and_NATS_cdf=manual} we plot the cumulative distribution function of the total variation distance for all three system sizes, for both our estimates (left) and those based on the NATS (right). In both cases we see an improvement with system size, but our estimate does well regardless of the system size. For instance, note that for $N=12$ the total variation distance is under $0.01$ in around $90\%$ of cases. 

\begin{figure}
    \centering
    \includegraphics[width=\linewidth]{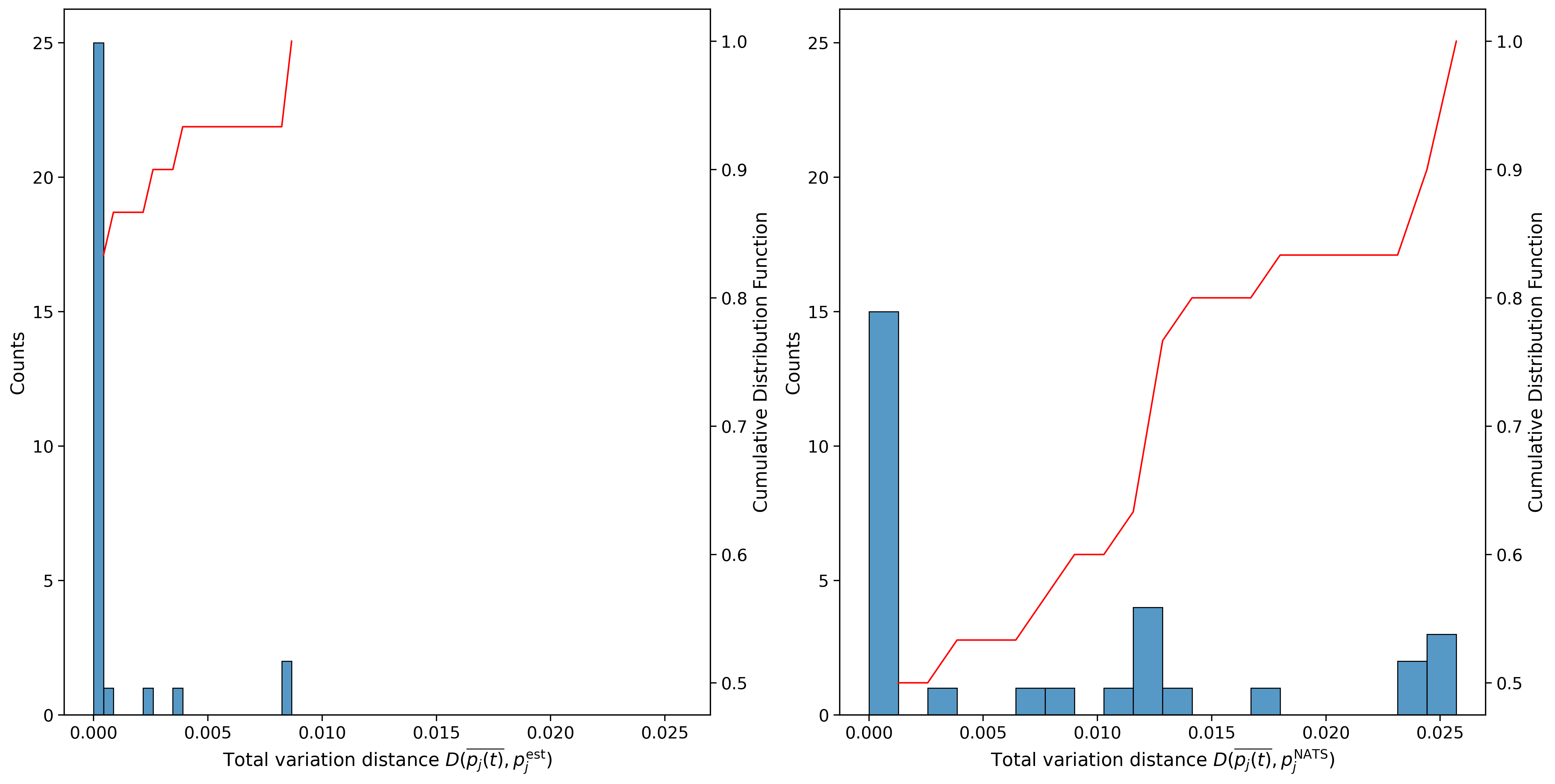}
    \caption{In blue: histogram of the total variation distance $D(\overline{p_j(t)}, p_j^\textrm{est})$ (left) and $D(\overline{p_j(t)}, p_j^\textrm{NATS})$ (right). Here $p_j^\textrm{est}$ is our prediction, $p_j^\textrm{NATS}$ is the prediction made using the NATS, and $\overline{p_j(t)}$ is the true value of the observable's probability distribution at equilibrium (the temporal average). Both plots include the data for all initial states and observables, and for $N=16$. In red: the cumulative distribution functions of the same quantities, respectively. Note the corresponding $y$-axis starts around $0.5$.}
    \label{fig:Histogram_CDF_TVD_pred_vs_DE_with_charges_and_NATS}
\end{figure}

\begin{figure}[ht]
    \centering
    \subfloat[One-body observables. \label{fig:TVD_pred_vs_DE_absolute_improvement_nats_vs_with_charges_one}]{\includegraphics[width=0.45\textwidth]{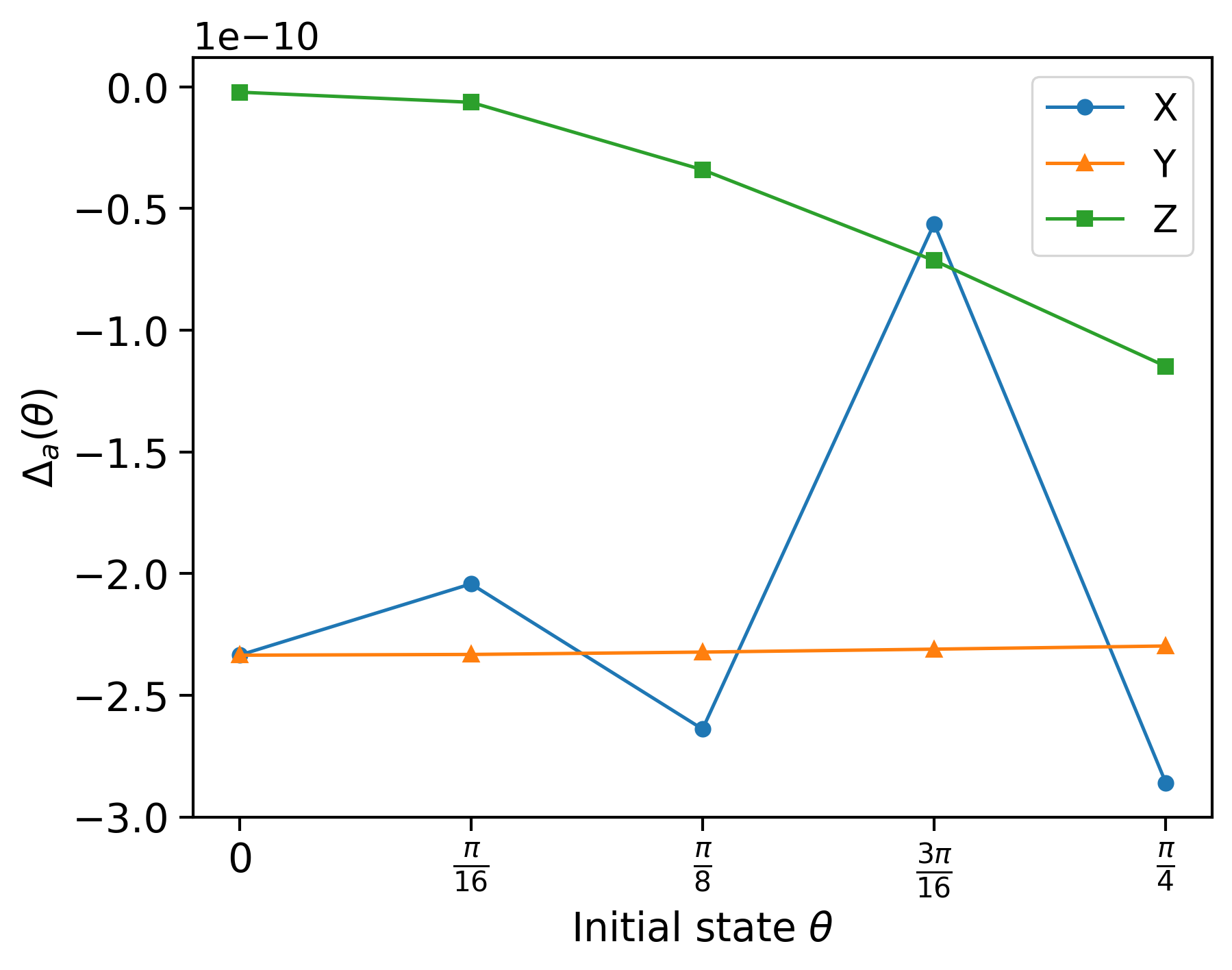}} \hspace{1cm}
    \subfloat[Two-body observables. \label{fig:TVD_pred_vs_DE_absolute_improvement_nats_vs_with_charges_two}] {\includegraphics[width=0.45\textwidth]{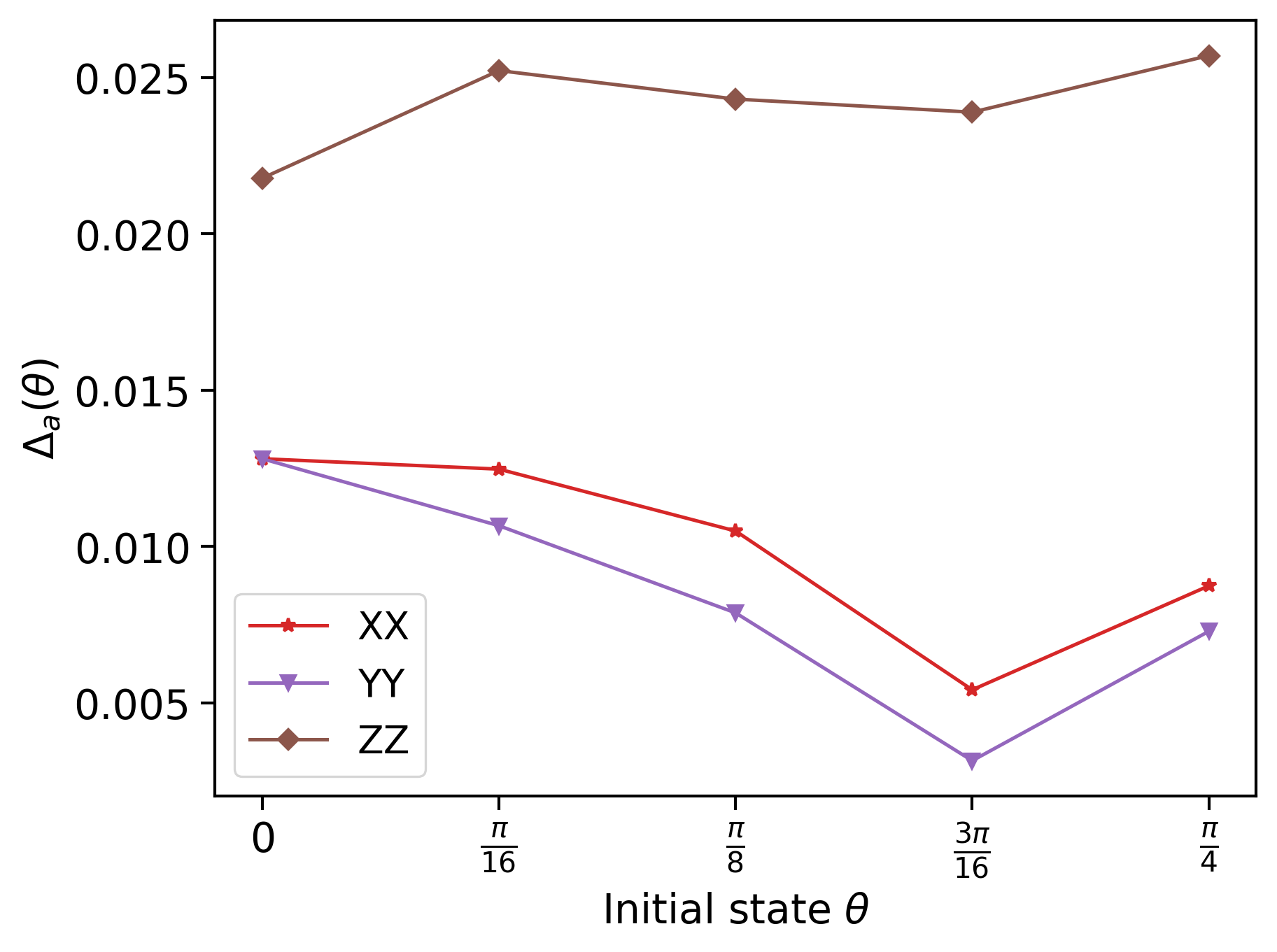}}
    \caption{The difference (absolute improvement) in total variation distance (TVD) $\Delta_a := D(\overline{p_j(t)}, p_j^\textrm{NATS}) - D(\overline{p_j(t)}, p_j^\textrm{est})$ plotted against the initial state parametrized by the angle $\theta$, for (a) one-body and (b) two-body observables and for $N=16$. This quantifies the improvement when switching from the prediction based on the NATS (with TVD $D(\overline{p_j(t)}, p_j^\textrm{NATS})$) to our estimate (with TVD $D(\overline{p_j(t)}, p_j^\textrm{est})$). Note for simplicity we call $X \coloneqq \sigma^x_\frac{N}{2}$, $XX \coloneqq \sigma^x_\frac{N}{2} \sigma^x_{\frac{N}{2}+1}$ etc. We also highlight that the range of the y-axis in subplot (a) is extremely small.} \label{fig:TVD_pred_vs_DE_absolute_improvement_nats_vs_with_charges_onetwo}
\end{figure}

\begin{figure}
    \centering
    \includegraphics[width=\linewidth]{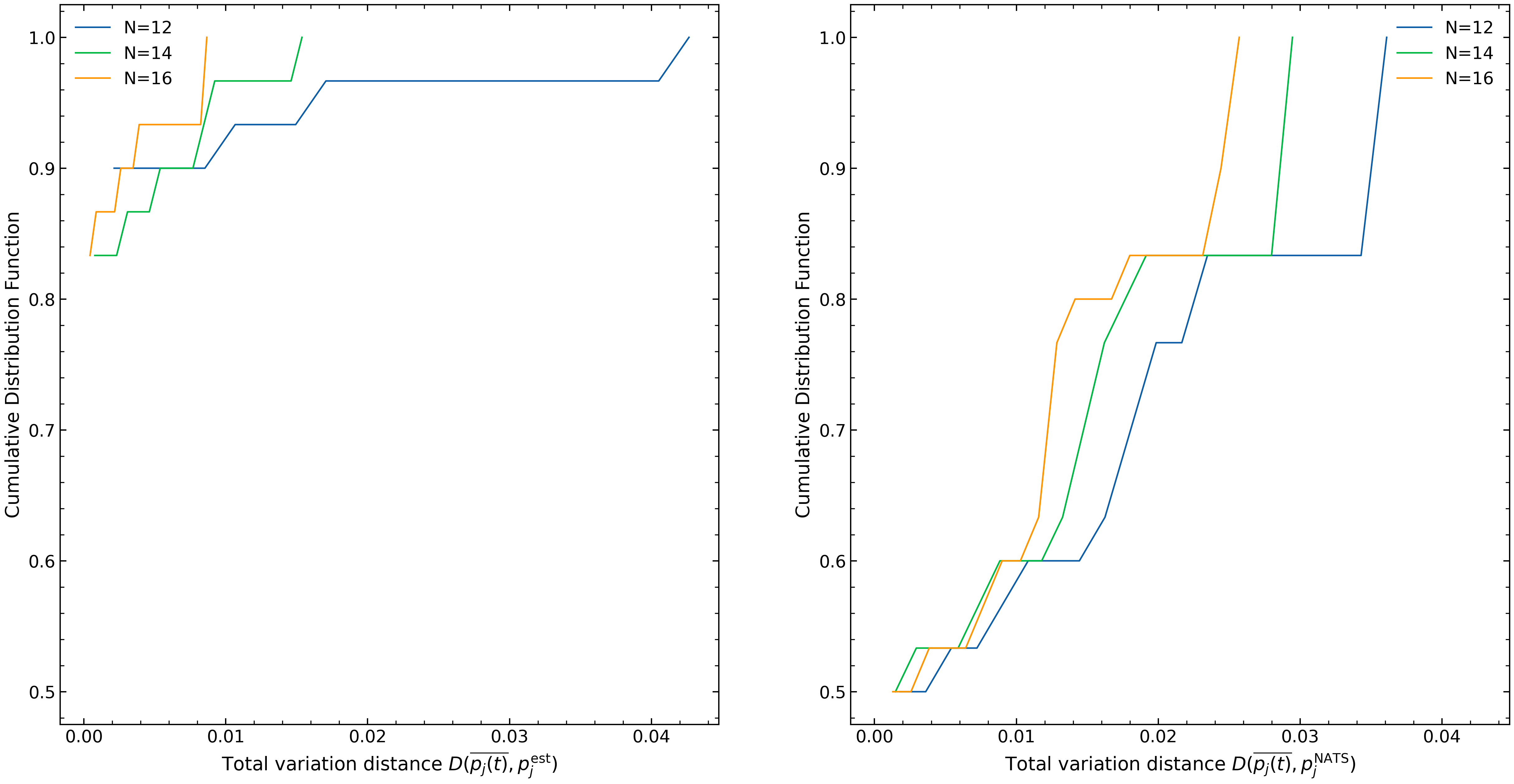}
    \caption{The cumulative distribution function of the total variation distance $D(\overline{p_j(t)}, p_j^\textrm{est})$ (left) and $D(\overline{p_j(t)}, p_j^\textrm{NATS})$ (right). Here $p_j^\textrm{est}$ is our prediction, $p_j^\textrm{NATS}$ is the prediction made using the NATS, and $\overline{p_j(t)}$ is the true value of the observable's probability distribution at equilibrium (the temporal average). Both plots include the data for all initial states and observables studied. Each colored curve corresponds to one of the system sizes considered, namely, $N=12, 14, 16$. Note the reduced range of the $y$-axis, starting around $0.5$.}
    \label{fig:CDF_all_L_TVD_pred_vs_DE_with_charges_and_NATS_cdf=manual}
\end{figure}

\FloatBarrier
\subsection{Large parameter space exploration in search of anomalous weak values} \label{app_subsec:reWV_param_space_exploration}
In Section \ref{subsec:numerics_anomalous_weak_values} of the main text, we discussed finding anomalies in the imaginary parts of the charges' weak values, but not observing any anomalies in the real part. We remind the reader that a ``standard" expectation value $\Tr(\rho O)$ of an observable $O$ on a state $\rho$ is real and cannot lie outside the spectrum of $O$. Instead, a weak value is a kind of generalized expectation value that can be anomalous, in the sense that it can violate these two conditions: it can have non-zero imaginary part and can take values outside of the range of the observable's spectrum \cite{arvidsson-shukur_properties_2024}. We thus conducted a large parameter space exploration to search for other anomalies. While we were able to find several anomalies in the imaginary parts, we never witnessed any violation in the real parts. It is unclear to us why this is the case, and it would be interesting to explore other models with different non-Abelian symmetries, as well as other parameter regimes, to see whether we can indeed find anomalies also in the real parts of the weak values. Here we present the details of this analysis and our results. \\

Consider the family of SU(2)-symmetric Hamiltonians given by 
\begin{equation}
    H = \sum_{a=x,y,z} \left\{ \sum_{i=1}^{N-1} J_i \sigma^a_i \sigma^a_{i+1} + \sum_{i=1}^{N-2} K_i \sigma^a_i \sigma^a_{i+2} \right\},
\end{equation}
where $i$ is an index that runs over the $N=6$ lattice sites, and $\sigma^a_i$ ($a=x,y,z$) represents the Pauli operator with Pauli matrix $\sigma^a$ acting on the lattice site $i$, i.e., $\sigma^a_i \coloneqq \mathbb{I}^{\otimes (i-1)} \otimes \sigma^a \otimes \mathbb{I}^{\otimes(N-i)}$. Note we are using open boundary conditions. The interaction coefficients are $J_i = 1 + \epsilon \, \delta_{i,3}$ and $K_i = K + \epsilon \, \delta_{i,3}$, where for our analysis we consider $K \in \{0, 0.1, 1\}$ and $\epsilon \in \{0, 0.3\}$.\\
Due to the SU(2)-symmetry of the model, we study the following three non-commuting charges:
\begin{equation}
    Q^a := \frac{1}{N} \sum_{i=1}^N \sigma^a_i \quad \forall a \in \{x,y,z\}.
\end{equation}
We consider the most general family of initial states possible by rotating the antiferromagnetic state along $z$, $\ket{01\ldots01}$, via a unitary parametrized by three Euler angles $\mathcal{R}(\alpha, \beta, \gamma)$ with $\alpha, \gamma \in [0, 2\pi)$ and $\beta \in [0, \pi]$. We consider two kinds of rotations: an SU(2)-symmetric one , which maps the starting state to another state with the same average energy, and an SU(2)-breaking one, which can map it to different average energy subspaces. These are respectively given by
\begin{gather} 
 \mathcal{R}^\textrm{symm}(\alpha, \beta, \gamma) := \bigotimes_{i=1}^{N} R_x(\alpha) R_y(\beta) R_x(\gamma) \\
 \mathcal{R}^\textrm{break}(\alpha, \beta, \gamma) := \bigotimes_{i=1}^{N/2} \left[ R_x(\alpha) R_y(\beta) R_x(\gamma) \otimes R_x(-\gamma) R_y(-\beta) R_x(-\alpha)\right]
\end{gather}
where $R_x(\theta) = \exp(-iX\theta/2)$ and $R_y(\theta) = \exp(-iY\theta/2)$ are respectively the rotation operator along the $x$ and $y$ axis. We consider four values of $\alpha$ and $\gamma$, namely, $\alpha, \gamma \in \left\{ 0, \frac{\pi}{2}, \pi, \frac{3\pi}{2} \right\}$ and five values for the remaining angle, $\beta \in \left\{ 0, \frac{\pi}{4}, \frac{\pi}{2}, \frac{3\pi}{4}, \pi \right\}$. For each combination of these angles, we have an SU(2)-symmetric and an SU(2)-breaking initial state:
\begin{gather}
    |\psi_0^\textrm{symm}(\alpha, \beta, \gamma) \rangle = \mathcal{R}^\textrm{symm}(\alpha, \beta, \gamma) \ket{01\ldots01}, \\
    |\psi_0^\textrm{break}(\alpha, \beta, \gamma) \rangle = \mathcal{R}^\textrm{break}(\alpha, \beta, \gamma) \ket{01\ldots01}.
\end{gather}

As we are interested in the equilibrium behavior of the weak values, we construct the exact stationary state $\rho_\infty = \sum_n \Pi_n |\psi_0(\alpha, \beta, \gamma) \rangle\langle \psi_0(\alpha, \beta, \gamma)| \Pi_n$. To compute it, we first diagonalize the Hamiltonian. Then, we transform $\rho_0(\alpha, \beta, \gamma) := |\psi_0(\alpha, \beta, \gamma)\rangle \langle \psi_0(\alpha, \beta, \gamma)|$ into the energy basis and set to zero all off-diagonal elements connecting different energy eigensubspaces. We consider all eigenvalues $E_n, E_m$ such that $|E_n - E_m| \leq \tau$ to be equal, where the tolerance $\tau$ is chosen as follows. First, we estimate the diagonalization error, quantified by the Frobenius-norm residual $\delta_\textrm{diag} \;=\;\bigl\|\,U^\dagger H\,U - \mathrm{diag}(\{E_n\})\bigr\|_F$, where $U$ is the unitary diagonalizing $H$ constructed with the eigenvectors, and $\mathrm{diag}(\{E_n\})$ is a diagonal matrix whose entries are the energy eigenvalues. Then, we sort the energy eigenvalues in increasing order and compute all differences $\Delta_n := E_{n+1} - E_n$. We select all such non-zero differences, take their logarithm $\log_{10} \Delta_n$ and again sort them in increasing order. Next, we identify the largest ``jump" in these log-gaps. This separates the gaps due to noise from the larger physical level spacings. We then consider the log-gap just below the jump and return to linear space: $\Delta_n^\textrm{jump}$. We multiply this gap by a factor of 10 to avoid any floating-point errors. Moreover, to make sure this is greater than the diagonalizing error ($\sim 10^{-13}$), we take the maximum value between the two, i.e., $\tau = \max(\delta_\textrm{diag}, 10 \Delta_n^\textrm{jump})$. Finally, we rotate back the block-decohered initial state into its original basis. $\rho_\infty$ is now stable by construction, in the sense that changing the tolerance $\tau$ does not affect the grouping of energy eigenstates because of the large difference between gaps due to noise and those due to actual physical separation. We tested this and confirmed the state does not change when increasing the tolerance by several orders of magnitude (we increased it by factors of 10 to $10^5$). \\

Similarly to what we did for the initial states, we also consider a general observable parametrized by three Euler angles. We take as a starting observable the Pauli operator $\sigma^z_i\coloneqq \mathbb{I}^{\otimes (i-1)} \otimes \sigma^z \otimes \mathbb{I}^{\otimes(N-i)}$ acting non-trivially on the lattice site $i=\frac{N}{2}+1$. We then apply a local rotation $\mathcal{R}^\textrm{loc}(\alpha, \beta, \gamma) := R_x(\alpha) R_y(\beta) R_x(\gamma)$ to obtain an observable 
\begin{equation}
    O(\alpha, \beta, \gamma) \coloneqq \mathbb{I}^{\otimes (i-1)} \otimes \mathcal{R}^\textrm{loc}(\alpha, \beta, \gamma) \, \sigma^z \, \left( \mathcal{R}^\textrm{loc}(\alpha, \beta, \gamma) \right)^\dagger \otimes \mathbb{I}^{\otimes(N-i)}.
\end{equation}
We then construct its two coarse-grained eigenprojectors given by $A_j(\alpha, \beta, \gamma) = \frac{\mathbb{I} + (-1)^{j} O(\alpha, \beta, \gamma)}{2}$ with $j \in \{0,1\}$. We consider the same values of the Euler angles as for the initial states. \\

To summarize, we are considering 3 values of $K$, 2 values of $\epsilon$, $4 \times 5 \times 4 = 80$ combinations of Euler angles for both the initial state and the observable (separately), and we have 2 eigenprojectors and 3 charges. Hence, for each type of initial state rotation (SU(2) symmetric or SU(2) breaking), we are calculating $3 \times 2 \times 80 \times 80 \times 2 \times 3 = 230,400$ weak values; or $460,800$ in total. \\
For each of these cases, we compute the weak value
\begin{equation}
    Q^a_w(\rho_\infty, A_j) = \frac{\Tr\left( A_j Q^a \rho_\infty\right)}{\Tr(A_j \rho_\infty)}.
\end{equation}
(Although note we exclude cases in which $p_j^\infty := \Tr(A_j \rho_\infty)$ is smaller than the diagonalization error, which is of the order of $10^{-13}$ for $N=6$, because they are essentially indistinguishable from zero. This excludes 1,440 probabilities, or $1,440 \times 3= 4,320$ cases out of 460,800, given we have three charges. This happens only for SU(2)-breaking rotations of the initial state.) We then study the real and imaginary parts of the weak value. If the absolute value of the imaginary part is greater than $10^{-2}$, we consider it an anomaly. Similarly, we check whether the absolute value of the real part is greater than $1 + 10^{-2}$ since the charges' spectra range from $-1$ to $1$. Note that we tried other values for the threshold, from $10^{-13}$ to $10^{-2}$, and the results were unchanged.\\
We observe 47,208 anomalies in the imaginary part of the weak value for SU(2)-breaking rotations of the initial state. That is, roughly $20\%$ of $230,400$ cases are anomalous. However, we do not observe any anomaly for SU(2)-symmetric rotations of the initial state. This is because on the antiferromagnetic state along $z$, $|01...01\rangle$ all charges have mean zero, as can also be seen in the analytical calculations: Eqs.~\eqref{eqn:charge_avg_std_x_app}, \eqref{eqn:charge_avg_std_y_app} and \eqref{eqn:charge_avg_std_z_app}. An SU(2)-symmetric rotation can change the direction of the average charge vector but not its magnitude, hence we cannot have anomalies in the whole average energy subspace of $|01...01\rangle$. Thus, in total, the anomalies constitute approximately $10\%$ of $460,800$ cases. We also could not find any anomaly in the real parts of the weak values, neither for SU(2)-breaking nor SU(2)-symmetric rotations of the initial state. It is currently unclear to us why this is the case, but we intend to investigate this further to obtain a more complete understanding of this emergence of non-classicality at equilibrium.

\subsection{Computing the non-Abelian thermal state (NATS)}
\label{app_subsec:comparing_predictions_to_nats}

\subsubsection{Analytical construction of the Gibbs ensemble and the NATS}

The Hamiltonian we have been studying is 
\begin{equation}
    H = \sum_{a=x,y,z} \left\{ \sum_{i=1}^{N-1} J_i \sigma^a_i \sigma^a_{i+1} + \sum_{i=1}^{N-2} J_i \sigma^a_i \sigma^a_{i+2} \right\},
\end{equation}
with open boundary conditions and $J_i = J + \epsilon \, \delta_{i,3}$, where $J=1$ and $\epsilon=0.3$. Due to the SU(2)-symmetry of the model, we have the following non-commuting charges:
\begin{equation}
    Q^a := \sum_{i=1}^N \sigma^a_i \quad \forall a \in \{x,y,z\}.
\end{equation} 
(Note here we are not dividing the charges by the system size $N$.)
We now want to construct the local versions of the Gibbs ensemble and the non-Abelian thermal state (NATS) for one-site and two-site subsystems, given the Hamiltonian above. These two ensembles are respectively given by 
\begin{equation}
    \rho_G := \frac{e^{-\beta H_S}}{Z_G}; \quad \qquad \rho_\textrm{NATS} := \frac{e^{-\beta H_S - \sum_a \mu^a Q^a_S}}{Z_N},
\end{equation}
where the partition functions are $Z_G := \Tr\left( e^{-\beta H_S}\right)$ and $Z_N := \Tr\left( e^{-\beta H_S - \sum_a \mu^a Q^a_S} \right)$. The inverse temperature $\beta$ and the chemical potentials $\mu$ need to be fixed by imposing the constraints of average energy and average charge conservation on the global thermal state of the system. Note we always choose subsystems with support on $i=\frac{N}{2}$ ($N$ even) and in the two-site case we also have support on $i+1=\frac{N}{2}+1$. \\

\paragraph*{One-site subsystem.---}We begin by identifying the subsystem Hamiltonian and the local charges. For one-site subsystems, we have $H_S^{(1)} = 0$ and $Q_{(1)}^a = \sigma^a$ with $a=x,y,z$, where $\sigma^a$ is a Pauli matrix. Thus, the Gibbs ensemble and the NATS take the form
\begin{equation}
    \rho_G^{(1)} = \frac{1}{2} \mathbb{I}_2; \quad \qquad \rho_\textrm{NATS}^{(1)} = \frac{e^{-\sum_a \mu^a \sigma^a}}{Z_N}.
\end{equation}
We can simplify the expression for the $\rho_\textrm{NATS}$ in the following way. Define the vector of chemical potentials $\Vec{\mu} := (\mu^x, \mu^y, \mu^z)^T$ and its magnitude $\mu \coloneqq \|\Vec{\mu}\| := \sqrt{(\mu^x)^2 + (\mu^y)^2 + (\mu^z)^2}$. The corresponding unit vector is $\hat{\Vec{n}} := \frac{\Vec{\mu}}{\mu}$. We can then rewrite $\sum_a \mu^a \sigma^a = \mu (\hat{\Vec{n}} \cdot \Vec{\sigma})$ and we note that $(\hat{\Vec{n}} \cdot \Vec{\sigma})^2 = \sum_a n_a^2 \, \mathbb{I}_2 = \mathbb{I}_2$ since $\hat{\Vec{n}}$ is a unit vector. For any operator $A$ such that $A^2 = \mathbb{I}$, we have that $e^{tA} = \cosh(t) \mathbb{I} + \sinh(t) A$, so $e^{-\mu (\hat{\Vec{n}} \cdot \Vec{\sigma})} = \cosh(\mu) \mathbb{I} - \sinh(\mu) (\hat{\Vec{n}} \cdot \Vec{\sigma})$. From this expression, we compute the partition function: $\Tr\left( e^{-\mu (\hat{\Vec{n}} \cdot \Vec{\sigma})} \right) = 2\cosh(\mu)$, so $\rho_\textrm{NATS} = \frac{1}{2} \left[ \mathbb{I}_2 - \tanh(\mu) (\hat{\Vec{n}} \cdot \Vec{\sigma}) \right]$. 
For one-site subsystems, our observables of interest are precisely the Pauli matrices $\sigma^a$, which have eigenprojectors $A_j^a := \frac{\mathbb{I}_2 + (-1)^j \sigma^a}{2}$. Defining $\Vec{A}_j := (A_j^x, A_j^y, A_j^z)^T$, we can compute the probabilities
\begin{align} \label{eqn:prob_gibbs_one-site}
    \vec{p}_j^G &= \Tr(\rho_G^{(1)} \Vec{A}_j) = \left(\frac{1}{2}, \frac{1}{2}, \frac{1}{2}\right)^T,
\\
\label{eqn:prob_nats_one-site}
    \vec{p}_j^{NATS} &= \Tr(\rho_\textrm{NATS}^{(1)} \Vec{A}_j) = \frac{1}{2} - \frac{(-1)^j}{2} \tanh(\mu) \, \hat{\vec{n}}.
\end{align}

\paragraph*{Two-site subsystem.---}For the two-site case, we have $H_S^{(2)} = J \sum_a \sigma^a \otimes \sigma^a = J \Vec{\sigma}_1 \cdot \Vec{\sigma}_2$ and $Q_{(2)}^a = \sigma^a \otimes \mathbb{I}_2 + \mathbb{I}_2 \otimes \sigma^a = \sigma_1^a + \sigma_2^a$, where $\sigma_k^a := \mathbb{I}_2^{\otimes k-1} \otimes \sigma^a \otimes \mathbb{I}_2^{\otimes 2-k}$ for $k=1,2$. 
To construct the Gibbs state, note $H_S$ is SU(2)-symmetric. In fact, it is a linear function of $S^2$: $S^2 = (\Vec{S}_1 + \Vec{S}_2)^2 = \frac{3}{2} \mathbb{I}_4 + \frac{H_S}{2J}$, where $\Vec{S}_k := \frac{1}{2} \Vec{\sigma}_k$ for $k=1,2$. Thus, $H_S = J \left( 2S^2 - 3 \mathbb{I}_4 \right)$. This means the Gibbs state must be a Werner state: $\rho_G = p \Pi_\textrm{asym} + \frac{(1-p)}{3} \Pi_\textrm{sym}$, where $\Pi_\textrm{asym}$ and $\Pi_\textrm{sym}$ project onto the anti-symmetric and symmetric subspaces respectively. In this case, the former is the singlet subspace spanned by $|\psi_-\rangle := \frac{|01\rangle - |10\rangle}{\sqrt{2}}$, i.e., $\Pi_\textrm{asym} := |\psi_-\rangle\langle \psi_-|$, while the latter is the triplet subspace spanned by $\{ |00\rangle, |\psi_+\rangle, |11\rangle \}$ with $|\psi_+\rangle := \frac{|01\rangle + |10\rangle}{\sqrt{2}}$. Indeed, $S^2 |\psi_-\rangle\langle \psi_-| = 0$ and $S^2 \Pi_\textrm{sym} = 2 \Pi_\textrm{sym}$, where we denoted the projector onto the triplet subspace as $\Pi_\textrm{sym} := |00\rangle\langle 00| + |\psi_+\rangle\langle \psi_+| + |11\rangle\langle 11|$. So $H_S = J (4 \Pi_\textrm{sym} - 3 \mathbb{I}_4) = - 3J\, \Pi_\textrm{asym} + J \, \Pi_\textrm{sym}$. Thus, exponentiating and imposing state normalization we get
\begin{equation} \label{eqn:rho_gibbs_two-site_app}
    \rho_G^{(2)} = \frac{e^{3\beta J}}{e^{3\beta J} + 3 e^{-\beta J}} \Pi_\textrm{asym} + \frac{e^{-\beta J}}{e^{3\beta J} + 3 e^{-\beta J}} \Pi_\textrm{sym},
\end{equation}
which is indeed of the Werner form with $p=\frac{e^{3\beta J}}{e^{3\beta J} + 3 e^{-\beta J}}$. The partition function is $Z_G^{(2)} = e^{3\beta J} + 3 e^{-\beta J}$. From $\mathbb{I}_4 = \Pi_\textrm{asym} + \Pi_\textrm{sym}$ and $H_S = - 3J\, \Pi_\textrm{asym} + J \, \Pi_\textrm{sym}$, we get $\Pi_\textrm{asym} = \frac{1}{4} \left[ \mathbb{I}_4 - \frac{H_S}{J} \right]$ and $\Pi_\textrm{sym} = \frac{1}{4} \left[ 3\mathbb{I}_4 + \frac{H_S}{J} \right]$. Substituting these expressions into Eq.~\ref{eqn:rho_gibbs_two-site_app}, we can rewrite it as 
\begin{equation}
    \rho_G^{(2)} = \frac{1}{4} \left[ \mathbb{I}_4 + \frac{1 - e^{4\beta J}}{3 + e^{4\beta J}} \frac{H_S}{J}\right].
\end{equation}
For the two-site subsystem, the observables we consider are $\sigma^a \otimes \sigma^a$. These have eigenprojectors
\begin{equation} \label{eqn:obs_2-site_eigenproj}
    A^a_{jk} = \left(\frac{\mathbb{I}_2 + (-1)^j \sigma^a}{2} \right) \otimes \left(\frac{\mathbb{I}_2 + (-1)^k \sigma^a}{2} \right) = \frac{1}{4} \left\{ \mathbb{I}_4 + (-1)^j [\sigma^a \otimes \mathbb{I}_2 + (-1)^{j+k} \, \mathbb{I}_2 \otimes \sigma^a ]  + (-1)^{j+k} \, \sigma^a \otimes \sigma^a \right\}.
\end{equation}
We now compute the corresponding probabilities $p_{jk}^a := \Tr(A^a_{jk} \rho_G)$ for the Gibbs state. Given $\Tr\left(\rho_G (\sigma^a \otimes \mathbb{I}_2)\right) = \Tr\left(\rho_G (\mathbb{I}_2 \otimes \sigma^a)\right) = 0$ and $\Tr\left(\rho_G (\sigma^a \otimes \sigma^a)\right) = \frac{1 - e^{4\beta J}}{3 + e^{4\beta J}}$, we find
\begin{equation} \label{eqn:prob_gibbs_two-site}
    \vec{p}_{jk}^G = \Tr\left(\rho_G \vec{A}_{jk}\right) = \frac{1}{4} \left[ 1 + (-1)^{j+k} \frac{1 - e^{4\beta J}}{3 + e^{4\beta J}} \right] \left( 1, 1, 1 \right)^T,
\end{equation}
where $\vec{A}_{jk} = (A^x_{jk}, A^y_{jk}, A^z_{jk})^T$. \\

We now proceed to construct the two-site NATS. In this case, we have the effective Hamiltonian $H_\textrm{eff} := \beta H_S + \Vec{\mu} \cdot \Vec{Q}_{(2)} = \beta J \left( 2S^2 - 3 \mathbb{I}_4 \right) + 2 \mu S_n = -3\beta J \Pi_\textrm{asym} + \beta J \Pi_\textrm{sym} + 2 \mu S_n$, where $S_n := \hat{\Vec{n}} \cdot \Vec{S}$, $\Vec{S} := \Vec{S}_1 + \Vec{S}_2$ is the total spin vector and $\Vec{Q}_{(2)} = \left(Q^x_{(2)}, Q^y_{(2)}, Q^z_{(2)} \right)^T$. $H_\textrm{eff}$ has eigenstates $|s, m\rangle$, where $s \in \{0, 1\}$ is the total spin and $m \in \{ -s, \ldots, s\} = \{-1, 0, 1\}$ are the eigenvalues of $S_n := (\hat{\Vec{n}} \cdot \Vec{S})$. So we have $S^2 |s, m\rangle = s(s+1) |s, m\rangle$ and $S_n |s, m\rangle = m |s, m\rangle$. From this it follows that the spectral decomposition of $S_n$ is $S_n = |1, 1 \rangle\langle 1, 1| - |1, -1 \rangle\langle 1, -1|$, and hence $S_n^3 = S_n$. Taylor expanding $e^{-2\mu S_n}$, reducing all powers $S_n^k$ for $k\geq 3$ and collecting terms, we get $e^{-2\mu S_n} = \mathbb{I}_4 - \sinh(2\mu) S_n + (\cosh(2\mu) - 1) S_n^2$. Thus, 
\begin{align}
    e^{-H_\textrm{eff}} &= e^{-\beta H_S} e^{-2\mu S_n} = \left( e^{3\beta J} \Pi_\textrm{asym} + e^{-\beta J} \Pi_\textrm{sym} \right) \left[ \mathbb{I}_4 - \sinh(2\mu) S_n + (\cosh(2\mu) - 1) S_n^2 \right] = \\
    &= e^{3\beta J} \Pi_\textrm{asym} + e^{-\beta J} \left[ \Pi_\textrm{sym} - \sinh(2\mu) S_n + (\cosh(2\mu) - 1) S_n^2 \right].
\end{align}
The partition function is given by $Z_N^{(2)} = \Tr(e^{- H_\textrm{eff}}) = e^{3\beta J} + e^{-\beta J} \left[ 1 + 2 \cosh(2\mu) \right]$, where we used $\Tr(S_n^2) = 2$. Simplifying the factor of $e^{-\beta J}$, we get
\begin{equation}
    \rho_\textrm{NATS}^{(2)} = \frac{e^{4\beta J}}{\tilde{Z}_N^{(2)}} \Pi_\textrm{asym} + \frac{1}{\tilde{Z}_N^{(2)}} \left[ \Pi_\textrm{sym} - \sinh(2\mu) S_n + (\cosh(2\mu) - 1) S_n^2 \right],
\end{equation}
with $\tilde{Z}_N^{(2)} := e^{4\beta J} + 1 + 2 \cosh(2\mu)$. Note that for $\mu = 0$ we retrieve the Gibbs ensemble for the two-site subsystem.
To compute the probabilities $p_{jk}^a := \Tr(\rho_\textrm{NATS}^{(2)} A_{jk}^a)$, consider again the eigenprojectors in Eq.~\eqref{eqn:obs_2-site_eigenproj}. The only non-zero overlaps are $\Tr\left( S_n [\sigma^a \otimes \mathbb{I}_2 + (-1)^{j+k} \, \mathbb{I}_2 \otimes \sigma^a ] \right) = 4 n^a \delta_{jk}$, $\Tr(\Pi_\textrm{asym} (\sigma^a \otimes \sigma^a)) = -1$, $\Tr(\Pi_\textrm{sym} (\sigma^a \otimes \sigma^a)) = 1$ and $\Tr(S_n^2 (\sigma^a \otimes \sigma^a)) = 2 (n^a)^2$. Therefore, 
\begin{equation} \label{eqn:prob_nats_two-site}
    p_{jk}^{a, NATS} = \frac{1}{4} + \frac{(-1)^{j+k}}{4 \tilde{Z}_N^{(2)}} (1 - e^{4\beta J}) + \frac{(-1)^{j+1}}{\tilde{Z}_N^{(2)}} n^a \delta_{jk} \sinh(2\mu) + \frac{(-1)^{j+k}}{2 \tilde{Z}_N^{(2)}} (n^a)^2 (\cosh(2\mu) - 1)  .
\end{equation}

\subsubsection{Determining the inverse temperature and the chemical potentials} \label{subsubsec:determining_beta_mu}

In Eqs.~\eqref{eqn:prob_nats_one-site} and \eqref{eqn:prob_nats_two-site}, we have expressed the probability distributions of one-body and two-body observables based on the NATS in terms of the inverse temperature $\beta$ and the chemical potentials $\{\mu^a\}$. To determine these parameters, one would need to assume that the equilibrium behavior of local observables is well described by the \textit{global} NATS and then impose the global constraints of average energy and average charge conservation on it, as explained in Ref.~\cite{yunger_halpern_noncommuting_2020}. That is, $\Tr\left( \rho_\textrm{NATS}^\textrm{tot} H\right) \mbeq E$ and $\Tr\left( \rho_\textrm{NATS}^\textrm{tot} Q^a\right) \mbeq q^a$, where $\rho_\textrm{NATS}^\textrm{tot} \propto e^{-\beta H - \sum_a \mu^a Q^a}$ and $H$ and $\{Q^a\}$ are the global Hamiltonian and charges respectively. However, this is very computationally demanding, and the approximation used in Ref.~\cite{yunger_halpern_noncommuting_2020} is not applicable in our case, so we instead fit the analytical expressions we computed to the data. 
To determine the magnitude $\mu$ and unit vector $\hat{\vec{n}}$ of the vector of chemical potentials $\vec{\mu}$, we use $\vec{p}_1 - \vec{p}_0 = \tanh(\mu) \, \hat{\vec{n}}$, where $\vec{p}_j = (p^x_j, p^y_j, p^z_j)$. Thus, $\mu = \operatorname{arctanh}(|\vec{p}_1 - \vec{p}_0|)$ and $\hat{\vec{n}} = \frac{\vec{p}_1 - \vec{p}_0}{|\vec{p}_1 - \vec{p}_0|}$. Then, we use Eq.~\eqref{eqn:prob_nats_two-site} and the data for two-body observables to determine $\beta$ via numerical optimization.

\subsection{The coupling is non-weak} \label{app_subsec:non-weak-coupling}

A crucial assumption in the derivation of the local non-Abelian thermal state (NATS) is that of weak coupling. Here we provide evidence that for the model and the parameter regime considered, the coupling is not sufficiently weak under any of the definitions existing in the literature, at least for two-site subsystems. This limits the ability of the local NATS to predict the stationary behavior of local observables. One could instead take the partial trace of the global NATS. However, this requires diagonalizing the global Hamiltonian and global charges to be constructed, which rapidly becomes unfeasible due to the exponential increase of the Hilbert space with system size. \\

Given the Hamiltonian in Eq.~\eqref{eqn:hamiltonian_def_app}, for one-site subsystems the local Hamiltonian is $H_S^{(1)} = 0$ and the interaction term is $H_\textrm{int}^{(1)} = J \left[ \vec{\sigma}_{i-2} \cdot \vec{\sigma}_i + \vec{\sigma}_{i-1} \cdot \vec{\sigma}_i + \vec{\sigma}_{i} \cdot \vec{\sigma}_{i+1} + \vec{\sigma}_{i} \cdot \vec{\sigma}_{i+2} \right]$, where $i$ is the subsystem's lattice site, $\vec{\sigma}_i = (\sigma_i^x, \sigma_i^y, \sigma_i^z)^T$ is a vector of Pauli operators, and we take $J=1$. Note we are assuming for simplicity that we are sufficiently far from the ends of the spin chain such that the small perturbation $\epsilon$ does not play a role, and for all relevant lattice sites we have $J_i = J$. For two-site subsystems (with support on sites $i$ and $i+1$), we have instead $H_S^{(2)} = J \, \vec{\sigma}_i \cdot \vec{\sigma}_{i+1}$ and $H_\textrm{int}^{(2)} = J \left[ \vec{\sigma}_{i-2} \cdot \vec{\sigma}_i + \vec{\sigma}_{i-1} \cdot \vec{\sigma}_i + \vec{\sigma}_{i-1} \cdot \vec{\sigma}_{i+1} + \vec{\sigma}_i \cdot \vec{\sigma}_{i+2} + \vec{\sigma}_{i+1} \cdot \vec{\sigma}_{i+2} + \vec{\sigma}_{i+1} \cdot \vec{\sigma}_{i+3}\right]$. More generally, for a subsystem of size $n_S$ with support on sites $\{i_k\}_{k=1}^{n_S}$, $H_S^{(n_S)} = J \sum_{k=1}^{n_S-1} \vec{\sigma}_{i_k} \cdot \vec{\sigma}_{i_k + 1}$, while $H_\textrm{int}^{(n_S)} = H_\textrm{int}^{(2)} \quad \forall \, n_S > 1$.\\

A first na\"ive approach to investigate whether the coupling is weak is to compare the relevant energies (e.g., as in Ref.~\cite{yunger_halpern_noncommuting_2020}), which we here quantify using the operator norm. Via analytical or numerical calculations, we find that the operator norms of the Hamiltonians above are $\| H_S^{(1)} \| = 0$, $\| H_\textrm{int}^{(1)} \| = 6J$, $\| H_S^{(2)} \| = 3J$ and $\| H_\textrm{int}^{(2)} \| \approx 10 \,J$. For a general subsystem of size $n_S$ we have the upper bound $\| H_S^{(n_S)} \| \leq 3 (n_S - 1) J$. Noting that the fully polarized state along $z$ on the $n_S$ sites $|00\ldots0\rangle$ is always an eigenstate of $H_S^{(n_S)}$ with eigenvalue $+1$, we also find a lower bound: $\|H_S^{(n_S)}\| \geq (n_S - 1)J$, so this operator norm scales as $\|H_S^{(n_S)}\| \sim n_S J$. Comparing the operator norms of the interaction and subsystem Hamiltonians in both the one-site and two-site cases clearly indicates the coupling is not weak according to this definition: $\frac{\| H_\textrm{int}^{(1)} \|}{\| H_S^{(1)} \|} = \infty$ and $\frac{\| H_\textrm{int}^{(2)} \|}{\| H_S^{(2)} \|} \approx \frac{10}{3}$. In Ref.~\cite{yunger_halpern_noncommuting_2020}, it is remarked that the coupling becomes weak in the thermodynamic limit, meant as sending both $n_S, N \to \infty$, since the interaction Hamiltonian does not vary for $n_S > 2$ so $\frac{\| H_\textrm{int}^{(n_S)} \|}{\|H_S^{(n_S)}\|} \sim \frac{1}{n_S} \xrightarrow{n_S \to \infty} 0$. However, one often considers a different thermodynamic limit, in which the subsystem size is kept fixed while the total system size $N$ is increased, which is indeed what we do here and what Ref.~\cite{yunger_halpern_noncommuting_2020} itself does in its own numerics, to our best understanding. In this case, the ratio between the two operator norms remains constant. We also note that if $n_S \to \infty$, say as a fraction of the total system size $N$, then the complexity of constructing the local NATS also increases exponentially, even though it anyway remains relatively easier to compute than its global counterpart. \\

Alternatively, a common approach, rooted in perturbation theory, is to compare the operator norm of the interaction Hamiltonian with the minimum energy difference $\Delta E_S$ between non-degenerate energy levels of the subsystem's Hamiltonian \cite{tasaki_quantum_1998, reimann_canonical_2010, riera_thermalization_2012, trushechkin_open_2022}. In this case, $\Delta E_S^{(1)} = 0$ and $\Delta E_S^{(2)} = 4$, so we again see that the coupling cannot be considered weak also according to this definition: $\frac{\| H_\textrm{int}^{(1)} \|}{\Delta E_S^{(1)}} = \infty$ and $\frac{\| H_\textrm{int}^{(2)} \|}{\Delta E_S^{(2)}} \approx 2.5$. \\

Finally, one might consider the previous definitions too strong and not experimentally realistic \cite{reimann_canonical_2010, dong_quantum_2007, riera_thermalization_2012}. In this case, Ref.~\cite{riera_thermalization_2012} proposes the following more physical perturbation-theoretic conditions (see also Ref.~\cite[pg.63]{gogolin_equilibration_2016}): 
\begin{equation} \label{eqn:def_weak_coupling_riera_gogolin_eisert}
    \| H_\textrm{int} \| \ll \frac{1}{\beta} \ll \Delta E, 
\end{equation}
where $\Delta E$ is the width of the microcanonical window (the standard deviation of the global Hamiltonian) and, if the system thermalizes, $\beta$ is the inverse temperature. Using the $\beta$ obtained via fitting as explained in Sec.~\ref{subsubsec:determining_beta_mu}, in Fig.~\ref{fig:Check_weak_coupling_beta_norm_Hint_params_from_fit_two} we plot $\beta \| H_\textrm{int}^{(2)} \|$ against the initial state $\theta$, for a two-site subsystem and for $N = 12, 14, 16$. One-site subsystems are instead taken to have an effective infinite temperature (i.e., $\beta = 0$) due to $H_S^{(1)} = 0$. We believe this explains why the predictions based on the NATS work well for one-body observables but are less accurate than ours for two-body ones.

Moreover, we also compare $\| H_\textrm{int} \|$ with $\Delta E$, which we can compute exactly from Eq.~\eqref{eqn:energy_std_app}, namely, 
\begin{equation} \label{eqn:weak_coupling_ratio_Hint_dE}
    \frac{\| H_\textrm{int}^{(2)} \|}{\Delta E} \approx \frac{10}{4 \cos^2\theta [ (1 + 3\sin^2\theta)N
      - 5\sin^2\theta - 1 + 2\epsilon (1 + 3\sin^2\theta) + \epsilon^2 (1 + \sin^2\theta)]},
\end{equation}
where we have set $J=1$. 
In Fig.~\ref{fig:ratio_Hint_dE_vs_theta_two-site} we plot Eq.~\eqref{eqn:weak_coupling_ratio_Hint_dE} against the initial state parametrized by the angle $\theta$ for the system sizes considered in this study, i.e., $N=12, 14, 16$, and for $\epsilon=0.3$.
While these ratios are certainly lower than those found with the other methods, they still do not appear to be small enough to be in the weak coupling regime needed for the derivation of the local NATS. 
We conclude that indeed for the model and parameters considered the interaction is not sufficiently weak to justify the use of the local NATS to predict the stationary behavior of two-body observables.

\begin{figure}[ht]
    \centering
    \subfloat[Plot of $\beta \| H_\textrm{int}^{(2)} \|$ against the initial state parametrized by the angle $\theta$ for two-site subsystems. The inverse temperature $\beta$ is obtained by fitting to the data the analytical form of the probabilities based on the NATS, as explained in Sec.~\ref{subsubsec:determining_beta_mu}. \label{fig:Check_weak_coupling_beta_norm_Hint_params_from_fit_two}]{\includegraphics[width=0.45\textwidth]{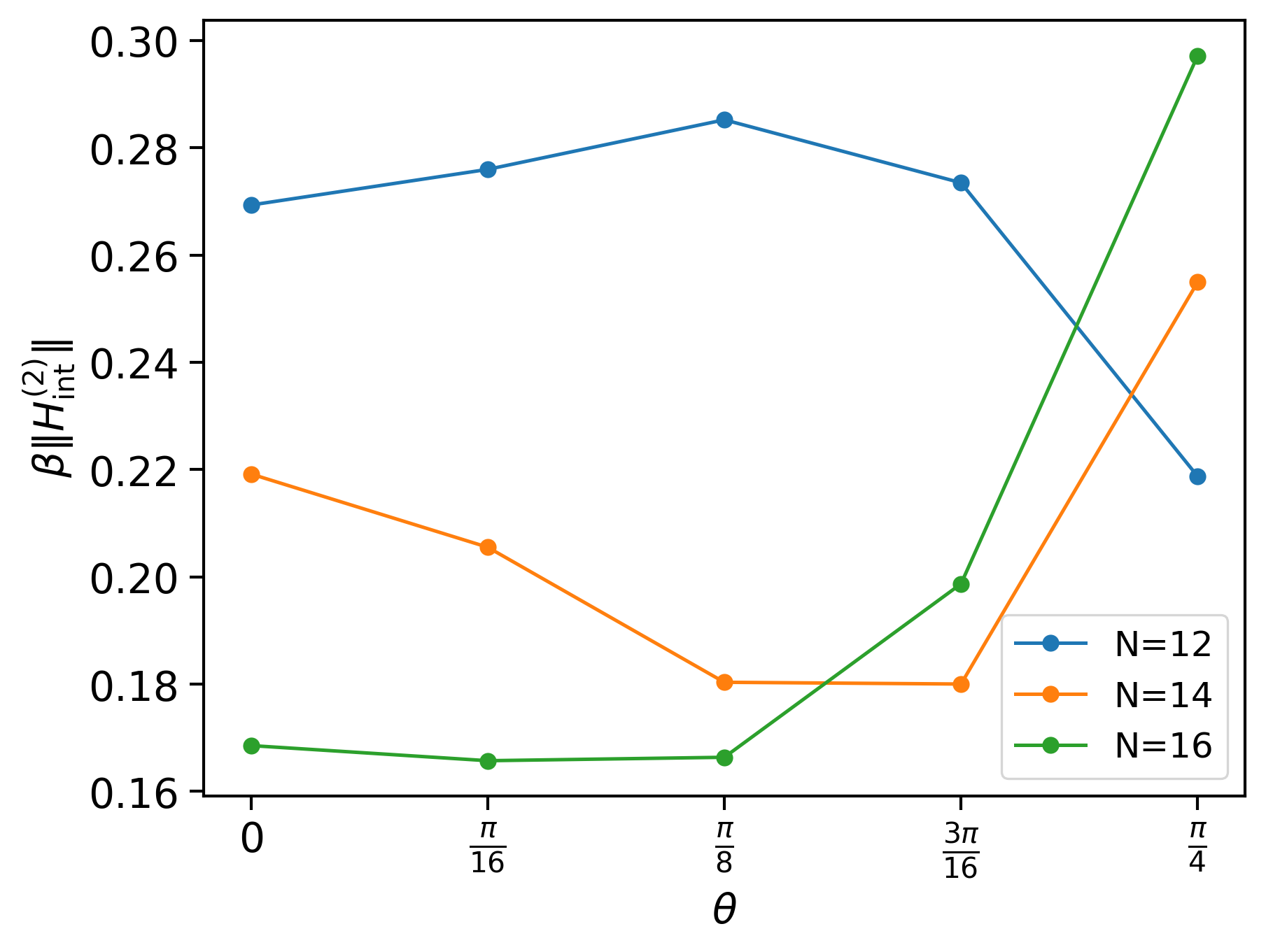}} \hspace{1cm}
    \subfloat[Plot of the ratio $\frac{\| H_\textrm{int}^{(2)} \|}{\Delta E}$ against the initial state parametrized by $\theta$, for two-site subsystems. $\Delta E$ is the standard deviation of the energy. \label{fig:ratio_Hint_dE_vs_theta_two-site}] {\includegraphics[width=0.45\textwidth]{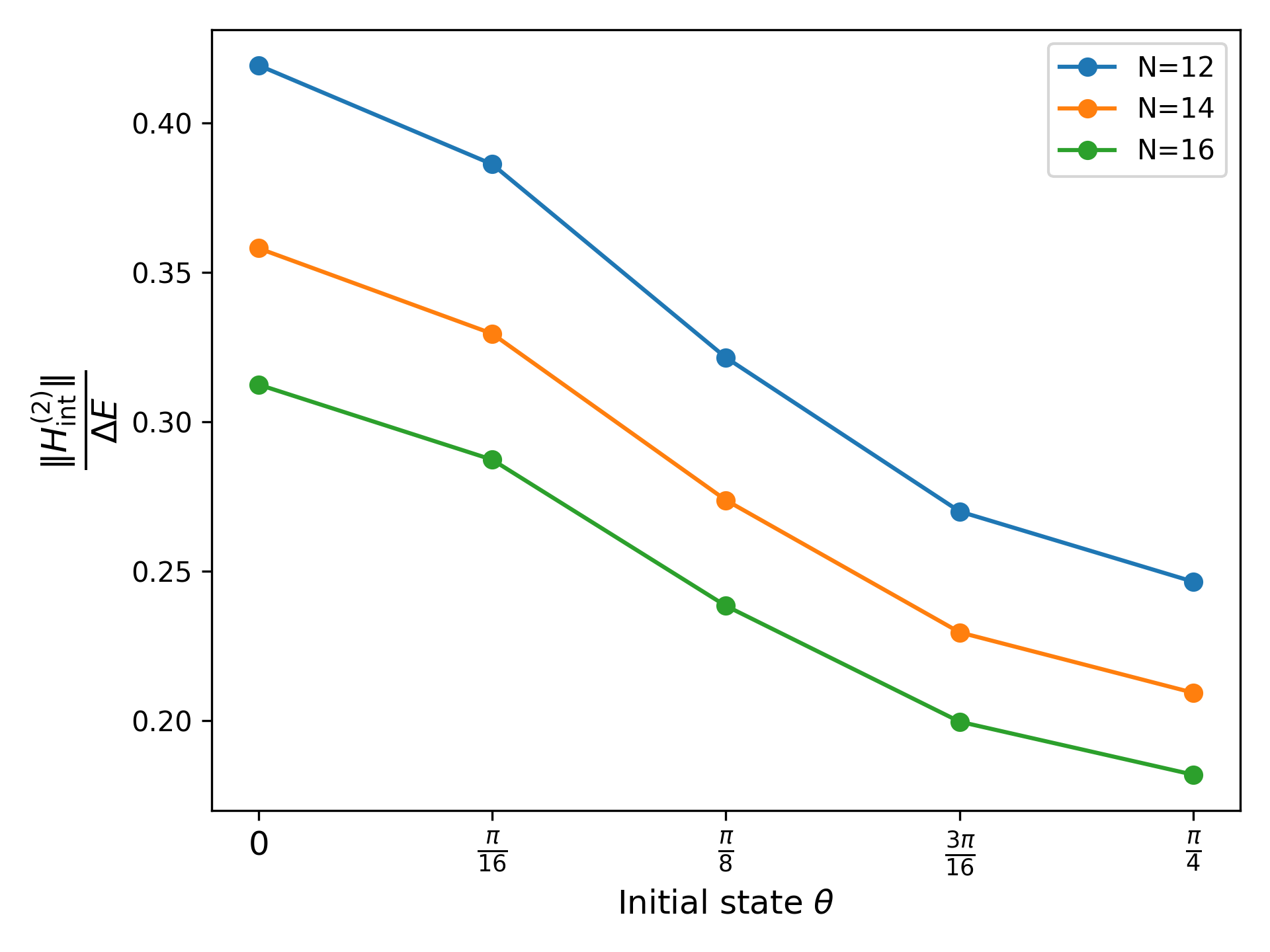}}
    \caption{We show that even according to the third definition of weak coupling [Eq.~\eqref{eqn:def_weak_coupling_riera_gogolin_eisert}] the interaction is not sufficiently weak to justify the applicability of the NATS. The operator norm of the two-site interaction Hamiltonian is $\| H_\textrm{int}^{(2)} \| \approx 9.8051$ (with $J=1$). The three curves in each subplot correspond to the three total system sizes considered, i.e., $N=12, 14, 16$. } \label{fig:coupling_non_weak_def_riera_gogolin_eisert}
\end{figure}

\end{document}